\newcommand{\pushright}[1]{\ifmeasuring@#1\else\omit\hfill$\displaystyle#1$\fi\ignorespaces}
\newcommand{\pushleft}[1]{\ifmeasuring@#1\else\omit$\displaystyle#1$\hfill\fi\ignorespaces}
\definecolor{myurlcolor}{rgb}{0,0,1}
\definecolor{myrefcolor}{rgb}{0,0,1}
\renewcommand{\v}[1]{\ensuremath{\mathbf{#1}}} % for vectors
\newcommand{\gv}[1]{\ensuremath{\text{\boldmath$ #1 $}}}% for vectors of Greek letters
\newcommand{\abs}[1]{\left| #1 \right|} % for absolute value
\newcommand{\norm}[1]{\left\| #1 \right\|} % for norm
\newcommand{\ptheta}{{\partial_\theta}}
\newcommand{\trace}{\mathrm{Tr}}
\newcommand{\Choi}{\mathrm{Choi}}
\newcommand{\avg}{\mathrm{avg}}
\newcommand{\txs}{{\mathscr{S}}}
\newcommand{\txr}{{\mathscr{R}}}
\newcommand{\reg}{{{\rm reg}}}
\newcommand{\teps}{{\tilde\epsilon}}
\newcommand{\tPsi}{{\tilde\Psi}}
\newcommand{\mI}{{\mathcal{I}}}
\newcommand{\mA}{{\mathcal{A}}}
\newcommand{\mC}{{\mathcal{C}}}
\newcommand{\mU}{{\mathcal{U}}}
\newcommand{\mN}{{\mathcal{N}}}
\newcommand{\mM}{{\mathcal{M}}}
\newcommand{\mR}{{\mathcal{R}}}
\newcommand{\mD}{{\mathcal{D}}}
\newcommand{\mE}{{\mathcal{E}}}
\newcommand{\id}{{\mathbbm{1}}}
\newcommand{\vac}{{\mathrm{vac}}}
\newcommand{\rep}{{\mathrm{rep}}}
\newcommand{\scrS}{{\mathscr{S}}}
\newcommand{\scrX}{{\mathscr{X}}}
\newcommand{\vj}{{\gv{j}}}
\newcommand{\vK}{{\v{K}}}
\newcommand{\bR}{{\mathbb{R}}}
\newcommand{\bC}{{\mathbb{C}}}
\renewcommand{\Re}{{\mathrm{Re}}}
\newcommand{\opt}{{\mathrm{opt}}}
\newcommand{\cov}{{\mathrm{cov}}}
\newcommand{\supp}{{\mathrm{supp}}}
\renewcommand{\epsilon}{\varepsilon}
\newcommand{\appropto}{\mathrel{\vcenter{
  \offinterlineskip\halign{\hfil$##$\cr
    \propto\cr\noalign{\kern2pt}\sim\cr\noalign{\kern-2pt}}}}}
\newcommand{\LtoS}{{S\leftarrow L}}
\newcommand{\StoL}{{L\leftarrow S}}
\newcommand{\CtoSA}{{SA\leftarrow C}}
\newcommand{\SAtoC}{{C \leftarrow SA}}
\newcommand{\CtoLA}{{LA \leftarrow C}}
\newcommand{\LAtoC}{{C \leftarrow LA}}
\let\baraccent=\= % rename builtin command \= to \baraccent
\renewcommand{\=}[1]{\stackrel{#1}{=}} % for putting numbers above =
\newcommand{\thmref}[1]{\hyperref[#1]{Theorem~\ref{#1}}}
\newcommand{\lemmaref}[1]{\hyperref[#1]{Lemma~\ref{#1}}}
\newcommand{\figref}[1]{\hyperref[#1]{Fig.~\ref{#1}}}
\newcommand{\figaref}[1]{\hyperref[#1]{Fig.~\ref{#1}a}}
\newcommand{\figbref}[1]{\hyperref[#1]{Fig.~\ref{#1}b}}
\newcommand{\figcref}[1]{\hyperref[#1]{Fig.~\ref{#1}c}}
\newcommand{\figdref}[1]{\hyperref[#1]{Fig.~\ref{#1}d}}
\newcommand{\figeref}[1]{\hyperref[#1]{Fig.~\ref{#1}e}}
\renewcommand{\eqref}[1]{\hyperref[#1]{Eq.~(\ref{#1})}}
\newcommand{\secref}[1]{\hyperref[#1]{Sec.~\ref{#1}}}
\newcommand{\eqsref}[2]{\hyperref[#1]{Eqs.~(\ref{#1})-(\ref{#2})}}
\newcommand{\appref}[1]{\hyperref[#1]{Appx.~\ref{#1}}}
\newtheorem{theorem}{Theorem}
\newtheorem{lemma}{Lemma}
\title{New perspectives on covariant quantum error correction}
\author{Sisi Zhou}%\email{sisi.zhou@yale.edu}
\email{sisi.zhou26@gmail.com}
\affiliation{Department of Physics, Yale University, New Haven, Connecticut 06511, USA}
\affiliation{Pritzker School of Molecular Engineering, The University of Chicago, Illinois 60637, USA}
\author{Zi-Wen Liu}\email{zliu1@perimeterinstitute.ca}
\affiliation{Perimeter Institute for Theoretical Physics, Waterloo, Ontario N2L 2Y5, Canada}
\author{Liang Jiang}\email{liang.jiang@uchicago.edu}
\affiliation{Pritzker School of Molecular Engineering, The University of Chicago, Illinois 60637, USA}
\date{}
\begin{document}

\maketitle

\begin{abstract}
Covariant codes are quantum codes such that a symmetry transformation on the logical system could be realized by a symmetry transformation on the physical system, usually with limited capability of performing quantum error correction (an important case being the Eastin--Knill theorem). 
The need for understanding the limits of covariant quantum error correction arises in various realms of physics including fault-tolerant quantum computation, condensed matter physics and quantum gravity. Here, we explore covariant quantum error correction with respect to continuous symmetries from the perspectives of quantum metrology and quantum resource theory, establishing solid connections between these formerly disparate fields. We prove new and powerful lower bounds on the infidelity of covariant quantum error correction, which not only extend the scope of previous no-go results but also provide a substantial improvement over existing bounds. Explicit lower bounds are derived for both erasure and depolarizing noises. We also present a type of covariant codes which nearly saturates these lower bounds. 
\end{abstract}

\section{Introduction}
\label{sec:intro}

Quantum error correction (QEC) is a standard approach to protecting quantum systems against noises, 
which for example allows the possibility of practical quantum computing and has been a central research topic in quantum information~\cite{nielsen2002quantum,gottesman2010introduction,lidar2013quantum}. 
The key idea of QEC is to encode the logical state into a small code subspace in a large physical system and correct noises using the redundancy in the entire Hilbert space. 
As a result, the structure of noise must also place restrictions on  QEC codes.
This feature was beautifully captured by the Eastin--Knill theorem~\cite{eastin2009restrictions} (see also~\cite{bravyi2013classification,pastawski2015fault,jochym2018disjointness,wang2019quasi}), which states that any non-trivial local-error-correcting quantum code does not admit transversal implementations of a universal set of logical gates, ruling out the possibility of realizing fault-tolerant quantum computation using only transversal gates.

In particular, any finite-dimensional local-error-correcting quantum code only admits a finite number of transversal logical operations, which forbids the existence of codes covariant with continuous symmetries (discrete symmetries are allowed though~\cite{hayden2017error,faist2019continuous}). 
More generally, quantum codes under symmetry constraints, namely covariant codes, are of great practical and theoretical interest.
In general, a quantum code is covariant with respect to a logical Hamiltonian $H_L$ and a physical Hamiltonian $H_S$ if any symmetry transformation $e^{-iH_L\theta}$ is encoded into a symmetry transformation $e^{-iH_S\theta}$ in the physical system. 
Besides important implications to fault-tolerant quantum computation, covariant QEC is also closely connected to many other topics in quantum information and physics, such as quantum reference frames and quantum clocks~\cite{preskill2000quantum,hayden2017error,woods2019continuous}, symmetries in the AdS/CFT correspondence~\cite{almheiri2015bulk,pastawski2015holographic,harlow2018constraints,harlow2018symmetries,kohler2019toy,gschwendtner2019quantum,faist2019continuous,woods2019continuous} and approximate QEC in condensed matter physics~\cite{brandao2019quantum}. Although covariant codes cannot be perfectly local-error-correcting, they can still approximately correct errors with the infidelity depending on the number of subsystems, the dimension of each subsystem, etc. %The quantifications of the infidelity in covariant QEC were explored recently: an approximate, or robust, version of the Eastin--Knill theorem under erasure errors was derived~\cite{faist2019continuous} using complementary channel techniques~\cite{beny2010general,hayden2008decoupling,beny2018approximate}; and a covariant encoding scheme with a small error correction infidelity based on reference frame alignment was also studied~\cite{woods2019continuous}, where both erasure errors at known locations and random phase errors at unknown locations were investigated. 
The quantifications of such infidelity in covariant QEC were explored recently, leading to an approximate, or robust, version of the Eastin--Knill theorem \cite{faist2019continuous,woods2019continuous}, using complementary channel techniques~\cite{beny2010general,hayden2008decoupling,beny2018approximate}. Note that these existing results only apply to erasure errors and random phase errors at unknown locations.

%It should be stated in this part of the manuscript that in 12, error models beyond erasure are considered, but that these are distinct to the ones considered here.

In this paper, we investigate covariant QEC from the perspectives of quantum metrology and quantum resource theory, which not only establishes conceptual and technical links between these seemingly separate fields, but also leads to a series of improved understandings and bounds on the performance of covariant QEC. 
Quantum metrology studies the ultimate limit on parameter estimation in quantum systems~\cite{giovannetti2011advances,degen2017quantum,braun2018quantum,pezze2018quantum,pirandola2018advances}. 
Covariant QEC is naturally a quantum metrological protocol---estimating the angle of any rotation of the physical system is equivalent to estimating that of the logical system with protection against noise. 
There is a no-go theorem in quantum metrology stating that perfectly error-correcting codes admitting a non-trivial logical Hamiltonian do not exist if the physical Hamiltonian falls into the Kraus span of the noise channel, which is known as the HKS condition~\cite{escher2011general,demkowicz2012elusive,demkowicz2014using,yuan2017quantum,demkowicz2017adaptive,zhou2018achieving,zhou2020theory}. 
It is also a sufficient condition of the non-existence of perfectly covariant QEC codes. 
When the HKS condition is satisfied, we establish a connection between the quantum Fisher information (QFI) of quantum channels~\cite{fujiwara2008fibre,demkowicz2014using,hayashi2011comparison,yuan2017fidelity,zhou2020theory,katariya2020geometric} and the performance (or infidelity) of covariant QEC, which gives rise to the desired lower bound.
We could also understand covariant QEC in terms of the resource theory of asymmetry~\cite{Gour_2008,MarvianSpekkens14,marvian2014extending} with respect to translations generated by Hamiltonians, where the covariant QEC procedures may naturally be represented by free operations.  
In quantum resource theory, we also have no-go theorems which dictate that pure resource states cannot be perfectly distilled from generic mixed states  \cite{FangLiu19:nogo,RBLT19,marvian2020coherence}, thereby ruling out the possibility of perfect covariant QEC.  By further analyzing suitable resource monotones, in particular a type of QFI \cite{marvian2020coherence}, we derive a lower bound on the infidelity of covariant QEC, which behaves similarly to the metrological bounds.

Our approaches and results on covariant QEC are innovative and also advantageous compared to previous ones in many ways. The bounds generalize the no-go theorems for covariant QEC from local Hamiltonians with erasure errors to generic Hamiltonian and noise structures. In the special case of erasure noise, our lower bounds improve the previous results in the small infidelity limit~\cite{faist2019continuous}. Furthermore, we    demonstrate that there is a type of covariant codes called thermodynamic codes~\cite{faist2019continuous,brandao2019quantum} that 
saturates the lower bound for erasure noise and matches the scaling of the lower bound for depolarizing noise, while previous bounds only apply to the erasure noise setting and were not known to be saturable~\cite{faist2019continuous}.

\section{Preliminaries: Covariant codes}
\label{sec:cov}

A quantum code is a subspace of a physical system $S$, usually defined by the image of an (usually isometric) encoding channel $\mE_{\LtoS}$ from a logical system $L$. We call a code $\mE_{\LtoS}$ \emph{covariant} if there exists a logical Hamiltonian $H_L$ and a physical Hamiltonian $H_S$ such that 
\begin{equation}
\mE_{\LtoS} \circ \mU_{L,\theta} = \mU_{S,\theta} \circ \mE_{\LtoS}, \;\forall \theta \in \bR, 
\end{equation}
where 
$\mU_{L,\theta}(\rho_{L}) = e^{-iH_{L}\theta} \rho_L e^{iH_{L}\theta}$ and $\mU_{S,\theta}(\rho_{S}) = e^{-iH_{S}\theta} \rho_S e^{iH_{S}\theta}$ are the symmetry transformations on the logical and physical systems, respectively.
We assume that the dimensions of the physical and logical systems $d_S$ and $d_L$ are both finite and $H_L$ is non-trivial ($H_L \not\propto \id$). For simplicity, we also assume all Hamiltonians in this paper are traceless unless stated otherwise, and we use $\Delta H_L$ and $\Delta H_{S}$ to denote the difference between the maximum and minimum eigenvalues of the operators.

We say a quantum code is \emph{error-correcting} under a noise channel $\mN_{S}$, if $\mN_{S}$ is invertible inside the code subspace, i.e., if there exists a CPTP map $\mR_{\StoL}$ such that
\begin{equation}
\mR_{\StoL} \circ \mN_{S} \circ  \mE_{\LtoS} = \id_L.
\end{equation} 
We assume that the output space of the noise channel $\mN_S$ is still $S$ for simplicity, although our results also apply to situations where the output system is different. The error-correcting property of a quantum code is often incompatible with its covariance with respect to continuous symmetries.  One representative example is the non-existence of error-correcting codes which can simultaneously correct non-trivial local errors and be covariant with respect to a local $H_S$~\cite{eastin2009restrictions,hayden2017error}. 
However, one may still consider approximate QEC with covariant codes~\cite{brandao2019quantum,faist2019continuous,woods2019continuous}. 
%One natural question to ask is: for a covariant code under a fixed pair of Hamiltonians, how large the error-correcting capability can be against certain noise. 
Then a question that naturally arises is how accurate covariant codes can be against certain noises.
To characterize the infidelity of an approximate QEC code, we use the worst-case entanglement fidelity $f(\Phi_1,\Phi_2)$ and the Choi entanglement fidelity~\cite{schumacher1996sending,gilchrist2005distance} defined by
\begin{equation}
f(\Phi_1,\Phi_2) = \min_{\rho}f((\Phi_1\otimes \id_R)(\rho),(\Phi_2\otimes \id_R)(\rho))
\end{equation} 
and 
\begin{equation}
f_\Choi(\Phi_1,\Phi_2) = f((\Phi_1\otimes \id_R)(\ket{\gamma}\bra{\gamma}),(\Phi_2\otimes \id_R)(\ket{\gamma}\bra{\gamma}))
\end{equation} 
for two quantum channels $\Phi_1$ and $\Phi_2$, where the fidelity between two states $\rho$ and $\sigma$ is given by \sloppy $f(\rho,\sigma) = \trace(\sqrt{\rho^{1/2}\sigma\rho^{1/2}})$~\cite{nielsen2002quantum}, and $R$ is a reference system identical to the system $\Phi_{1,2}$ acts on (which we assume to be $L$) and the maximally entangled state $\ket{\gamma} = \frac{1}{\sqrt{d_L}}\sum_i \ket{i}_L\ket{i}_R$.  After optimizing over recovery channels $\mR_{\StoL}$, we may define the \emph{infidelity} and the \emph{Choi infidelity}   of a code $\mE_{\LtoS}$  %\footnote{The Choi infidelity is directly related to the average infidelity by $\epsilon_\Choi = \frac{d_L+1}{d_L}\epsilon_\avg$, where $\epsilon_\avg = 1 - \max_{\mR_{\StoL}}\int d\psi \bra{\psi}\mR_{\StoL}\circ \mN_S \circ \mE_{\LtoS}(\ket{\psi}\bra{\psi})\ket{\psi}$ and the integral is over the Haar measure. }
respectively by\footnote{There are other equivalent definitions of the code infidelity in the literature that have a quadratic difference in terms of scaling with ours, e.g., $\sqrt{1-f^2}$ in \cite{faist2019continuous} or $\sqrt{1-f}$ in \cite{kubica2020using}.}
\begin{equation}
\epsilon(\mN_{S},\mE_{\LtoS}) = 1 - \max_{\mR_{\StoL}} f^2(\mR_{\StoL} \circ \mN_{S} \circ  \mE_{\LtoS},\id_L), 
\end{equation} 
and 
\begin{equation}
\epsilon_\Choi(\mN_{S},\mE_{\LtoS}) = 1 - \max_{\mR_{\StoL}} f_\Choi^2(\mR_{\StoL} \circ \mN_{S} \circ  \mE_{\LtoS},\id_L). 
\end{equation} 
%We call a code $\mE_{\LtoS}$ \emph{$\epsilon$-correctable} under $\mN_S$, if $ \epsilon \geq \epsilon(\mN_{S},\mE_{\LtoS})$. 
Note that the Choi infidelity reflects the ``average-case'' behavior in the sense that it is directly related to $\epsilon_\avg = 1 - \max_{\mR_{\StoL}}\int d\psi \bra{\psi}\mR_{\StoL}\circ \mN_S \circ \mE_{\LtoS}(\ket{\psi}\bra{\psi})\ket{\psi}$, where the integral is over the Haar measure,  by $\epsilon_\Choi = \frac{d_L+1}{d_L}\epsilon_\avg$.
We will sometimes simply use $\epsilon$ and $\epsilon_\Choi$ to denote $\epsilon(\mN_{S},\mE_{\LtoS})$ and $\epsilon_\Choi(\mN_{S},\mE_{\LtoS})$ when the system under consideration is unambiguous. Clearly, $\epsilon \geq \epsilon_\Choi$.  We will use $\mR^{\opt}_{\StoL}$ to represent the optimal recovery channel achieving $\epsilon$ and $\mI_{L}$ to denote the effective noise channel $\mR_{\StoL} \circ \mN_{S} \circ \mE_{\LtoS}$ in the logical system. %We ignore highly inaccurate codes and will always assume $\epsilon < 1/2$ in this paper. 

\section{Metrological bound}
\label{sec:metro}

Recently, QEC emerges as a useful tool to enhance the sensitivity of an unknown parameter in quantum metrology~\cite{kessler2014quantum,arrad2014increasing,dur2014improved,lu2015robust,reiter2017dissipative,sekatski2017quantum,demkowicz2017adaptive,zhou2018achieving,kapourniotis2019fault,layden2018spatial,layden2019ancilla,zhou2019optimal,zhou2020theory}. 
A good approximately error-correcting covariant code naturally provides a good quantum sensor to estimate an unknown parameter $\theta$ in the symmetry transformation $e^{-i H_S \theta}$. 
Consider a quantum signal $e^{-i H_S \theta}$ in the physical system, for example, the magnetic field in a spin system with $H_S$ being the angular momentum operator. The optimal sensitivity is usually limited by the strength of noise in the system. Instead of using the entire system to probe the signal, one could prepare an encoded probe state using covariant codes where $H_S$ is mapped to $H_L$ associated with the logical system. Covariant codes with low infidelity significantly reduce the noise in the logical system and therefore provide a good sensitivity of the signal. 

No-go theorems in quantum metrology~\cite{escher2011general,demkowicz2012elusive,demkowicz2014using,yuan2017quantum,demkowicz2017adaptive,zhou2018achieving,zhou2020theory} prevent the existence of perfectly error-correcting covariant codes in the above scenario. In particular, it was known that given a noise channel $\mN_S(\cdot) = \sum_{i=1}^r K_{S,i} (\cdot) K_{S,i}^\dagger$ and a physical Hamiltonian $e^{-iH_S \theta}$, there exists an encoding channel $\mE_{\LtoS}$ and a recovery channel $\mR_{\StoL}$ such that 
\begin{equation}
\label{eq:RNUE}
\mR_{\StoL} \circ \mN_S \circ \mU_{S,\theta} \circ \mE_{\LtoS}
\end{equation}
is a non-trivial unitary channel 
only if 
$H_S \not\in {\rm span}\{K_{S,i}^\dagger K_{S,j},\forall i,j\}$~\cite{zhou2020theory}. However, the above channel (\eqref{eq:RNUE}) with respect to any perfectly error-correcting covariant code is simply $\mU_{L,\theta}$. Therefore, we conclude that perfectly error-correcting covariant codes do not exist when
\begin{equation}
\label{eq:HKS}
H_S \in {\rm span}\{K_{S,i}^\dagger K_{S,j},\forall i,j\},
\end{equation}
which we call the ``Hamiltonian-in-Kraus-span'' (HKS) condition. One could check that local Hamiltonians with non-trivial local errors is a special case of the HKS condition. 

Note that the no-go result might be circumvented when the system dimension is infinite. To be more specific, perfect error-correcting codes that are covariant under $U(1)$ or even an arbitrary group $G$ can be constructed using quantum systems that transform as the regular representation of $G$ which are infinite-dimensional when $G$ is infinite and are equivalent to the notions of idealized clocks or perfect reference frames~\cite{hayden2017error,faist2019continuous,woods2019continuous}. 

\subsection{Quantum channel estimation}

From the discussion above, we saw that no-go theorems in quantum metrology help us extend the scope of the Eastin--Knill theorem for covariant codes. As we will see below, a powerful lower bound for the infidelity of covariant codes could also be derived thanks to recent developments in quantum channel estimation~\cite{demkowicz2012elusive,demkowicz2014using,zhou2020theory,fujiwara2008fibre,hayashi2011comparison,yuan2017fidelity,katariya2020geometric}. 

Here we first review the definitions of QFIs for quantum states and then introduce the extensions to quantum channels. The QFI is a good measure of the amount of information a quantum state $\rho_\theta$ carries about an unknown parameter $\theta$, characterized by the the quantum Cram\'{e}r-Rao bound~\cite{helstrom1976quantum,holevo2011probabilistic,paris2009quantum,braunstein1994statistical}, $
\delta \theta \geq 1/{\sqrt{N_{\rm expr} F(\rho_\theta)}}$,
where $\delta \theta$ is the standard deviation of any unbiased estimator of $\theta$, $N_{\rm expr}$ is the number of repeated experiments and $F(\rho_\theta)$ is the QFI of $\rho_\theta$. The QFI as the quantum generalization of the classical Fisher information is not unique, due to the noncommutativity of quantum operators. Two most commonly used QFIs are the symmetric logarithmic derivative (SLD) QFI and the right logarithmic derivative (RLD) QFI, respectively defined by~\cite{helstrom1976quantum,holevo2011probabilistic,Yuen1973multiple}, 
\begin{gather}
F_\txs(\rho_\theta) = \trace(\rho_\theta (L^\txs_\theta)^2 ),\quad \partial_\theta \rho_\theta = \frac{1}{2}(L^\txs_\theta \rho_\theta + \rho_\theta L^\txs_\theta),\\
F_\txr(\rho_\theta) = \trace( \rho_\theta L_\theta^\txr L_\theta^{\txr\dagger} ),\quad \partial_\theta \rho_\theta = \rho_\theta L_\theta^\txr,
\end{gather}
 where the SLD $L^\txs_\theta$ is Hermitian and the RLD $L^\txr_\theta$ is linear. Note that $F_\txr(\rho_\theta) = +\infty$ if ${\supp}(\partial_\theta \rho_\theta) \not\subseteq {\supp}(\rho_\theta)$. 
The QFIs satisfy many nice information-theoretic properties~\cite{katariya2020geometric}, such as additivity $F(\rho_\theta\otimes\sigma_\theta) = F(\rho_\theta) + F(\sigma_\theta)$ and monotonicity $F(\mN(\rho_\theta)) \leq F(\rho_\theta)$ for $\theta$-independent  channel $\mN$. 
Note that the SLD QFI is the smallest monotonic quantum extension from the classical Fisher information and the quantum Cram\'{e}r-Rao bound with respect to the SLD QFI is saturable asymptotically ($N_{\rm expr} \gg 1$). 

In this section, we will focus on the SLD QFI for quantum channels. Discussions on the RLD QFI for quantum channels will be delayed to \secref{sec:res} where it is used. Given a quantum channel $\mN_\theta$, the (entanglement-assisted) SLD QFI of $\mN_\theta$~\cite{fujiwara2008fibre} is defined by 
\begin{equation}
F_\txs(\mN_\theta) = \max_{\rho} F_\txs((\mN_\theta\otimes \id_R)(\rho)),
\end{equation}
where $R$ is an unbounded reference system. 
The regularized SLD QFI for quantum channels also has a single-letter expression~\cite{zhou2020theory}:
\begin{equation}
\label{eq:SLD-def}
\begin{split}
&F_\txs^\reg(\mN_\theta)=\lim_{N\rightarrow \infty}\frac{F_\txs(\mN_\theta^{\otimes N})}{N} =
\begin{cases}
4\min_{h:\beta_\theta=0}\norm{\alpha_\theta} & \text{(S)},\\
+\infty & \text{otherwise},\\
\end{cases}\\
&\;\;\;\text{(S):~~} i  \sum_{i=1}^r K_{i,\theta}^\dagger\ptheta K_{i,\theta} \in {\rm span}\{K_{i,\theta}^\dagger K_{j,\theta},\forall i,j\}, 
\end{split}
\end{equation}
where $\mN_{\theta}(\cdot) = \sum_{i=1}^r K_{i,\theta}(\cdot) K_{i,\theta}^\dagger$, $h$ is a Hermitian operator in $\bC^{r \times r}$, $\norm{\cdot}$ is the operator norm and 
\begin{align}
\alpha_\theta &= (\ptheta \vK_\theta + i h \vK_\theta)^\dagger(\ptheta \vK_\theta + i h \vK_\theta),\\
\beta_\theta &= \vK_\theta^\dagger h\vK_\theta - i  \vK_\theta^\dagger\ptheta\vK_\theta. 
\end{align}
\sloppy Here $\vK_\theta^T = (K_{1,\theta}^T~K_{2,\theta}^T~\cdots~K_{r,\theta}^T) \in \bC^{d \times r d}$ is a block matrix where $^T$ means the transpose of a matrix. 

Note that when (S) is violated, $F^{\reg}_\txs(\mN_\theta) = \infty$ because we will have $F_\txs(\mN^{\otimes N}_\theta) \propto N^2$~\cite{zhou2020theory}.
The regularized SLD QFI is additive (see \appref{app:additivity}) and could be calculated efficiently using semidefinite programs (SDP)~\cite{demkowicz2012elusive}. 
It is also monotonic, satisfying 
$
F_\txs^\reg(\Phi_1\circ (\mN_{\theta}\otimes \id) \circ \Phi_2) \leq F_\txs^\reg(\mN_{\theta})$ 
where $\Phi_{1,2}$ are any parameter-independent channels, due to the monotonicity of the state QFI.

\subsection{Metrological bound}

In order to derive a lower bound on the infidelity of covariant codes using the channel QFI, we note that the channel QFI provides an upper limit to the sensitivity of $\theta$ for $\mN_{S,\theta} = \mN_S \circ \mU_{S,\theta}$, which cannot be broken using covariant QEC. 

Concretely, we consider an encoding scheme based on covariant QEC, where the original system consists of the physical system $S$ and a noiseless ancillary qubit $A$ and the logical system is a two-dimensional system $C$. 
%Suppose the original system is subject to Hamiltonian evolution $e^{-iH_S\theta}$ and noise channel $\mN_S$, the logical system will be subject to Hamiltonian evolution $e^{-i(\Delta H_L)Z_C\theta/2}$ where $Z_C$ is the Pauli-Z operator and a rotated dephasing noise channel (defined later) with a noise rate smaller than the code infidelity.
Suppose the original system is subject to Hamiltonian evolution $e^{-iH_S\theta}$ and noise channel $\mN_S$, then the logical system will be subject to a $Z$-rotation signal and a rotated dephasing noise (defined later). Specifically, the Hamiltonian evolution in $C$ is $e^{-i(\Delta H_L)Z_C\theta/2}$ where $Z_C$ is the Pauli-Z operator and the rotated dephasing noise channel has a noise rate smaller than the code infidelity.
%\zw{Rewrite this sentence, maybe split...} 
The monotonicity of the channel QFI guarantees that the QFI of the logical system cannot surpass the QFI of the original system and thus leads to a lower bound on the code infidelity. The Hamiltonians and noises before and after the covariant encoding scheme are listed in \figcref{fig:dephasing}.

To roughly estimate the scaling of the lower bound in the small infidelity limit, consider $N$ logical qubits each under a unitary evolution $e^{-i(\Delta H_L)Z_C\theta/2}$ with a noise rate $\epsilon$. It is known that the SLD QFI of a noiseless $N$-qubit GHZ state is $(\Delta H_L)^2 N^2$~\cite{giovannetti2006quantum}. Taking $N = \Theta(1/\epsilon)$, the total noise can be bounded by a small constant, and the state SLD QFI per qubit is still roughly $\Theta((\Delta H_L)^2 N) = \Theta((\Delta H_L)^2/\epsilon)$, which is always no greater than the regularized channel SLD QFI $F^\reg_\txs(\mN_{S,\theta})$ before QEC. Thus, $\epsilon$ must be lower bounded by $\Theta((\Delta H_L)^2/F^\reg_\txs(\mN_{S,\theta}))$. In fact, using the regularized SLD QFI for quantum channels, we can prove the following theorem: 
\begin{theorem}
\label{thm:SLD}
Consider a covariant code $\mE_{\LtoS}$ under a noise channel $\mN_S(\cdot) = \sum_{i=1}^r K_{S,i} (\cdot) K_{S,i}^\dagger$. If the HKS condition is satisfied, i.e.,
\begin{equation}
\label{eq:cond-1}
H_S \in {\rm span}\{K_{S,i}^\dagger K_{S,j},\forall i,j\},
\end{equation}
then $\epsilon(\mN_S,\mE_{\LtoS})$ is lower bounded as follows, 
\begin{equation}
\label{eq:bound-S}
\epsilon \geq \ell_1\left(\frac{(\Delta H_L)^2}{4 F_\txs^\reg(\mN_S,H_S)}\right) = \frac{(\Delta H_L)^2}{4 F_\txs^\reg(\mN_S,H_S)} + O\left(\left(\frac{(\Delta H_L)^2}{4 F_\txs^\reg(\mN_S,H_S)}\right)^2\right),
\end{equation}
% \begin{equation}
% \label{eq:bound-S}
% \epsilon \cdot \frac{1 - \epsilon}{(1 - 2\epsilon)^2} \geq \frac{(\Delta H_L)^2}{4 F_\txs^\reg(\mN_S,H_S)},
% \end{equation}
where $\ell_1(x) = (1 + 4 x - \sqrt{1 + 4 x})/(2 (1 + 4 x)) = x + O(x^2)$ is a monotonically increasing function and $F_\txs^\reg(\mN_S,H_S)$ is the regularized SLD QFI of $\mN_{S,\theta}$. 
Specifically, $F_\txs^\reg(\mN_S,H_S) = 4 \min_{h:\beta_S=0}\norm{\alpha_S}$, $h$ is a Hermitian operator in $\bC^{r\times r}$. $\alpha_S$ and $\beta_S$ are Hermitian operators acting on $S$ defined by 
%\begin{equation}
%\label{eq:alphabeta}
$\alpha_S 
= \vK_S^\dagger h^2\vK_S - H_S^2$ and $\beta_S = \vK_S^\dagger h \vK_S - H_S$, 
%\end{equation}
where 
$\vK^T = (K_1^T ~ K_2^T ~ \cdots ~ K_r^T) \in \bC^{d_S \times r d_S}$ is a block matrix.
\end{theorem}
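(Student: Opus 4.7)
The plan is to recast the covariant encoding-plus-recovery scheme as a quantum sensor for $\theta$ in the signal $e^{-iH_S\theta}$, and then let the regularized channel SLD QFI $F^\reg_\txs(\mN_S,H_S)$ serve as an information-theoretic ceiling on how well any such sensor can perform. The HKS condition~\eqref{eq:cond-1} is precisely what keeps $F^\reg_\txs(\mN_S,H_S)$ finite (cf.~\eqref{eq:SLD-def}), so the resulting bound on $\epsilon$ is non-trivial. The reduction target is a single-logical-qubit sensor, on which the signal is a $Z$-rotation of angular speed $\Delta H_L$ and the residual noise is dephasing-like with rate controlled by $\epsilon$.

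Concretely, let $\mR^\opt_{\StoL}$ attain the optimum, so that $\mN^\eff_L := \mR^\opt_{\StoL} \circ \mN_S \circ \mE_{\LtoS}$ satisfies $1 - f^2(\mN^\eff_L,\id_L) = \epsilon$. Covariance of $\mE_{\LtoS}$ immediately rewrites the full sensor as
\begin{equation}
\mR^\opt_{\StoL} \circ \mN_S \circ \mU_{S,\theta} \circ \mE_{\LtoS} \;=\; \mN^\eff_L \circ \mU_{L,\theta}.
\end{equation}
I then pick eigenvectors $\ket{\psi_\pm}$ of $H_L$ at the extremal eigenvalues, set $C = {\rm span}\{\ket{\psi_+},\ket{\psi_-}\}$, and use a noiseless ancilla qubit $A$ as the entanglement reference. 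On $C$ the logical signal acts (up to a global phase) as $e^{-i(\Delta H_L)Z_C\theta/2}$, while the restriction of $\mN^\eff_L$ to $C$ retains worst-case infidelity at most $\epsilon$ to $\id_C$. The key step is that averaging the residual noise over the $Z_C$-rotation orbit commutes with the signal, hence leaves the parametric family unchanged, and forces the noise into a dephasing-covariant form whose effective dephasing rate $p$ is controlled by $\epsilon$ through a worst-case-fidelity-to-rate conversion. For the resulting qubit channel ``$Z_C$-rotation at speed $\Delta H_L$ composed with dephasing at rate $p$,'' the regularized SLD QFI can be evaluated in closed form using the single-letter formula $F^\reg_\txs = 4\min_{h:\beta=0}\|\alpha\|$ from~\eqref{eq:SLD-def}. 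Applying monotonicity of $F^\reg_\txs$ under parameter-independent processing (encoding, recovery, ancilla attachment, restriction to $C\otimes A$, twirling) then gives $F^\reg_\txs(\mN_S,H_S) \geq F^\reg_\txs(\text{dephased rotation})$, and inverting this relation in $\epsilon$ reproduces the function $\ell_1$ appearing in~\eqref{eq:bound-S}.

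The step I expect to be the main obstacle is the sharp passage from ``worst-case infidelity $\leq \epsilon$'' on the logical qubit to ``effective dephasing rate $\leq p(\epsilon)$'' after twirling, combined with the exact evaluation of $F^\reg_\txs$ for the dephased rotation. A generic residual channel is not literally a dephasing channel, but only its $Z_C$-commuting part can affect phase estimation, so the $Z_C$-twirl is precisely the right smoothing. Making the fidelity-to-dephasing conversion tight enough to reproduce the explicit formula $\ell_1(x) = (1+4x-\sqrt{1+4x})/(2(1+4x))$, rather than only its leading scaling $\epsilon \gtrsim (\Delta H_L)^2/(4F^\reg_\txs(\mN_S,H_S))$, will require a careful choice of $h$ in the single-letter formula together with a tight use of the definitions of $\alpha_S$ and $\beta_S$; this quantitative tightening is where I expect the proof to be most delicate.
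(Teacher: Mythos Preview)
Your overall strategy matches the paper's: reduce the covariant encode--noise--decode pipeline to a single-qubit $Z$-rotation sensor with dephasing-type noise, compute the regularized SLD QFI of that qubit channel in closed form, and invoke monotonicity of $F^\reg_\txs$ to obtain the bound that inverts to $\ell_1$. Where you diverge from the paper, however, is precisely at the step you yourself flag as delicate, and the mechanism you propose there does not quite work.

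You use the noiseless ancilla $A$ only as ``the entanglement reference'' and then rely on a $Z_C$-twirl to force the residual noise into dephasing form. But twirling a qubit channel over $Z$-rotations produces a $Z$-covariant channel, not a rotated dephasing channel: in Bloch-vector language the result can still have $M_{zz}<1$ and a nonzero shift $t_z$ (think amplitude damping, which is $U(1)$-covariant). For such a channel there is no single ``effective dephasing rate $p$,'' the closed-form expression $F^\reg_\txs(\mD_{p,\phi_\theta})=\frac{(1-2p)^2(\Delta H_L)^2}{4p(1-p)}$ does not apply, and the worst-case-fidelity-to-rate conversion you invoke is undefined. Relatedly, ``restriction of $\mN^\eff_L$ to $C$'' is not a channel once $d_L>2$: outputs can leak outside $C$, and you have not said how to map them back.

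The paper fixes both issues simultaneously by using the ancilla \emph{actively} rather than as a passive reference: it encodes $\ket{i_C}\mapsto\ket{i_L i_A}$ (a two-qubit repetition code between $L$ and the noiseless $A$) and decodes with an explicit $\mR^{\rep}_{LA\to C}$ that reads the undisturbed ancilla to restore the computational-basis label. Because $A$ is noiseless, the diagonal of the effective $C$-channel is exactly preserved, so the result is \emph{exactly} a rotated dephasing channel $\mD_{C,\epsilon',\phi'}$ with $\epsilon'\leq\epsilon$ (this is the content of Lemma~\ref{lemma:dephasing}). No twirl is needed. Once you have a genuine rotated dephasing channel, the closed-form $F^\reg_\txs$ and the inversion to $\ell_1$ are immediate. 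So the missing idea in your proposal is not the twirl but the repetition-code use of the ancilla to kill all non-dephasing noise components.
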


Note that $\ell_1(x) \leq 1/2$ for all $x$ and the lower bound does provide useful information when $\epsilon \geq 1/2$. 
We remark that, in principle, the regularized SLD QFI on the right-hand side of \eqref{eq:bound-S} could be replaced by any other type of QFIs because the SLD QFI is the smallest monotonic quantum extension from the classical Fisher information.  However, the regularized SLD QFI will always lead to the tightest bound using our metrological approach, as we will see later. 
%We also remark that \thmref{thm:SLD} holds for non-isometric encoding channels, widening the scope of Theorem 1 in~\cite{faist2019continuous}. 

\subsection{Proof of \texorpdfstring{Theorem~\ref{thm:SLD}}{Theorem 1}} 

The main obstacle to proving \thmref{thm:SLD} is to relate the infidelity of covariant codes to the QFI of the effective quantum channel in the logical system. Here we overcome this obstacle and provide a proof of \thmref{thm:SLD} by employing entanglement-assisted QEC to reduce $\mN_{S,\theta}$ to rotated dephasing channels (the composition of dephasing channels and Pauli-Z rotations) whose QFI has simple mathematical forms, and then connecting the noise rate of the rotated dephasing channels to the infidelity of the covariant codes (see \figref{fig:dephasing}).

\begin{figure}[tb]
\centering
\includegraphics[width=0.95\textwidth]{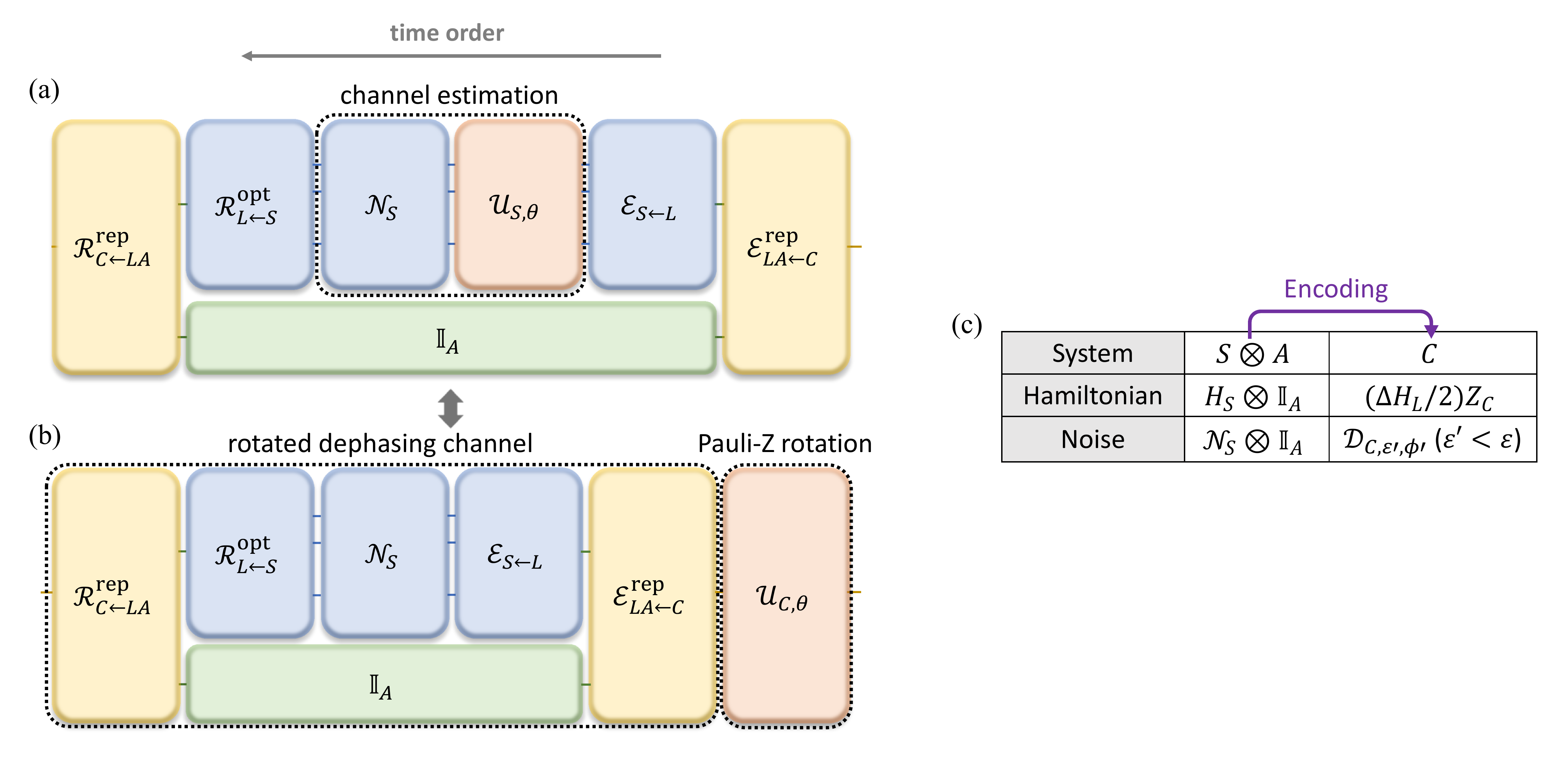}
\caption{\label{fig:dephasing} 
Reduction of $\mN_{S,\theta} = \mN_{S} \circ \mU_{S,\theta}$ to rotated dephasing channels using entanglement-assisted QEC. (a) represents the quantum channel $\mR_{\SAtoC}\circ (\mN_{S,\theta} \otimes \id_A) \circ \mE_{\CtoSA}$ with a channel QFI no larger than $F(\mN_{S,\theta})$. Because of  the covariance of the code, (a) is equivalent to (b) which consists of a Pauli-Z rotation $\mU_{C,\theta}$ and a $\theta$-independent rotated dephasing channel $\mI_C$ whose noise rate is smaller than $\epsilon(\mN_S,\mE_{\LtoS})$ (see \lemmaref{lemma:dephasing}). The Hamiltonians and noises before and after the encoding are listed in (c).   
} 
\end{figure}

Single-qubit rotated dephasing channels take the form
\begin{equation}
\label{eq:dephasing}
\mD_{p,\phi}(\rho) = (1-p) e^{-i\frac{\phi}{2}Z} \rho e^{i\frac{\phi}{2}Z} + p e^{-i\frac{\phi}{2}Z} Z \rho Z e^{i\frac{\phi}{2}Z}, 
\end{equation}
where $Z$ is the Pauli-Z operator, $0 < p < 1/2$ and $\phi$ is real. 
When $\phi$ is a function of $\theta$, we could calculate the regularized SLD QFI of $\mD_{p,\phi_\theta}$ (see Appx. B in \cite{zhou2020theory} or \cite{kolodynski2013efficient}):
\begin{align}
F^\reg_\txs(\mD_{p,\phi_\theta}) 
= \frac{(1-2p)^2 (\ptheta \phi_\theta)^2}{4p(1-p)},
\end{align} 
which are both inversely proportional to the noise rate $p$ when $p$ is small---a crucial feature in deriving the lower bounds. 

Next, we present an entanglement-assisted QEC protocol to reduce $\mN_S$ to rotated dephasing channels with a noise rate lower than $\epsilon(\mN_S,\mE_{\LtoS})$. Let $\ket{0_L}$ and $\ket{1_L}$ be eigenstates respectively corresponding to the largest and the smallest eigenvalues of $H_L$. Consider the following two-dimensional entanglement-assisted code 
\begin{equation}
\mE^{\rep}_{\CtoLA}(\ket{0_{C}}) = \ket{0_L0_A}, \quad \mE^{\rep}_{\CtoLA}(\ket{1_{C}}) = \ket{1_L1_A},
\end{equation} 
where $A$ is a noiseless ancillary qubit and the superscript $\rep$ means ``repetition''. The encoding channel from the two-level system $C$ to $SA$ will simply be $\mE_{\CtoSA} = \big(\mE_{\LtoS} \otimes \id_A\big) \circ \mE^{\rep}_{\CtoLA}$. $\mE_{\CtoSA}$ is still a covariant code whose logical and physical Hamiltonians are
\begin{equation}
H_{C} = \frac{\Delta H_L}{2} \cdot Z_{C},\quad H_{SA} = H_S \otimes \id_A.
\end{equation}
The noiseless ancillary qubit will help us suppress off-diagonal noises in the system because any single qubit bit-flip noise on $L$ could be fully corrected by mapping $\ket{i_L j_A}$ to $\ket{j_C}$ for all $i,j$. In fact, $\mN_S$ will be reduced to a rotated dephasing channel, as shown in the following lemma: 
\begin{lemma}
\label{lemma:dephasing}
Consider a noise channel $\mN_{SA} = \mN_S\otimes \id_A$. There exists a recovery channel $\mR_{\SAtoC}$ such that the effective noise channel $\mI_{C} = \mR_{\SAtoC} \circ \mN_{SA} \circ \mE_{\CtoSA}$ is a rotated dephasing channel, satisfying 
\begin{equation}
\label{eq:logical-dephasing}
\mI_{C} = \mD_{C,\epsilon',\phi'}, 
\end{equation}
where $\epsilon' \leq \epsilon(\mN_S,\mE_{\LtoS})$. 
\end{lemma}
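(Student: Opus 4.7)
My plan is to construct $\mR_{\SAtoC}$ explicitly in two stages and then invoke monotonicity of the Bures fidelity under CPTP maps to bound $\epsilon'$. I would decompose
\begin{equation*}
\mR_{\SAtoC} = \mR^{\mathrm{dec}}_{\LAtoC}\circ\left(\mR^{\opt}_{\StoL}\otimes\id_A\right),
\end{equation*}
where $\mR^{\opt}_{\StoL}$ is the optimal recovery for the original covariant code $\mE_{\LtoS}$ against $\mN_S$. By definition, the induced effective noise $\mI_L := \mR^{\opt}_{\StoL}\circ\mN_S\circ\mE_{\LtoS}$ on $L$ satisfies $f^2(\mI_L,\id_L) = 1-\epsilon(\mN_S,\mE_{\LtoS})$, so the problem reduces to choosing a suitable decoder $\mR^{\mathrm{dec}}_{\LAtoC}$ from $LA$ to $C$.

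For the decoder I would take the CNOT unitary with $A$ as control and the $\{|0_L\rangle,|1_L\rangle\}$ subspace of $L$ as target (extended trivially on the orthogonal complement within $L$), followed by tracing out $L$ and identifying $A$ with $C$. The intuition is that the noiseless ancilla carries a perfect copy of the logical computational-basis index, so using it coherently as a control lets bit-flip-type errors on $L$ be corrected at the level of $C$, leaving only phase-type errors to survive as dephasing on $C$. One checks directly that $\mR^{\mathrm{dec}}_{\LAtoC}\circ\mE^{\rep}_{\CtoLA} = \id_C$, which will be the key input to the monotonicity step. A computation on the basis $\{|i_C\rangle\langle j_C|\}_{i,j\in\{0,1\}}$ then shows that $\mI_C = \mR^{\mathrm{dec}}_{\LAtoC}\circ(\mI_L\otimes\id_A)\circ\mE^{\rep}_{\CtoLA}$ preserves the diagonals $|i_C\rangle\langle i_C|$ (the ancilla stays in $|i_A\rangle\langle i_A|$ and the partial trace over $L$ returns $\trace[\mI_L(|i_L\rangle\langle i_L|)]=1$) and scales $|0_C\rangle\langle 1_C|$ by a single complex coefficient $\mu$ with $|\mu|\leq 1$ (by complete positivity). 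Setting $\epsilon' := (1-|\mu|)/2$ and $\phi' := -\arg\mu$ therefore puts $\mI_C$ into the rotated dephasing form $\mD_{C,\epsilon',\phi'}$ of \eqref{eq:dephasing}.

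Finally I would bound $\epsilon'$ via monotonicity of the stabilized worst-case Bures fidelity under CPTP composition. Because $\mR^{\mathrm{dec}}_{\LAtoC}\circ\mE^{\rep}_{\CtoLA} = \id_C$,
\begin{equation*}
f(\mI_C,\id_C) \geq f(\mI_L\otimes\id_A,\id_{LA}) = f(\mI_L,\id_L) = \sqrt{1-\epsilon(\mN_S,\mE_{\LtoS})},
\end{equation*}
so the worst-case infidelity of $\mI_C$ is at most $\epsilon(\mN_S,\mE_{\LtoS})$. A short calculation (minimizing over Schmidt-decomposed two-qubit pure states) shows that for any $\mD_{p,\phi}$ with $p\in[0,1/2]$ the worst-case infidelity equals $\tfrac{1}{2}[1-(1-2p)\cos\phi]$, which is always at least $p$. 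Combining these two bounds yields $\epsilon'\leq\epsilon(\mN_S,\mE_{\LtoS})$, as claimed.

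The step I anticipate being the most delicate is verifying the rotated dephasing form of $\mI_C$ when $L$ has dimension larger than two, because then $\mI_L$ can take the encoded state out of $\mathrm{span}\{|0_L\rangle,|1_L\rangle\}$ and one must track how these leakage contributions recombine into the single multiplicative factor $\mu$ on $|0_C\rangle\langle 1_C|$ after the CNOT and partial trace. Conveniently, the monotonicity argument bypasses the need for any explicit lower bound on $|\mu|$, so only the structural form of $\mI_C$ has to be pinned down.
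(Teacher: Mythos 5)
Your proposal is correct and follows essentially the same route as the paper: your CNOT-controlled-by-$A$ decoder followed by tracing out $L$ is precisely a Stinespring description of the paper's Kraus-operator recovery $\mR^{\rep}_{\LAtoC}$, and your concluding argument (the worst-case infidelity $\tfrac{1}{2}[1-(1-2p)\cos\phi]\geq p$ for rotated dephasing channels combined with fidelity monotonicity via $\mR^{\mathrm{dec}}_{\LAtoC}\circ\mE^{\rep}_{\CtoLA}=\id_C$) matches the paper's proof step for step. The leakage subtlety you flag is handled correctly by your computation of the single off-diagonal coefficient $\mu$.
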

\begin{proof}
Consider the following recovery channel 
\begin{equation}
\mR_{\SAtoC}  = \mR^{\rep}_{\LAtoC} \circ \big(\mR^{\opt}_{\StoL}\otimes \id_A\big),
\end{equation}
where 
\begin{equation}
\begin{split}
\mR^{\rep}_{\LAtoC}(\cdot)  = 
%\sum_{i=0}^{d_L-1} \left(\sum_{j=0}^1 R_{ij}\right) (\cdot) \left(\sum_{j=0}^1 R_{ij}^\dagger \right),
&(\ket{0_C}\bra{0_L0_A}+\ket{1_C}\bra{1_L1_A})(\cdot)(\ket{0_L0_A}\bra{0_C}+\ket{1_L1_A}\bra{1_C}) + \\
& (\ket{0_C}\bra{1_L0_A}+\ket{1_C}\bra{0_L1_A})(\cdot)(\ket{1_L0_A}\bra{0_C}+\ket{0_L1_A}\bra{1_C}) + \\
& \sum_{i=2}^{d_L-1} (\ket{0_C}\bra{i_L0_A}+\ket{1_C}\bra{i_L1_A})(\cdot)(\ket{i_L0_A}\bra{0_C}+\ket{i_L1_A}\bra{1_C}). 
\end{split}
\end{equation}
The last term above disappears when $d_L = 2$. 

%$\mR^{\rep}_{\LAtoC}(\rho_{LA}) = \sum_{i=0}^{d_L-1}\sum_{j=0}^1 R_{ij} \rho_{LA} R_{ij}^\dagger$, where $R_{ij} = \ket{j_C}\bra{i_Lj_A}$. 
One could check that 
\begin{equation}
\mI_C(\ket{k_C}\bra{j_C}) = 
\begin{cases}
\ket{k_C}\bra{j_C} , & k=j,\\ 
(1-2\epsilon')e^{i\phi'(k-j)} \ket{k_C}\bra{j_C}, & k\neq j, 
\end{cases}
\end{equation}
which indicates that $\mI_{C} =\mD_{C,\epsilon',\phi'}$ (\eqref{eq:logical-dephasing}). 
Here,
\begin{equation}
\epsilon' \leq 1 - f^2(\mI_C,\id_C) \leq 1 - f^2(\mI_L^{\opt},\id_L) = \epsilon(\mN_S,\mE_{\LtoS}). 
\end{equation}
where $\mI_{L} = \mR^{\opt}_{\StoL} \circ \mN_{S} \circ \mE_{\LtoS}$ the first inequality follows from the worst-case entanglement fidelity for rotated dephasing channels (see \appref{app:dephasing-fidelity}), and the the second inequality follows from $\id_{C} = \mR^\rep_{\LAtoC} \circ \mE^\rep_{\CtoLA}$ and the monotonicity of the fidelity~\cite{nielsen2002quantum}. 
\end{proof}

\lemmaref{lemma:dephasing} shows that $\mN_S$ could be reduced to a rotated dephasing channel $\mI_C$ through entanglement-assisted QEC. Consider parameter estimation of $\theta$ in the quantum channel $\mN_{S,\theta} = \mN_S \circ \mU_{S,\theta}$. We have the error-corrected quantum channel 
\begin{equation}
\mN_{C,\theta} 
= \mR_{\SAtoC} \circ \big(\mN_{S,\theta} \otimes \id_A\big) \circ \mE_{\CtoSA}\\
= \mI_C \circ \mU_{C,\theta},
\end{equation}
equal to a rotated dephasing channel with noise rate $\epsilon'$ and phase $\phi_\theta = \phi' + \Delta H_L \theta$.
The monotonicity of the regularized channel SLD QFI implies that 
\begin{equation}
\label{eq:monotonic-QFI}
F_\txs^\reg(\mN_{S,\theta}) \geq F_\txs^\reg(\mN_{C,\theta}),
\end{equation}
where 
\begin{equation}
F_\txs^\reg(\mN_{S,\theta}) = 
\begin{cases}
F_\txs^\reg(\mN_{S},H_S) & H_S \in {\rm span}\{K_{S,i}^\dagger K_{S,j},\forall i,j\},\\
+\infty & \text{otherwise},\\
\end{cases}
\end{equation}
and 
\begin{equation}
F^\reg_\txs(\mN_{C,\theta}) = \frac{(1-2\epsilon')^2(\Delta H_L)^2}{4\epsilon'(1-\epsilon')}.
\end{equation}
Using \eqref{eq:monotonic-QFI} and $\epsilon' \leq \epsilon < 1/2$, we have 
\begin{equation}
\epsilon \cdot \frac{1 - \epsilon}{(1 - 2\epsilon)^2} \geq \frac{(\Delta H_L)^2}{4 F_\txs^\reg(\mN_S,H_S)}.
\end{equation}
\thmref{thm:SLD} then follows from the fact that $\ell_1(\cdot)$ is the inverse function of $x = \frac{\epsilon(1 - \epsilon)}{(1 - 2\epsilon)^2}$ for $\epsilon \in [0,1/2)$. Moreover, any other types of regularized monotonic QFIs of $\mD_{p,\phi_\theta}$ have the same values as the regularized SLD QFI of $\mD_{p,\phi_\theta}$~\cite{kolodynski2013efficient}. It implies that although the definition of QFI by generalizing the classical Fisher information is not unique, one cannot derive tighter bounds on the code infidelity by replacing the regularized SLD QFI with other types of QFIs in this proof.

\section{Resource-theoretic bound}
\label{sec:res}

Now we demonstrate how quantum resource theory provides another pathway towards characterizing the limitations of covariant QEC. 
More specifically, the covariance property of the allowed operations indicates close connections to the (highly relevant) resource theories of asymmetry, reference frames, coherence, and quantum clocks \cite{Gour_2008,MarvianSpekkens14,marvian2014extending,CoherenceRMP,marvian2020coherence}.  In our current context, we work with a resource theory of coherence (see e.g., \cite{marvian2020coherence} for more discussions on the setting) where the free (incoherent) states are those with density operators commuting with the physical Hamiltonian $H_S$, and the free operations are 
covariant operations $\mC_{\StoL}$ from $S$ to $L$ satisfying   
\begin{equation}
\mC_{\StoL} \circ \mU_{S,\theta} = \mU_{L,\theta} \circ \mC_{\StoL}, \;\forall \theta \in \bR. 
\end{equation}
The free (covariant) operations are incoherence-preserving, i.e., map incoherent states to incoherent states even with the assistance of reference systems~\cite{marvian2020coherence}. (Note that in \cite{marvian2020coherence} covariant operations are called time-translation invariant operations.)

The following lemma shows that the recovery channel $\mR_{\StoL}$ for a covariant code can be assumed to be covariant under two conditions: (1) the noise channel and the symmetry transformation commutes (e.g., satisfied by the erasure and depolarizing channels of interest here); (2) $U_S(\theta) = e^{-i H_{S}\theta}$ and $U_L(\theta) = e^{-i H_{L}\theta}$ share a common period $\tau$ (which is not necessarily the fundamental period), i.e., $U_{S,L}(\tau) = \id_{S,L}$. For example, when $U_S(\theta)$ and $U_L(\theta)$ are both representations of $U(1)$, $\tau = 2\pi$ is the common period, which is a standard assumption in the theory of quantum clocks~\cite{woods2019continuous,marvian2020coherence}. %We could assume they share a common period $\tau$, 
\begin{lemma}
\label{thm:cov}
Suppose $\mN_S \circ \mU_{S,\theta} = \mU_{S,\theta} \circ \mN_S$ and $U_{L,S}(\theta)$ share a common period $\tau$. Then the code infidelity $\epsilon(\mN_S,\mE_{\LtoS})$ and the Choi code infidelity $\epsilon_\Choi(\mN_S,\mE_{\LtoS})$ stay the same if the recovery channels are restricted to be covariant operations.  %Then we can always assume the optimal recovery channel is covariant, i.e., 
%i.e., there exists a covariant recovery channel $\mR^\cov_{\StoL}$ such that $1 - f_\Choi^2(\mR^\cov_{\LtoS}\circ \mN_S \circ \mE_{\LtoS},\id_L) = \epsilon_\Choi(\mN_S,\mE_{\LtoS})$ and the same statement also holds true for the worst-case infidelity. 
%and $1 - f_\Choi^2(\mR^\cov_{\LtoS}\circ \mN_S \circ \mE_{\LtoS},\id_L) = \epsilon_\Choi(\mN_S,\mE_{\LtoS})$. 
%Then any $\epsilon$-correctable covariant code is still $\epsilon$-correctable if the recovery channel is restricted to be covariant operations. 
\end{lemma}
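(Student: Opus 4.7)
The plan is to symmetrize any recovery channel by averaging over the common symmetry period, and show that this twirling can only improve (and in fact preserves the optimality of) the recovery. Concretely, given an arbitrary recovery channel $\mR_{\StoL}$, I would define
\begin{equation}
\mR^{\rm cov}_{\StoL} = \frac{1}{\tau}\int_0^\tau d\theta \; \mU_{L,-\theta} \circ \mR_{\StoL} \circ \mU_{S,\theta}.
\end{equation}
Because $U_{L,S}(\theta)$ are $\tau$-periodic the integral is well-defined over a compact interval, and $\mR^{\rm cov}_{\StoL}$ is CPTP as a convex combination of CPTP maps. A change of variable $\theta \mapsto \theta-\phi$ together with the periodicity shows $\mR^{\rm cov}_{\StoL}\circ \mU_{S,\phi} = \mU_{L,\phi}\circ \mR^{\rm cov}_{\StoL}$, so $\mR^{\rm cov}_{\StoL}$ is covariant.

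Next I would compute the effective noise channel in the logical system. Using the covariance of $\mE_{\LtoS}$ and the assumption $\mN_S \circ \mU_{S,\theta} = \mU_{S,\theta}\circ \mN_S$, the chain $\mU_{S,\theta}\circ \mN_S\circ \mE_{\LtoS}$ can be rewritten as $\mN_S\circ \mE_{\LtoS}\circ \mU_{L,\theta}$, giving
\begin{equation}
\mI_L^{\rm cov} := \mR^{\rm cov}_{\StoL}\circ \mN_S \circ \mE_{\LtoS} = \frac{1}{\tau}\int_0^\tau d\theta\;\mU_{L,-\theta}\circ \mI_L \circ \mU_{L,\theta},
\end{equation}
where $\mI_L = \mR_{\StoL}\circ \mN_S\circ \mE_{\LtoS}$. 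So twirling the recovery exactly twirls the effective logical channel.

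The main step is then to show that this twirling does not decrease the fidelity with $\id_L$. For the worst-case entanglement fidelity, let $\rho^*$ be the state attaining $f(\mI_L^{\rm cov},\id_L)$. By joint concavity of $f$ and the unitary invariance $f((\mU_{L,-\theta}\sigma\mU_{L,-\theta}^\dagger),(\mU_{L,-\theta}\rho\mU_{L,-\theta}^\dagger)) = f(\sigma,\rho)$, combined with the fact that $(\mU_{L,\theta}\otimes\id_R)\rho^*$ is still a valid input state whose fidelity with the identity channel is at least $f(\mI_L,\id_L)$, one obtains
\begin{equation}
f(\mI_L^{\rm cov},\id_L) \;\geq\; \frac{1}{\tau}\int_0^\tau d\theta\; f(\mI_L,\id_L) \;=\; f(\mI_L,\id_L).
\end{equation}
For the Choi fidelity the argument is cleaner: apply joint concavity directly to the definition, and use the identity $(U\otimes U^*)\ket{\gamma}=\ket{\gamma}$ for the maximally entangled state. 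A short computation shows that each term $f\bigl((\mU_{L,-\theta}\circ\mI_L\circ\mU_{L,\theta}\otimes\id_R)(\ket{\gamma}\bra{\gamma}),\ket{\gamma}\bra{\gamma}\bigr)$ equals $f_{\Choi}(\mI_L,\id_L)$, so the average is likewise unchanged and $f_{\Choi}(\mI_L^{\rm cov},\id_L)\geq f_{\Choi}(\mI_L,\id_L)$.

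Finally, applying the above to an optimal $\mR_{\StoL}$ produces a covariant recovery achieving at least as small an (Choi) infidelity. Since covariant recoveries form a subset of all CPTP maps, the reverse inequality is trivial, and both optimizations coincide. The main technical point to be careful about is the joint-concavity / unitary-invariance manipulation for the worst-case fidelity (where one must track that taking the min is compatible with the twirl), and the $(U\otimes U^*)\ket{\gamma}=\ket{\gamma}$ identity for the Choi case; the periodicity hypothesis enters only to guarantee the Haar average over the one-parameter group is well-defined as an integral over $[0,\tau]$.
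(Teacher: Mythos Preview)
Your proposal is correct and follows essentially the same approach as the paper: define the twirled recovery $\mR^{\rm cov}_{\StoL}$ by averaging over $[0,\tau]$, verify covariance via a change of variable, use the commutation $\mN_S\circ\mU_{S,\theta}=\mU_{S,\theta}\circ\mN_S$ together with covariance of $\mE_{\LtoS}$ to reduce the effective logical channel to a twirl $\frac{1}{\tau}\int d\theta\,\mU_{L,-\theta}\circ\mI_L\circ\mU_{L,\theta}$, and then argue via concavity and unitary invariance that fidelity with $\id_L$ does not decrease. The only cosmetic differences are that the paper invokes concavity of $f^2(\Phi,\id_L)$ in $\Phi$ (which is in fact linear for pure inputs) rather than joint concavity of $f$, and for the Choi case states the invariance $f_\Choi(\mU\circ\Phi\circ\mU^\dagger,\id)=f_\Choi(\Phi,\id)$ directly rather than unpacking it through $(U\otimes U^*)\ket{\gamma}=\ket{\gamma}$; both routes are equivalent.
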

\begin{proof}
Let $\mR^\opt_{\StoL}$ be the recovery channel such that $1 - f^2(\mR^\opt_{\LtoS}\circ \mN_S \circ \mE_{\LtoS},\id_L) = \epsilon(\mN_S,\mE_{\LtoS})$. Consider the following recovery channel: 
\begin{equation}
\label{eq:def-cov-rec}
\mR_{\StoL}^\cov = \frac{1}{\tau} \int_0^\tau  d\theta\;\mU_{L,\theta} \circ \mR^\opt_{\StoL} \circ \mU_{S,\theta}^\dagger . 
\end{equation}
We first observe that $\mR_{\StoL}^\cov$ is covariant:
\begin{equation}
\mR_{\StoL}^\cov \circ \mU_{S,\theta'} = \frac{1}{\tau}\int_0^\tau d\theta\; \mU_{L,\theta} \circ \mR^\opt_{\StoL} \circ \mU_{S,\theta-\theta'}^\dagger = \mU_{L,\theta'} \circ \mR_{\StoL}^\cov. 
\end{equation}
Furthermore,
\begin{equation}
\mR_{\StoL}^\cov \circ \mN_S \circ \mE_{\LtoS} = \frac{1}{\tau}\!\int_0^\tau\! d\theta\;\mU_{L,\theta} \circ (\mR^\opt_{\StoL} \circ \mN_S \circ \mE_{\LtoS}) \circ \mU_{L,\theta}^\dagger,
\end{equation}
%Using the fact that $f^2$
%where $\mI_L = $. 
Using the concavity of $f^2(\Phi,\id)$ with respect to $\Phi$ and the monotonicity of the worst-case fidelity~\cite{schumacher1996sending}, we have $1 - f^2(\mR^\cov_{\LtoS}\circ \mN_S \circ \mE_{\LtoS},\id_L) \leq \epsilon(\mN_S,\mE_{\LtoS})$. By definition, the equality holds. 

Similarly, one could construct an optimal covariant recovery channel with respect to the Choi code infidelity by replacing $\mR^\opt_{\StoL}$ with the optimal recovery channel achieving the minimum Choi infidelity in \eqref{eq:def-cov-rec} and prove its optimality by noting that $f^2(\Phi,\id)$ is linear with respect to the channel $\Phi$ and $f(\Phi,\id) = f(\mU\circ\Phi\circ\mU^\dagger,\id)$ for any unitary channel $\mU$. 
\end{proof}

\begin{figure}[tb]
\centering
\includegraphics[width=0.7\textwidth]{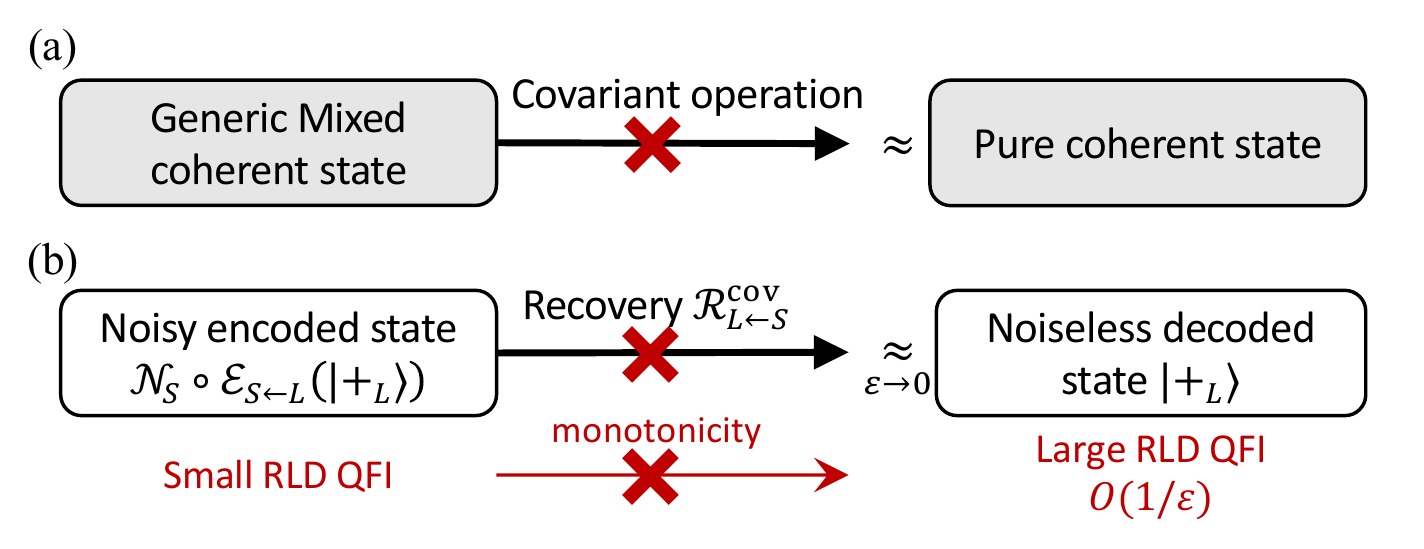}
\caption{\label{fig:coh} 
(a) As a case of the general no-go theorems for quantum resource purification, generic noisy coherent states cannot be transformed to states within a neighborhood of pure coherent states using covariant operations. 
(b) Specifically for covariant quantum error correction, a noisy coherent encoded state (e.g.~$\mN_S\circ \mE_{\LtoS}(\ket{+_L})$)
cannot be transformed to states within a neighborhood of a pure coherent logical state (e.g.~$\ket{+_L}$), that is, there exist fundamental limits on the overall accuracy of recovery (decoding). As illustrated at the bottom, our specific bounds on the accuracy are derived from the monotonicity of RLD QFI under covariant operations. The key intuition is that RLD QFI diverges as the infidelity with the target pure coherent state $\epsilon$ tends to zero,
thus recovery with too small $\epsilon$ is ruled out by the monotonicity of RLD QFI. 
%In particular, the target state must be $\epsilon$-far from $\ket{+_L}$ in terms of infidelity and $\epsilon$ can be lower bounded using the monotonicity of the RLD QFI. 
} 
\end{figure}

This lemma allows us to formulate covariant QEC as a resource state conversion task, where one aims to transform noisy physical states to logical states by covariant operations (see \figaref{fig:coh}).  It has recently been found that, by analyzing suitable resource monotones (functions of states that are nonincreasing under free operations), one can prove strong lower bounds on the infidelity of transforming generic noisy states to pure resource states by any free operation, which underlies the important task of distillation (see \cite{FangLiu19:nogo,FL20,regula2020oneshot} for general results that apply to any well-behaved resource theory, and \cite{marvian2020coherence} for discussions specific to covariant operations).  Here, we use the RLD QFI for quantum states, which is  studied as a coherence monotone in \cite{marvian2020coherence}, to derive bounds on the performance of covariant QEC. 
In particular, the RLD QFI satisfies
\begin{equation}
\label{eq:monotone-RLD}
F_\txr(\mR^\cov_{\StoL}(\rho_S),H_L) \leq F_\txr(\rho_S,H_S),
\end{equation}
for all $\rho_S$ and covariant operations $\mR^\cov_{\StoL}$ where 
\begin{equation}
F_\txr(\rho,H) = F_\txr(e^{-iH\theta}\rho e^{iH\theta}) = 
\begin{cases}
\trace(H\rho^2 H \rho^{-1}) - \trace(\rho H^2) & \supp(H\rho H)\subseteq \supp(\rho),\\
+\infty & \text{otherwise.}
\end{cases}
\end{equation}
Notice that the RLD QFI $F_\txr(\rho,H)$ approaches infinity when $\rho$ is coherent and its purity $\trace(\rho^2)$ approaches one. 
Let $\rho_S = \mN_{S}\circ\mE_{\LtoS}(\rho_L)$ where $\rho_L$ is a pure coherent state. If the right-hand side of \eqref{eq:monotone-RLD} is finite, the left-hand side of \eqref{eq:monotone-RLD} must also be finite, and thus a perfectly error-correcting recovery channel does not exist (see \figbref{fig:coh}). That is, the RLD QFI is a distinguished coherence monotone which can rule out generic noisy-to-pure transformations and further induce lower bounds on the code infidelity. In fact, the RLD QFI $F_\txr(\rho,H)$ is lower bounded by $O(1/\epsilon)$ when $\rho$ is $\epsilon$-close to a pure state $\ket{\Psi}$ in terms of infidelity, characterized by the following lemma: 
\begin{lemma}[\cite{marvian2020coherence}]
\label{lemma:pure-rld}
Let $\teps = 1 - \braket{\Psi|\rho|\Psi}$. 
\begin{equation}
\label{eq:pure-rld}
F_\txr(\rho,H) \geq \left(V_{H}(\Psi) - \frac{3\sqrt{2\teps}(\Delta H)^2}{2}\right)\cdot \frac{1 - 3\teps + \teps^2}{\teps},
\end{equation}
where $V_H(\Psi) = \braket{\Psi|H^2|\Psi} - (\braket{\Psi|H|\Psi})^2$. 
\end{lemma}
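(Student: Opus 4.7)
The plan is to diagonalize $\rho$, isolate the contribution to $F_\txr$ from the principal eigenvalue $\lambda_0$ (which must be close to $1$ by hypothesis), and expose the $1/\teps$ divergence carried by the cross-terms between the top eigenvector $\ket{\phi_0}$ and the low-weight part of the spectrum. Starting from $F_\txr(\rho,H)=\trace(H\rho^2 H\rho^{-1})-\trace(\rho H^2)$ and symmetrizing over the two orderings of the index pair in the spectral basis $\{\ket{\phi_i}\}$ of $\rho$ with eigenvalues $\lambda_0\geq\lambda_1\geq\cdots$, one obtains the manifestly non-negative form
\begin{equation}
F_\txr(\rho,H)=\sum_{i<j}\frac{(\lambda_i-\lambda_j)^2(\lambda_i+\lambda_j)}{\lambda_i\lambda_j}\,|\braket{\phi_i|H|\phi_j}|^2.
\end{equation}
Retaining only the $i=0$ summands gives $F_\txr(\rho,H)\geq\sum_{j>0}\bigl[\lambda_0^2/\lambda_j-\lambda_0-\lambda_j+\lambda_j^2/\lambda_0\bigr]\,|\braket{\phi_0|H|\phi_j}|^2$.

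The min-max principle yields $\lambda_0\geq\braket{\Psi|\rho|\Psi}=1-\teps$, hence $\sum_{j>0}\lambda_j=1-\lambda_0\leq\teps$ and $\lambda_j\leq\teps$ for every $j>0$. A Lagrange-multiplier optimization of $\sum_{j>0}|\braket{\phi_0|H|\phi_j}|^2/\lambda_j$ under this trace constraint gives $\sum_{j>0}|\braket{\phi_0|H|\phi_j}|^2/\lambda_j\geq V_H(\phi_0)/\teps$, where the identity $\sum_{j>0}|\braket{\phi_0|H|\phi_j}|^2=\braket{\phi_0|H^2|\phi_0}-\braket{\phi_0|H|\phi_0}^2=V_H(\phi_0)$ is used. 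Dropping the non-negative $\lambda_j^2/\lambda_0$ piece and bounding $\sum_{j>0}(\lambda_0+\lambda_j)\,|\braket{\phi_0|H|\phi_j}|^2\leq(\lambda_0+\teps)V_H(\phi_0)$ reduces the inequality to $F_\txr(\rho,H)\geq V_H(\phi_0)\bigl[\lambda_0^2/\teps-\lambda_0-\teps\bigr]$; this bracket is increasing in $\lambda_0$ on $[1-\teps,1]$ and evaluates at the worst case $\lambda_0=1-\teps$ to $(1-3\teps+\teps^2)/\teps$, producing exactly the scalar prefactor that appears in the lemma.

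It remains to replace $V_H(\phi_0)$ by $V_H(\Psi)$ at the cost of a small additive error. Expanding $\braket{\Psi|\rho|\Psi}=\sum_i\lambda_i|\braket{\Psi|\phi_i}|^2=1-\teps$ and bounding the tail by $\sum_{i>0}\lambda_i\leq\teps$ gives $|\braket{\Psi|\phi_0}|^2\geq 1-\teps/\lambda_0$, from which $\|\ket{\phi_0}-\ket{\Psi}\|\leq\sqrt{2\teps}$ follows after fixing an appropriate global phase (in the relevant small-$\teps$ regime). Shifting $H$ by a multiple of the identity so that $\|H\|\leq\Delta H/2$ — which preserves both $V_H$ and $\Delta H$ — and applying $|\braket{\phi_0|A|\phi_0}-\braket{\Psi|A|\Psi}|\leq 2\|A\|\,\|\ket{\phi_0}-\ket{\Psi}\|$ separately to $A=H^2$ and to the factored square difference $\braket{\phi_0|H|\phi_0}^2-\braket{\Psi|H|\Psi}^2$ yields the perturbation estimate $V_H(\phi_0)\geq V_H(\Psi)-\tfrac{3\sqrt{2\teps}(\Delta H)^2}{2}$. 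Multiplying with the scalar factor from the previous step gives precisely the claimed inequality.

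The main obstacle is aligning the numerical constants across the two independent estimates: the scalar factor must be evaluated at the worst admissible $\lambda_0$ without losing the precise coefficients of $\teps$ and $\teps^2$, and the perturbation bound must be sharpened by separating the quadratic and bilinear contributions to $V_H(\phi_0)-V_H(\Psi)$ and using the tightest admissible centering of $H$. The conceptual ingredients — spectral truncation, Lagrange minimization over the tail spectrum, and first-order perturbation of the variance — are each routine; the care lies in combining them so that no cross-term or factor of two is wasted along the way.
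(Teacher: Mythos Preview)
Your proposal is correct and follows the same two-step skeleton as the paper: first bound $F_\txr(\rho,H)\geq V_H(\phi_0)\,(1-3\teps+\teps^2)/\teps$ with $\phi_0$ the top eigenvector of $\rho$, then replace $V_H(\phi_0)$ by $V_H(\Psi)$ via an overlap estimate. The paper quotes the first step directly from \cite{marvian2020coherence} as a black box, whereas you reconstruct it from the spectral expansion of $F_\txr$ and a Cauchy--Schwarz/Lagrange minimization over the tail weights; this is a legitimate self-contained route to the same inequality with the exact coefficients. For the second step the paper works with the trace-norm distance $\norm{\ket{\phi_0}\bra{\phi_0}-\ket{\Psi}\bra{\Psi}}_1\leq 2\sqrt{2\teps}$ and the H\"older-type bound $|\trace(AB)|\leq\norm{A}_1\norm{B}$, while you use the state-vector distance $\norm{\ket{\phi_0}-\ket{\Psi}}$; both land on the same constant $\tfrac{3}{2}\sqrt{2\teps}(\Delta H)^2$ after centering $H$. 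One small point worth tightening: your claim $\norm{\ket{\phi_0}-\ket{\Psi}}\leq\sqrt{2\teps}$ requires $|\braket{\Psi|\phi_0}|\geq 1-\teps$, which from your overlap estimate $|\braket{\Psi|\phi_0}|^2\geq 1-\teps/\lambda_0\geq (1-2\teps)/(1-\teps)$ holds only for $\teps\leq(3-\sqrt{5})/2$; this is covered by your ``small-$\teps$ regime'' caveat, but the trace-norm route in the paper avoids that auxiliary restriction cleanly.
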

\begin{proof}
According to Corollary 1 in Supplementary Note 3 in~\cite{marvian2020coherence}, 
\begin{equation}
\label{eq:pure-rld-1}
F_\txr(\rho,H) \geq V_{H}(\tPsi)\cdot \frac{1 - 3\teps + \teps^2}{\teps},
\end{equation}
where $\tPsi$ is the eigenvector of $\rho$ which corresponds to the largest eigenvalues of $\rho$. Let $\lambda_{\max}$ be the largest eigenvalue of $\rho$. 
Then $\braket{\Psi|\rho|\Psi} = 1 - \teps \leq \lambda_{\max}$. Moreover,
\begin{equation}
\begin{split}
1 - \teps &= \braket{\Psi|\rho|\Psi} \leq \bra{\Psi}(\lambda_{\max}\ket{\tPsi}\bra{\tPsi} + \rho - \lambda_{\max}\ket{\tPsi}\bra{\tPsi})\ket{\Psi}\\
&\leq \lambda_{\max}\abs{\braket{\Psi|\tPsi}}^2 + (1-\lambda_{\max})(1 - \abs{\braket{\Psi|\tPsi}}^2) \leq \lambda_{\max}\abs{\braket{\Psi|\tPsi}}^2 + \teps(1 - \abs{\braket{\Psi|\tPsi}}^2), \\
\end{split}
\end{equation}
leading to $\abs{\braket{\Psi|\tPsi}}^2 \geq 1-2\teps$. Without loss of generality, assume the largest and the smallest eigenvalues of $H$ is $\frac{\Delta H}{2}$ and $-\frac{\Delta H}{2}$. Then 
\begin{equation}
\label{eq:pure-rld-2}
\begin{split}
\abs{V_{H}(\tPsi) - V_{H}(\Psi)} &\leq
\abs{\trace\big(H(\ket{\tPsi}\bra{\tPsi}-\ket{\Psi}\bra{\Psi})H\ket{\tPsi}\bra{\tPsi}\big)}\\
&\qquad + \abs{\trace\big((\ket{\tPsi}\bra{\tPsi}-\ket{\Psi}\bra{\Psi})(H^2-H\ket{\tPsi}\bra{\tPsi}H\big)}\\
&\leq \norm{\ket{\tPsi}\bra{\tPsi}-\ket{\Psi}\bra{\Psi}}_1 \left(\braket{\Psi|H^2|\Psi} + \norm{H^2-H\ket{\tPsi}\bra{\tPsi}H} \right)\\
&\leq 2\sqrt{2\teps} \cdot \frac{3(\Delta H)^2}{4}, 
\end{split}
\end{equation}
where in the second step we use $\abs{\trace(AB)}\leq \norm{A}_1\norm{B}$ and in the third step we use $\norm{\ket{\tPsi}\bra{\tPsi}-\ket{\Psi}\bra{\Psi}}_1 = 2\sqrt{1 - \abs{\braket{\Psi|\tPsi}}^2}\leq 2\sqrt{2\teps}$. \eqref{eq:pure-rld} then follows from \eqref{eq:pure-rld-1} and \eqref{eq:pure-rld-2}. 
\end{proof}

Given \lemmaref{lemma:coh} and \lemmaref{lemma:pure-rld}, we are now ready to present our resource-theoretic bound on the code infidelity. 
%Analogous to the metrological proof where we compare the channel QFI of the logical channel based on covariant QEC with that of the original noisy channel, 
The intuition on the operational level is that the accuracy of the optimal recovery transformation, which can be understood as a resource conversion task, is limited by the monotonicity of RLD QFI that diverges as approaching the pure target state  (see \figbref{fig:coh}).
More specifically, we compare the resource monotone, i.e., the RLD QFI of the decoded coherent state with that of the original noisy coherent state. The lower bound then follows from the monotonicity of the RLD QFI and the fact that the RLD QFI of the decoded coherent state is lower bounded by $O(1/\epsilon)$. 

Note that we will also use the RLD QFI for quantum channels defined by~\cite{hayashi2011comparison,katariya2020geometric}
\begin{equation}
\label{eq:RLD-def}
\begin{split}
&F_\txr(\mN_\theta) = \begin{cases}
\norm{\trace_{S(\mN_\theta)}\big((\ptheta \Gamma^{\mN_\theta})(\Gamma^{\mN_\theta})^{-1}(\ptheta \Gamma^{\mN_\theta}) \big)} & \text{(R),}\\
+\infty & \text{otherwise},\\
\end{cases}\\
&\;\;\; \text{(R):~~} {\rm span}\{\ptheta K_{i,\theta},\forall i\} \subseteq {\rm span}\{K_{i,\theta},\forall i\},
\end{split}
\end{equation}
where $\mN_{\theta}(\cdot) = \sum_{i=1}^r K_{i,\theta}(\cdot) K_{i,\theta}^\dagger$. Here we use the Choi operator of $\mN_\theta$: $\Gamma^{\mN_\theta} = (\mN_\theta\otimes \id)(\Gamma)$, where $\Gamma  = \ket{\Gamma}\bra{\Gamma}$ and $\ket{\Gamma} = \sum_{i} \ket{i}\ket{i}$. $S(\mN_\theta)$ denotes the output system of $\mN_\theta$. $F_\txr(\mN_\theta)$ is also additive~\cite{hayashi2011comparison}. 
The result we obtained is the following. 

\begin{theorem}
\label{lemma:coh}
Consider a covariant code $\mE_{\LtoS}$ under a noise channel $\mN_S(\cdot) = \sum_{i=1}^r K_{S,i} (\cdot) K_{S,i}^\dagger$. If $\mN_S$ commutes with $\mU_{S,\theta}$, %and $\epsilon(\mN_S,\mE_\LtoS) < 0.38$ (or $\epsilon_\Choi(\mN_S,\mE_\LtoS) < 0.38$), 
$U_{L,S}(\theta)$ share a common period and 
\begin{equation}
\label{eq:cond-2}
{\rm span}\{K_{S,i} H_S,\forall i\} \subseteq {\rm span}\{K_{S,i},\forall i\},
\end{equation}
then $\epsilon$ and $\epsilon_\Choi$ are lower bounded as follows, 
\begin{equation}
\label{eq:bound-3}
\epsilon \geq \ell_2\left(\frac{(\Delta H_L)^2}{4 F_\txr(\mN_S,H_S)}\right) = \frac{(\Delta H_L)^2}{4 F_\txr(\mN_S,H_S)} + O\left(\left(\frac{(\Delta H_L)^2}{4 F_\txr(\mN_S,H_S)}\right)^2\right), 
\end{equation}
and 
\begin{equation}
\label{eq:bound-3-avg}
\epsilon_\Choi \geq \ell_3\left(\frac{\trace(H_L^2)}{d_L F_\txr(\mN_S,H_S)}\right) = \frac{\trace(H_L^2)}{d_L F_\txr(\mN_S,H_S)} + O\left(\left(\frac{\trace(H_L^2)}{d_L F_\txr(\mN_S,H_S)}\right)^2\right), 
\end{equation}
where $\ell_2(x)= x + O(x^2)$ is the inverse function of the monotonic increasing function $x = 1/({(1 - 3\epsilon + \epsilon^2)(1 - 6\sqrt{2\epsilon})})$ for $\epsilon \in [0,1/72)$, $\ell_3(x)= x + O(x^2)$ is the inverse function of the monotonic increasing function $x = 1/({(1 - 3\epsilon + \epsilon^2)(1 - \frac{3d_L(\Delta H_L)^2}{2\trace(H_L^2)}\sqrt{2\epsilon})})$ for $\epsilon \in [0,\frac{2\trace(H_L^2)^2}{9d_L^2(\Delta H_L)^4})$, and $F_\txr(\mN_S,H_S)$ is the RLD QFI of $\mN_{S,\theta}$. 
Specifically, $F_\txr(\mN_S,H_S) = \big\|\trace_S\big(\Gamma_{SR}^{\mN_S,H_S}(\Gamma_{SR}^{\mN_S})^{-1}\Gamma_{SR}^{\mN_S,H_S}\big)\big\|$, with  $\Gamma_{SR}^{\mN_S} = (\mN_S \otimes \id_R)(\Gamma_{SR})$
and 
%\begin{equation}
%\label{eq:Gamma-H}
$\Gamma_{SR}^{\mN_S,H_S} = (\mN_S \otimes \id_R)\big((H_S\otimes \id_R)\Gamma_{SR}\big) - (\mN_S \otimes \id_R)\big(\Gamma_{SR}(H_S\otimes \id_R)\big)$, 
%\end{equation}
where $\Gamma_{SR} = \ket{\Gamma}_{SR}\bra{\Gamma}_{SR}$ and $\ket{\Gamma}_{SR} = \sum_i \ket{i}_S\ket{i}_R$. 
\end{theorem}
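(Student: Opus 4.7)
The plan is to apply \lemmaref{thm:cov} to restrict attention to a covariant recovery channel $\mR^\cov_{\StoL}$, and then exploit the monotonicity (data-processing) of the RLD QFI together with its divergence near pure coherent states (\lemmaref{lemma:pure-rld}) to convert the assumed decoding accuracy into an upper bound on $F_\txr(\mN_S,H_S)$, which I then invert to obtain the stated lower bounds on $\epsilon$ and $\epsilon_\Choi$.

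The key observation is that if one feeds a pure input state $\rho$ through the pipeline $\mN_{S,\theta}\circ\mE_{\LtoS}$ followed by $\mR^\cov_{\StoL}$, the covariance of $\mE_{\LtoS}$ and $\mR^\cov_{\StoL}$ together with the commutativity of $\mN_S$ with $\mU_{S,\theta}$ imply that the output equals $\mU_{L,\theta}(\sigma)$, where $\sigma=\mI_L(\rho)$ with $\mI_L=\mR^\cov_{\StoL}\circ\mN_S\circ\mE_{\LtoS}$. For \eqref{eq:bound-3} I take $\rho=\ket{+_L}\bra{+_L}$ with $\ket{+_L}=(\ket{0_L}+\ket{1_L})/\sqrt{2}$, where $\ket{0_L},\ket{1_L}$ are eigenvectors of $H_L$ at the extremal eigenvalues, so that $V_{H_L}(+_L)=(\Delta H_L)^2/4$ and $\teps:=1-\bra{+_L}\sigma\ket{+_L}\leq\epsilon$ follows from the definition of the worst-case entanglement fidelity. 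For \eqref{eq:bound-3-avg} I take the input (with a reference system $R$) to be the maximally entangled state $\ket{\gamma}_{LR}$, for which $V_{H_L\otimes\id_R}(\gamma)=\trace(H_L^2)/d_L$ by tracelessness of $H_L$, and $\teps=\epsilon_\Choi$ by construction.

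Because $\mU_{L,\theta}$ is unitary, the RLD QFI of the parameterized output $\mU_{L,\theta}(\sigma)$ equals $F_\txr(\sigma,H_L)$. Using data-processing of the state RLD QFI under the $\theta$-independent post-processing $\mR^\cov_{\StoL}$, combined with the fact that the channel RLD QFI upper bounds the state RLD QFI of the output on any (possibly entangled) input and is monotone under $\theta$-independent pre-processing by $\mE_{\LtoS}$, one gets $F_\txr(\sigma,H_L)\leq F_\txr(\mN_{S,\theta})$; condition \eqref{eq:cond-2} together with \eqref{eq:RLD-def} guarantees that this quantity is finite and equals $F_\txr(\mN_S,H_S)$. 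Chaining with the lower bound of \lemmaref{lemma:pure-rld} yields
\begin{equation*}
F_\txr(\mN_S,H_S)\;\geq\;\left(V-\frac{3\sqrt{2\teps}(\Delta H_L)^2}{2}\right)\cdot\frac{1-3\teps+\teps^2}{\teps},
\end{equation*}
with $V=(\Delta H_L)^2/4$ for the worst-case case and $V=\trace(H_L^2)/d_L$ for the Choi case. Solving this monotone inequality for $\teps$ and substituting $\teps\leq\epsilon$ (resp.\ $\epsilon_\Choi$) gives \eqref{eq:bound-3} and \eqref{eq:bound-3-avg} with $\ell_2,\ell_3$ being the inverses of the indicated monotone functions on the specified intervals.

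The main obstacle is the data-processing chain that must be stitched together: one has to verify, from the monotonicity and additivity properties of the state and channel RLD QFIs (together with the fact that $F_\txr$ of a $\theta$-rotated state reduces to $F_\txr(\cdot,H)$), that the state RLD QFI of the recovered output is genuinely bounded by $F_\txr(\mN_S,H_S)$, and in particular that condition \eqref{eq:cond-2} is exactly what makes this quantity finite via \eqref{eq:RLD-def}. A secondary subtlety is handling the Choi case cleanly, since the input is entangled with a reference and the relevant Hamiltonian is $H_L\otimes\id_R$; this is what produces the replacement $(\Delta H_L)^2/4 \to \trace(H_L^2)/d_L$ in the variance term. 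The remaining step, inverting the resulting scalar inequality in $\teps$ to extract $\ell_2,\ell_3$ and their leading-order expansions, is routine calculus.
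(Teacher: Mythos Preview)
Your proposal is correct and follows essentially the same approach as the paper: use \lemmaref{thm:cov} to obtain a covariant recovery, feed in $\ket{+_L}$ (resp.\ $\ket{\gamma}_{LR}$), upper bound the RLD QFI of the recovered state by $F_\txr(\mN_S,H_S)$ via monotonicity, lower bound it via \lemmaref{lemma:pure-rld}, and invert. The only cosmetic difference is that you unpack the monotonicity step explicitly (covariance of $\mR^\cov_{\StoL}$ plus commutativity of $\mN_S$ with $\mU_{S,\theta}$ identifies the output family with $\mU_{L,\theta}(\sigma)$, then ordinary data-processing applies), whereas the paper simply invokes \eqref{eq:monotone-RLD} as a coherence-monotone property.
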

\begin{proof}
Let $\ket{+_L} = \frac{\ket{0_L} + \ket{1_L}}{\sqrt{2}}$. Then according to \lemmaref{thm:cov}, there exists a covariant recovery channel $\mR_{\StoL}^\cov$ such that 
\begin{equation}
1 - \braket{+_L|\rho_L|+_L} \leq \epsilon,
\end{equation}
where $\rho_L = ( \mR_{\StoL}^\cov \circ \mN_S \circ \mE_{\LtoS}) (\ket{+_L}\bra{+_L})$. 
According to \lemmaref{lemma:pure-rld}, 
\begin{equation}
\label{eq:variance}
F_\txr(\rho_L,H_L) \geq \frac{1 - 3\epsilon + \epsilon^2}{\epsilon} \cdot \left(V_{H_L}(\ket{+_L} - \frac{3\sqrt{2\epsilon}(\Delta H_L)^2}{2}\right) ,
\end{equation}
where the variance $V_{H_L}(\ket{+_L}) = \braket{+_L|H_L^2|+_L} - \braket{+_L|H_L|+_L}^2 = \frac{(\Delta H_L)^2}{4}$. $\epsilon < 1/72$ guarantees the right-hand side is positive. On the other hand, using \eqref{eq:monotone-RLD},  
\begin{equation}
\label{eq:channel-RLD}
F_\txr(\rho_L,H_L) 
\leq F_\txr(\rho_S,H_S)
= F_\txr(\mN_{S,\theta}(\mE_{\LtoS}(\ket{+_L}\bra{+_L})) ) \leq  F_\txr(\mN_{S,\theta}),
\end{equation}
where $\rho_S = (\mN_S \circ \mE_{\LtoS}) (\ket{+_L}\bra{+_L})$ and 
\begin{equation}
%\label{eq:RLD-def-1}
F_\txr(\mN_{S,\theta}) = 
\begin{cases}
F_\txr(\mN_{S},H_S) & \text{\eqref{eq:cond-2}},\\
+\infty & \text{otherwise}. \\
\end{cases}
\end{equation} 
Using \eqref{eq:variance} and \eqref{eq:channel-RLD}, we have
\begin{equation}
\epsilon \cdot \frac{1}{(1 - 3\epsilon + \epsilon^2)(1 - 6\sqrt{2\epsilon})} \geq \frac{(\Delta H_L)^2}{4 F_\txr(\mN_S,H_S)}.  
\end{equation}
\eqref{eq:bound-3} then follows from the fact that $\ell_2(\cdot)$ is the inverse function of $x = \frac{1}{(1 - 3\epsilon + \epsilon^2)(1 - 6\sqrt{2\epsilon})}$. 

Similarly, let $\ket{\gamma_{LR}} = \frac{1}{\sqrt{d_L}}\sum_{i=1}^{d_L} \ket{i}_L \ket{i}_R$. Then according to \lemmaref{thm:cov}, there exists a covariant recovery channel $\mR_{\StoL}^\Choi$ such that 
\begin{equation}
1 - \braket{\gamma_{LR}|\rho_{LR}|\gamma_{LR}} \leq \epsilon_\Choi,
\end{equation}
According to \lemmaref{lemma:pure-rld}, 
\begin{equation}
\begin{split}
F_\txr(\rho_{LR},H_L\otimes\id_R) 
&\geq \frac{1 - 3\epsilon + \epsilon^2}{\epsilon} \cdot \left( V_{H_L\otimes\id_R}(\ket{\gamma_{LR}}) - \frac{3\sqrt{2\epsilon}(\Delta H_L)^2}{2}\right) \\
&\geq \frac{1 - 3\epsilon + \epsilon^2}{\epsilon} \cdot \left( \frac{\trace(H_L^2)}{d_L} - \frac{3\sqrt{2\epsilon}(\Delta H_L)^2}{2} \right). 
\end{split}
\end{equation}
The rest of the proof is exactly the same as in the proof of the lower bound on the worst-case infidelity. %Note that when $d_L = 2$, the lower bound on the code infidelity and the average code infidelity are the same. 
\end{proof}

As mentioned before, it is always true that $F_{\txr}(\mN_{S,\theta}) \geq F_{\txs}^{\reg}(\mN_{S,\theta})$. Moreoever, $\ell_3(x) \leq \ell_2(x) \leq \ell_1(x)$ for all $x$,\footnote{Note that when $d_L = 2$, $\ell_2(x) = \ell_3(x)$.} %\footnote{In fact, $\ell_2(x)$ is equal to $\ell_1(x)$ up to $O(x^3)$. }, 
therefore \thmref{thm:SLD} provides a tighter bound on the code infidelity than \thmref{lemma:coh}. 
Note that the resource theory approach operates on the state level, which is fundamentally different from the metrological approach.   The assumptions and quantities involved in the two approaches are also different.
From the proof of \thmref{lemma:coh}, we see that there are three advantages of the resource-theoretic approach: (1) We can also derive a lower bound on the Choi code infidelity. (2) We can replace the entanglement-assisted RLD QFI with the one without entanglement assistance $\max_\rho F_\txr(\mN_\theta(\rho))$ in \eqref{eq:bound-3} which might tighten it in certain scenarios. (3) The lower bounds are state-dependent, e.g., we are allowed to replace $F_\txr(\mN_{S},H_S)$ with $F_\txr((\mN_S \circ \mE_{\LtoS}) (\ket{+_L}\!\bra{+_L}),H_S)$ in \eqref{eq:bound-3}, which may be of independent interest in determining the lower bounds on the code infidelity for some special types of covariant codes. 

%It would be interesting to further explore whether other types of resource theories or resource monotones might lead to better lower bounds on the code infidelity in certain cases. For example, it is natural to  if it might be helpful to invoke the channel resource theory scheme (see e.g., \cite{LiuWinter19,LiuYuan19}) where covariant QEC can be formulated. %However, this idea does not directly work in our setting since all channels $\mR^{\cov}_{\StoL}$, $\mN_S$ and $\mE_{\LtoS}$ we consider here are free. 
%\sisi{....comment on channel resource theory}

\section{Local Hamiltonians and local noises}

One of the most common scenarios where covariant codes are considered is when $S$ is an $n$-partite system, consisting of subsystems $S_1,S_2,\ldots,S_n$. The physical Hamiltonian and the noise channel are both local, given by 
\begin{gather}
H_S = \sum_{k=1}^n H_{S_k},\quad
\mN_S = \bigotimes_{k=1}^n \mN_{S_k},\quad 
\mN_{S_k}(\cdot) = \sum_{i=1}^{r_k} K_{S_k,i} (\cdot) K_{S_k,i}^\dagger. 
\end{gather}

In general, it takes exponential time (in the number of subsystems) to solve our lower bounds on the code infidelity. However, when the Hamiltonians and the noises are local, using the additivity of channel QFIs, we could directly calculate the lower bounds, requiring only computation of the channel QFIs in each subsystem. To be specific, for $\epsilon$-correctable codes under $\mN_{S}$, \thmref{thm:SLD} indicates that when 
\begin{equation}
\label{eq:cond-1-local}
H_{S_k} \in {\rm span}\{K_{S_k,i}^\dagger K_{S_k,j},\forall i,j\},\quad \forall k, 
\end{equation}
we have
\begin{equation}
\label{eq:bound-1-local}
\epsilon \geq \ell_1\left(\frac{(\Delta H_L)^2}{4 \sum_{k=1}^n F_\txs^\reg(\mN_{S_k},H_{S_k})}\right).
\end{equation}

Instead of finding bounds for local noise channels $\mN_S$ with certain noise rates, we sometimes are more interested in the capability of a code to correct single errors (each described by $\mM_{S_k}$). Consider the single-error noise channel 
\begin{equation}
\mM_S = \sum_{k=1}^n q_k \mM_{S_k}, \quad \sum_{k=1}^n q_k = 1,
\end{equation}
where $q_k$ is the probability that an error $\mM_{S_k}$ occurs on the $k$-th subsystem. In order to obtain lower bounds on the code infidelity under noise channels $\mM_S$, we use the following local noise channel 
\begin{equation}
\mN_S(\delta) = \bigotimes_{k=1}^n \mN_{S_k}(\delta) = \bigotimes_{k=1}^n \big((1-\delta q_k)\id + \delta q_k \mM_{S_k}\big)
= (1-\delta)\id + \delta \sum_{k=1}^n q_k \mM_{S_k} + O(\delta^2),
\end{equation}
whose local noise rates are proportional to a small positive parameter $\delta$. Using the concavity of $f^2(\Phi,\id)$ with respect to the channel $\Phi$, we have 
\begin{equation}
f^2(\mR_{\StoL} \circ \mN_{S}(\delta) \circ  \mE_{\LtoS},\id_L) 
\geq (1-\delta) + \delta f^2(\mR_{\StoL} \circ \mM_{S} \circ  \mE_{\LtoS},\id_L) + O(\delta^2).
\end{equation}
Taking the limit $\delta \rightarrow 0^+$, we must have $\epsilon(\mM_S,\mE_{\LtoS}) \geq \liminf_{\delta \rightarrow 0^+} \frac{1}{\delta}\cdot{\epsilon(\mN_S(\delta),\mE_{\LtoS})}$. Therefore, for $\epsilon$-correctable codes under single-error noise channels $\mM_{S}$, \thmref{thm:SLD} indicates that when \eqref{eq:cond-1-local} is satisfied,
\begin{equation}
\label{eq:single-1}
\epsilon \geq 
\liminf_{\delta \rightarrow 0^+} \,\ell_1\left(\frac{(\Delta H_L)^2}{4 \delta \sum_{k=1}^n F_\txs^\reg(\mN_{S_k}(\delta),H_{S_k})}\right).
\end{equation} 
Note that discussions here analogously apply to 
\thmref{lemma:coh} due to the additivity of the channel RLD QFI, although we will only focus on \thmref{thm:SLD} starting now since it provides the tightest bound in the following scenarios.

For generic error models with noises on multiple subsystems, e.g., the random phase error model in \cite{woods2019continuous}, the regularized SLD QFI and thus the lower bound on the code infidelity could be difficult to compute. In \appref{app:multi-error}, we derive an upper bound on the regularized SLD QFI for multi-error noise channels which is efficiently computable in certain scenarios and provide an example of multiple erasure errors and local Hamiltonians.

\subsection{Erasure noise}

Now we present our bounds for the local erasure noise channel $\mN^{\mathrm e}(\rho) = (1-p)\rho + p \ket{\vac}\bra{\vac}$ on each subsystem. Here $p$ is the noise rate and we use the vacuum state $\ket{\vac}$ to represent the state of the erased subsystems. The Kraus operators for $\mN^{\mathrm e}$ are
\begin{equation}
\label{eq:Kraus-erasure}
K_{1} = \sqrt{1-p}\id, \quad K_{i+1} = \sqrt{p}\ket{\vac}\bra{i}, \quad \forall 1 \leq i \leq d. 
\end{equation}
Different subsystems can have different noise rates $p_k$ and dimensions $d_k$. 
As derived in \appref{app:QFI}, the regularized SLD QFI for erasure noise is 
\begin{equation}
\label{eq:QFI-erasure}
F_\txs^\reg(\mN^{\mathrm e},H) = (\Delta H)^2  \frac{1-p}{p}. 
\end{equation}
For $\epsilon$-correctable codes under local erasure noise channel $\mN^{\mathrm e}_S = \bigotimes_{k=1}^n \mN^{\mathrm e}_{S_k}$, we have 
\begin{equation}
\label{eq:eb-1}
\epsilon %\cdot \frac{1 - \epsilon}{(1 - 2\epsilon)^2} 
\geq \ell_1\left(\frac{(\Delta H_L)^2}{4\sum_{k=1}^n \frac{1-p_k}{p_k}(\Delta H_{S_k})^2}\right),
\end{equation}
using \eqref{eq:bound-1-local}. For $\epsilon$-correctable codes under single-error erasure noise channel $\mM^{\mathrm e}_S = \sum_{k=1}^n q_k \mM^{\mathrm e}_{S_k}$ where $\mM^{\mathrm e}_{S_k}(\rho_{S_k}) = \ket{\vac}\bra{\vac}_{S_k}$,
\begin{equation}
\label{eq:eb-2}
\epsilon %\cdot \frac{1 - \epsilon}{(1 - 2\epsilon)^2} 
\geq \ell_1\left(\frac{(\Delta H_L)^2}{4\sum_{k=1}^n \frac{1}{q_k}(\Delta H_{S_k})^2}\right),
\end{equation}
using \eqref{eq:single-1}. In particular, when the probability of erasure is uniform on each subsystem, i.e., $q_k = \frac{1}{n}$, we have 
\begin{equation}
\label{eq:eb-3}
\epsilon %\cdot \frac{1 - \epsilon}{(1 - 2\epsilon)^2} 
\geq \ell_1\left(\frac{(\Delta H_L)^2}{4n\sum_{k=1}^n(\Delta H_{S_k})^2}\right) = \frac{(\Delta H_L)^2}{4n\sum_{k=1}^n(\Delta H_{S_k})^2} + O\left(\left(\frac{(\Delta H_L)^2}{4n\sum_{k=1}^n(\Delta H_{S_k})^2}\right)^2\right).
\end{equation}
As a comparison, Theorem 1 in~\cite{faist2019continuous} says that 
\begin{equation}
\epsilon \geq \frac{(\Delta H_L)^2}{4n^2  \max_k (\Delta H_{S_k})^2}.
\end{equation}
Our bound \eqref{eq:eb-3} has a clear advantage in the small infidelity limit by improving the maximum of $\Delta H_{S_k}$ to their quadratic mean. A direct implication of \eqref{eq:eb-3} is an improved approximate Eastin--Knill theorem which establishes the infidelity lower bound for covariant codes with respect to special unitary groups (see \appref{app:ek}).

\subsection{Depolarizing noise}

Next, we present our bounds for local depolarizing noise channel $\mN^{\mathrm d}(\rho) = (1-p)\rho + p \frac{\id}{d}$ on each subsystem, which has not been studied before. Again, we assume different subsystems can have different noise rates $p_k$ and dimensions $d_k$. The Kraus operators for $\mN^{\mathrm d}$ are 
\begin{equation}
K_{1} = \sqrt{1 - \frac{d^2 - 1}{d^2} p}\id,\quad K_{i} = \sqrt{\frac{p}{d^2}} U_{i-1},\quad \forall 2 \leq i \leq d^2, 
\end{equation}
where 
$\{U_0 = \id,U_1,\ldots,U_{d_k^2 - 1}\}$ is a unitary orthonormal basis in $\bC^{d \times d}$. 

In order to apply \thmref{thm:SLD}, we need to solve the following SDP
\begin{equation}
F_\txs^\reg(\mN^{\mathrm d},H) = \min_{h:\beta = 0}4\norm{\alpha},
\end{equation}
where $\beta = \vK^\dagger h \vK - H$ and $\alpha = \vK^\dagger h^2 \vK - H^2$.

When $d = 2$, as shown in \appref{app:QFI}, we have 
$
F_\txs^\reg(\mN^{\mathrm d},H) =  (\Delta H)^2 \frac{2(1-p)^2}{p(3-2p)}. 
$
When all subsystems are qubits, for $\epsilon$-correctable codes under local depolarizing noise channels $\mN^{\mathrm d}_S = \bigotimes_{k=1}^n \mN^{\mathrm d}_{S_k}$,
\begin{equation}
\epsilon %\cdot \frac{1-\epsilon}{(1-2\epsilon)^2} 
\geq \ell_1\left(\frac{(\Delta H_L)^2}{4 \sum_{k=1}^n \frac{2(1-p_{k})^2}{p_k(3-2p_k)} (\Delta H_{S_k})^2}\right),
\end{equation}
using \eqref{eq:bound-1-local} and for $\epsilon$-correctable codes under single-error depolarizing noise channels $\mM^{\mathrm d}_S = \sum_{k=1}^n q_k \mM^{\mathrm d}_{S_k}$ where $\mM^{\mathrm d}_{S_k}(\rho_{S_k}) = \frac{\id}{2}$, 
\begin{equation}
\label{eq:bound-depolarizing-1}
\epsilon %\cdot \frac{1-\epsilon}{(1-2\epsilon)^2} 
\geq \ell_1\left(\frac{3 (\Delta H_L)^2}{8 \sum_{k=1}^n \frac{1}{q_k} (\Delta H_{S_k})^2}\right), 
\end{equation}
using \eqref{eq:single-1}. 

The situation is more complicated when $d > 2$, because the regularized SLD QFI may not have a closed-form expression. Instead, we show in \appref{app:SLD-depolarizing} that 
\begin{equation}
\label{eq:SLD-depolarizing}
F^\reg_\txs(\mN^{\mathrm d},H) \!\leq\! (\Delta H)^2 \frac{(1-p)^2}{p(1 \!+\! \frac{2}{d^2} \!-\! p)} \!\leq\! (\Delta H)^2 \frac{1-p}{p},
\end{equation}
by choosing a special $h$ which satisfies $\beta = 0$ to calculate an upper bound on $4\min_{h:\beta=0}\norm{\alpha}$. Note that the right-hand side of \eqref{eq:SLD-depolarizing} is equal to the regularized SLD QFI for erasure channels \eqref{eq:QFI-erasure}. We conclude that \eqsref{eq:eb-1}{eq:eb-3} hold true for general depolarizing channels as well, regardless of the dimensions of subsystems.
We also remark that the upper bound on the regularized SLD QFI for depolarizing channels we derived here might be of independent interest in quantum metrology. 

Note that the channel RLD QFI $F_\txr(\mN_{S_k}^{\mathrm d},H_{S_k})$ also upper bounds $F^\reg_\txs(\mN_{S_k}^{\mathrm d},H_{S_k})$ and has a closed-form expression.  
The RLD QFI $F_\txr(\mN^{\mathrm d},H)$ for depolarizing channels could be directly calculated using 
\begin{equation}
F_\txr(\mN^{\mathrm d},H) = \big\|\trace_{S(\mN^{\mathrm d})}\big(\Gamma^{\mN^{\mathrm d},H}(\Gamma^{\mN^{\mathrm d}})^{-1}\Gamma^{\mN^{\mathrm d},H}\big)\big\|,
\end{equation}
where $\Gamma^{\mN^{\mathrm d}} = (\mN^{\mathrm d}\otimes\id)\Gamma$, $\Gamma^{\mN^{\mathrm d},H} = (\mN^{\mathrm d}\otimes\id)(\ket{H}\bra{\Gamma} - \ket{\Gamma}\bra{H})$ and $\ket{H} = (H\otimes \id)\ket{\Gamma}$. 
Then 
\begin{gather}
F_\txr(\mN^{\mathrm d},H) = \frac{(1-p)^2}{4(1 - \frac{d^2 - 1}{d^2} p)} (\Delta H)^2
+ \frac{d(1-p)^2}{p} \trace(H^2),
\end{gather}
where the second term increases linearly with respect to $d$, meaning the RLD QFI only provides a close bound for a small subsystem dimension. 

\section{Example: Thermodynamic codes}

Finally, we provide an example that saturates the lower bound for single-error erasure noise channels 
in the small infidelity limit 
and matches the scaling of the lower bound for single-error depolarizing noise channels,
while previously only the scaling optimality for erasure channels was demonstrated~\cite{faist2019continuous}.

We consider the following two-dimensional thermodynamic code~\cite{brandao2019quantum,faist2019continuous,ouyang2019robust}
\begin{gather}
\mE_{\LtoS}(\ket{0_L}) = \ket{g_0} = \ket{m_n}, 
\\
\mE_{\LtoS}(\ket{1_L}) = \ket{g_1} = \ket{(-m)_n}, 
\end{gather} 
where
\begin{equation}
\ket{(\pm m)_n} = \binom{n}{\frac{n\pm m}{2}}^{-\frac{1}{2}}\sum_{\vj:\sum_k j_k = \pm m} \ket{\vj}, 
\end{equation}
and $\vj = (j_1,j_2,\ldots,j_n) \in \{-1,1\}^n$. The logical subspace is spanned by two Dicke states with different values of the total angular momentum along the z-axis. We also assume $n+m$ is an even number and $3 \leq m \ll N$. 
It is a covariant code whose physical and logical Hamiltonians are
\begin{equation}
H_S = \sum_{k=1}^n (\sigma_z)_{S_k},\quad 
H_L = m Z_L,
\end{equation}
where $\sigma_z = \ket{1}\bra{1} - \ket{-1}\bra{-1}$. 

Let $\ket{g^{(k)}_{0,\pm 1}} = \ket{(m\pm 1)_{n-1}}_{S\backslash S_k}\ket{\vac}_{S_k}$, $\ket{g^{(k)}_{1,\pm 1}} = \ket{(- m\pm 1)_{n-1}}_{S\backslash S_k}\ket{\vac}_{S_k}$, which represent the logical states after an erasure error occurs on $S_k$, and $\Pi^\perp$ be the projector onto the orthogonal subspace of ${\rm span}\{\ket{g^{(k)}_{0,\pm 1}},\ket{g^{(k)}_{1,\pm 1}},\forall k\}$. Consider the erasure noise channel $\mM_S = \frac{1}{n}\sum_{k=1}^n \mM_{S_k}$ where $\mM_{S_k}(\rho_{S_k}) = \ket{\vac}\bra{\vac}_{S_k} $ and the recovery channel
\begin{equation}
\mR_{\StoL}(\rho_S) = \sum_{k=1}^n \sum_{i,i'=0}^1 \sum_{j=\pm 1}  \ket{g_i}\bra{g^{(k)}_{i,j}}\rho_S\ket{g^{(k)}_{i',j}}\bra{g_{i'}} + \trace(\Pi^\perp \rho_S \Pi^\perp) \ket{g_0}\bra{g_0},  
\end{equation}
which maps the state $\ket{g^{(k)}_{i,\pm 1}}$ to $\ket{g_{i}}$ for all $k$. 
Then we could verify that $\mR_{\StoL} \circ \mM_{S} \circ \mE_{\LtoS} = \mD_{p,0}$ with $p = \frac{1}{2}\big( 1 - \sqrt{1 - \frac{m^2}{n^2}}\big)$. 
Using the relation between the noise rate $p$ and the worst-case entanglement fidelity of a dephasing channel (see \appref{app:dephasing-fidelity}), we must have 
\begin{align}
\epsilon(\mM_S,\mE_{\LtoS}) &\leq 1 - f^2(\mR_{\StoL} \circ \mM_{S} \circ \mE_{\LtoS},\id_L) \\ &= \frac{1}{2}\bigg( 1 - \sqrt{1 - \frac{m^2}{n^2}}\bigg) 
 = \frac{m^2}{4n^2} + O\bigg(\frac{m^4}{n^4}\bigg).
\end{align}
On the other hand, the lower bound (\eqref{eq:eb-3}) for $\epsilon = \epsilon(\mM_S,\mE_{\LtoS})$ is given by 
\begin{equation}
\epsilon %\cdot \frac{1-\epsilon}{(1-2\epsilon)^2} 
\geq \ell_1\left(\frac{m^2}{4n^2}\right) = \frac{m^2}{4n^2} + O\left(\frac{m^4}{n^4}\right),
\end{equation}
which is saturated asymptotically when $m/N \rightarrow 0$. 

Next, we consider the single-error depolarizing noise channel $\mM_S = \frac{1}{n}\sum_{k=1}^n \mM_{S_k}$ where $\mM_{S_k}(\rho_{S_k}) = \frac{\id}{2}$. It is in general difficult to write down the optimal recovery map explicitly. Instead, in order to calculate $\epsilon(\mM_S,\mE_{\LtoS})$, we apply Corollary 2 in \cite{beny2010general} to calculate an upper bound on the infidelity of thermodynamic codes in the limit $m/N\rightarrow 0$ and we obtain (see details in \appref{app:infidelity})
\begin{equation}
\epsilon(\mM_S,\mE_{\LtoS}) \leq \frac{3m^2}{4n^2} + o\bigg(\frac{m^2}{n^2}\bigg), 
\end{equation}
which also matches the scaling of our lower bound for depolarizing noise channels (\eqref{eq:bound-depolarizing-1}), i.e., $\epsilon(\mM_S,\mE_{\LtoS}) \geq \frac{3m^2}{8n^2} + O\big(\frac{m^4}{n^4}\big)$.

\section{Conclusions and outlook}

In this paper, we advanced the understanding of covariant QEC by leveraging insights and techniques from quantum metrology and quantum resource theory. We first presented covariant QEC as a special type of metrological protocol where the sensitivity in parameter estimation could be linked to the code infidelity. We took inspirations from recent developments in quantum channel estimation: a no-go theorem~\cite{escher2011general,demkowicz2012elusive,demkowicz2014using,yuan2017quantum,demkowicz2017adaptive,zhou2018achieving,zhou2020theory} on the existence of perfect QEC was discovered based on a relation between Hamiltonians and noises (the HKS condition) which leads to a no-go theorem for covariant QEC; efficiently computable QFIs of quantum channels were also proposed~\cite{zhou2020theory,hayashi2011comparison,katariya2020geometric}, which leads to efficiently computable lower bounds for the code infidelity under generic noise channels. We also demonstrated how covariant QEC 
can be understood from an operational resource theory perspective, where the key insight is that there are fundamental limits on the distillation of pure coherent states using noisy ones~\cite{marvian2020coherence,FangLiu19:nogo}. 
The lower bounds we derived not only have a broad range of applications, but also improve upon previous lower bounds, which also lead to an improved approximate Eastin-Knill theorem that may be of particular interest in quantum computation.

In our metrological proof of the infidelity lower bounds, we reduced noisy quantum channels to rotated dephasing channels using one noiseless ancillary qubit. It in turn provided an entanglement-assisted metrological protocol for channel estimation which might be of independent interest in quantum metrology. One implication of it is that known covariant codes might help improve the lower bounds for the channel QFIs, in situations where they are hard to calculate. Conversely, it indicates that lower bounds on the code infidelity might be improved if a separation between the entanglement-assisted QFIs with respect to one noiseless ancillary qubit and those with respect to an unbounded ancillary system could be identified.

There are still many open questions and future directions in the study of covariant QEC. First, it is not known, whether the HKS condition, which was shown to be sufficient for the non-existence of perfect covariant QEC codes, is also necessary. There are some examples of perfect covariant QEC codes, such as the [[4,2,2]] QEC code under single-qubit erasure noise~\cite{gottesman2016quantum,faist2019continuous}, repetition codes under bit-flip noise~\cite{kessler2014quantum,arrad2014increasing,dur2014improved,reiter2017dissipative}, but it is not yet clear how to generalize those examples. On the other hand, when the HKS condition is satisfied, it would also be desirable to obtain a systematic procedure to construct covariant codes saturating the infidelity lower bounds, at least in terms of scaling~\cite{woods2019continuous}.  From the resource theory perspective, it would be interesting to investigate whether different monotones may induce other useful bounds, and whether invoking the resource theory of channels (see e.g., \cite{LiuWinter19,LiuYuan19}) approaches may lead to new insights.  It would also be important to further explore possible implications of the limitations on covariant QEC for physics, where symmetries naturally play prominent roles in a wide range of scenarios.

\bigskip
\noindent
\textbf{Note added.}  During the completion of this work, an independent work by Kubica and Demkowicz-Dobrzanski~\cite{kubica2020using} appeared on arXiv, where a lower bound on the infidelity of covariant codes was also derived using tools from quantum metrology. Note that we employed different techniques and obtained lower bounds with a quadratic advantage in the small infidelity limit over the one in~\cite{kubica2020using}. The bound in~\cite{kubica2020using} was recently improved in~\cite{yang2020covariant} by quantifying the code inaccuracy using the diamond norm. Note that our bound still performs better in the small infidelity limit because $1 - f^2(\ket{\psi},\rho) \leq \frac{1}{2}\norm{\ket{\psi}\bra{\psi} - \rho}_1$.

\section*{Acknowledgments}

We thank Victor V. Albert, Sepehr Nezami, John Preskill, Beni Yoshida, David Layden and Junyu Liu for helpful discussions. SZ and LJ acknowledge support from the ARL-CDQI (W911NF-15-2-0067), ARO (W911NF-18-1-0020, W911NF-18-1-0212), ARO MURI (W911NF-16-1-0349), AFOSR MURI (FA9550-15-1-0015, FA9550-19-1-0399), DOE (DE-SC0019406), NSF (EFMA-1640959, OMA-1936118), and the Packard Foundation (2013-39273).  ZWL is supported by Perimeter Institute for Theoretical Physics.
Research at Perimeter Institute is supported by the Government
of Canada through Industry Canada and by the Province of Ontario through the Ministry of Research and Innovation.

%\onecolumn\newpage
\appendix

\renewcommand{\thesection}{\Alph{section}}
\setcounter{theorem}{0}
\renewcommand{\thetheorem}{A\arabic{theorem}}

\appendix

\section{Additivity of the regularized SLD QFI}
\label{app:additivity}

Here we prove the additivity of the regularized SLD QFI:
\begin{equation}
F_\txs^{\reg}(\mN_{\theta}\otimes \tilde\mN_{\theta}) = F_\txs^{\reg}(\mN_{\theta}) + F_\txs^{\reg}(\tilde\mN_{\theta}),
\end{equation}
for arbitrary quantum channels $\mN_{\theta}$ and $\tilde\mN_{\theta}$. 

First, according to the additivity of the state QFI, we must have 
\begin{equation}
F_\txs^{\reg}(\mN_{\theta}\otimes \tilde\mN_{\theta}) \geq F_\txs^{\reg}(\mN_{\theta}) + F_\txs^{\reg}(\tilde\mN_{\theta}).
\end{equation}
Thus, we only need to prove 
\begin{equation}
F_\txs^{\reg}(\mN_{\theta}\otimes \tilde\mN_{\theta}) \leq F_\txs^{\reg}(\mN_{\theta}) + F_\txs^{\reg}(\tilde\mN_{\theta}).
\end{equation}
We use the following definition of the regularized SLD QFI~\cite{demkowicz2012elusive,demkowicz2014using,zhou2020theory} (which is equivalent to \eqref{eq:SLD-def})
\begin{equation}
\label{eq:SLD-def-1}
F_\txs^{\reg}(\mN_\theta) = 
\begin{cases}
4 \min_{\vK':\beta = 0} \norm{\alpha}, & i\sum_{i=1}^r (\ptheta K_i)^\dagger K_i \in {\rm span}\{K_i^\dagger K_j,\forall i,j\},\\
+\infty & \text{otherwise}, \\
\end{cases}
\end{equation}
where $\vK'$ is any set of Kraus operators representing $\mN_{\theta}$, $\alpha = \sum_{i=1}^r (\ptheta K'_i)^\dagger (\ptheta K'_i)$ and $\beta = i\sum_{i=1}^r (\ptheta K'_i)^\dagger K'_i$. 
Without loss of generality, assume both $F_\txs^{\reg}(\mN_{\theta})$ and $F_\txs^{\reg}(\tilde\mN_{\theta})$ are finite, i.e., $i\sum_{i=1}^r (\ptheta K_i)^\dagger K_i \in {\rm span}\{K_i^\dagger K_j,\forall i,j\}$ and $i\sum_{i=1}^{\tilde r} (\ptheta \tilde K_i)^\dagger \tilde K_i \in {\rm span}\{\tilde K_i^\dagger \tilde K_j,\forall i,j\}$. We first note that $F_\txs^{\reg}(\mN_{\theta}\otimes \tilde\mN_{\theta})$ is also finite, because
\begin{align}
i\sum_{i=1}^{r}\sum_{j=1}^{\tilde r} (\ptheta (K_i \otimes \tilde K_j))^\dagger (K_i \otimes \tilde K_j) 
&= i\sum_{i=1}^{r}(\ptheta K_i)^\dagger K_i \otimes \id
+ i\sum_{j=1}^{\tilde r} \id \otimes (\ptheta \tilde K_j)^\dagger \tilde K_j\\
&\in {\rm span}\{\id \otimes K_i^\dagger K_j,\tilde K_i^\dagger \tilde K_j \otimes \id,\forall i,j\}.
\end{align}
According to \eqref{eq:SLD-def-1}, there exists $\vK'$ and $\tilde \vK'$ such that $\beta = \tilde \beta = 0$ and 
\begin{equation}
F_\txs^\reg(\mN_\theta) = 4 \norm{\alpha},\quad 
F_\txs^\reg(\tilde \mN_\theta) = 4 \norm{\tilde \alpha}.
\end{equation}
Then $\tilde{\tilde{K}}'_{ij} = K_i' \otimes \tilde K_j'$ is a set of Kraus operators representing $\mN_\theta \otimes \tilde \mN_\theta$. 
\begin{gather}
\tilde{\tilde{\alpha}} = \sum_{i=1}^{r}\sum_{j=1}^{\tilde r} \ptheta(\tilde{\tilde{K}}_{ij})^\dagger  \ptheta(\tilde{\tilde{K}}_{ij})
= \alpha \otimes \id + \id \otimes \tilde \alpha + 2 \beta \otimes \tilde \beta = \alpha \otimes \id + \id \otimes \tilde \alpha, \\
\tilde{\tilde{\beta}} = \beta \otimes \id + \id \otimes \tilde{\beta} = 0. 
\end{gather}
Therefore $F^\reg_\txs(\mN_\theta\otimes\tilde \mN_\theta) \leq 4 \norm{\tilde{\tilde \alpha}} = 4 \norm{\alpha} + 4 \norm{\tilde \alpha} =  F_\txs^{\reg}(\mN_{\theta}) + F_\txs^{\reg}(\tilde\mN_{\theta})$.

\section{Worst-case entanglement fidelity for rotated dephasing channels}
\label{app:dephasing-fidelity}

Here we calculate the worst-case entanglement fidelity for rotated dephasing channels (\eqref{eq:dephasing}) 
\begin{equation}
\mD_{p,\phi}(\rho) = (1-p) e^{-i\frac{\phi}{2}Z} \rho e^{i\frac{\phi}{2}Z} + p e^{-i\frac{\phi}{2}Z} Z \rho Z e^{i\frac{\phi}{2}Z}.
\end{equation}
We use the following formula for the worst-case entanglement fidelity~\cite{schumacher1996sending}:
\begin{equation}
f^2(\mD_{p,\phi},\id) = \min_{\ket{\psi}} \bra{\psi}(\mD_{p,\phi}\otimes \id)(\ket{\psi}\bra{\psi})\ket{\psi}.
\end{equation}
Let $\ket{\psi} = \alpha_{00}\ket{00} + \alpha_{01}\ket{01} + \alpha_{10}\ket{10} + \alpha_{11}\ket{11}$, then 
\begin{multline}
(\mD_{p,\phi}\otimes \id)(\ket{\psi}\bra{\psi})
= \\
\begin{pmatrix}
\alpha_{00}\alpha_{00}^* & \alpha_{00}\alpha_{01}^* & (1-2p)e^{-i\phi}\alpha_{00}\alpha_{10}^* & (1-2p)e^{-i\phi} \alpha_{00}\alpha_{11}^* \\
\alpha_{00}\alpha_{01}^* & \alpha_{01}\alpha_{01}^* & (1-2p)e^{-i\phi}\alpha_{01}\alpha_{10}^* & (1-2p)e^{-i\phi} \alpha_{01}\alpha_{11}^* \\
(1-2p)e^{i\phi}\alpha_{10}\alpha_{00}^* & (1-2p)e^{i\phi}\alpha_{10}\alpha_{01}^* & \alpha_{10}\alpha_{10}^* & \alpha_{10}\alpha_{11}^* \\
(1-2p)e^{i\phi}\alpha_{11}\alpha_{00}^* & (1-2p)e^{i\phi}\alpha_{11}\alpha_{01}^* & \alpha_{11}\alpha_{10}^* & \alpha_{11}\alpha_{11}^* \\
\end{pmatrix}. 
\end{multline}
Then 
\begin{equation}
\begin{split}
1 - f^2(\mD_{p,\phi},\id) 
&= \max_{\alpha_{00,01,10,11}} 2\Re[(1 - (1-2p)e^{-i\phi})] (\abs{\alpha_{00}}^2+\abs{\alpha_{01}}^2)(\abs{\alpha_{10}}^2+\abs{\alpha_{11}}^2) \\
&= \frac{1}{2}(1 - (1-2p)\cos\phi) \geq p. 
\end{split}
\end{equation}

\section{An upper bound on the regularized SLD QFI for multi-error noise channels}
\label{app:multi-error}

Consider the following types of noise channels and physical Hamiltonians: 
\begin{equation}
\mM_S = \sum_{\chi \in \scrX} q_{\chi} \mM_\chi,
\quad 
H_S = \sum_{\chi \in \scrX} H_\chi. 
\end{equation}
where $q_\chi > 0$, $\sum_{\chi \in \scrX}^n q_\chi = 1$, $\scrX$ is a collection of sets of subsystems and $\mM_\chi$ and $H_\chi$ act on the corresponding sets of subsystems. For example, in the single-error case, $\scrX$ is the collection of all local subsystems $\{\{S_1\},\{S_2\},\ldots,\{S_n\}\}$.  %$\mM_{\chi}$ is not necessarily a local noise channel acting on the subsystem $S_k$. We still assume that $H_S = \sum_{k=1}^n H_{S_k}$. 

Assume that the HKS condition is satisfied for each $\chi$, i.e., $H_\chi \in {\rm span}\{K_{\chi,i}^\dagger K_{\chi,j},\forall i,j\}$ and that the set of Hamiltonians $\{H_\chi\}_{\chi \in \scrX}$ commute pairwise, i.e., $[H_\chi,H_{\chi'}] = 0$ for all $\chi,\chi' \in \scrX$. 
We then derive an upper bound of $F_\txs^\reg(\mM_S,H_S) = 4 \min_{h:\beta_S=0}\norm{\alpha_S}$, where 
\begin{equation}
\alpha_S 
= \vK_S^\dagger h^2\vK_S - H_S^2,
\text{~~and~~}
\beta_S = \vK_S^\dagger h \vK_S - H_S. 
\end{equation} 
When the HKS condition is satisfied for each $\chi$, to derive an upper bound on $F_\scrS^\reg(\mM_S,H_S)$, we can restrict $h$ to be a block diagonal matrix when partitioning the indices of Kraus operators according to $\scrX$. Then 
\begin{equation}
\alpha_S 
= \sum_\chi q_\chi \vK_\chi^\dagger h_\chi^2 \vK_\chi - \Big(\sum_\chi H_\chi\Big)^2,
\text{~~and~~}
\beta_S = \sum_\chi q_\chi \vK_\chi^\dagger h_\chi \vK_\chi - H_\chi. 
\end{equation} 
Let 
%$\alpha_\chi = q_\chi \vK_\chi^\dagger h_\chi^2 \vK_\chi - H_\chi$ and 
$\beta_\chi = q_\chi \vK_\chi^\dagger h_\chi \vK_\chi - H_\chi$. Then 
\begin{equation}
\label{eq:upper-multi}
\begin{split}
F_\txs^\reg(\mM_S,H_S) 
&= 4 \min_{h:\beta_S=0}\norm{\alpha_S} \leq 4 \min_{h_\chi:\beta_\chi=0}\norm{\alpha_S} \\&\leq 4 \min_{h_\chi:\beta_\chi=0}\norm{\sum_\chi q_\chi \vK_\chi^\dagger h_\chi^2 \vK_\chi}
\leq 4 \sum_{\chi \in \scrX} \min_{h_\chi:\beta_\chi=0} \norm{ q_\chi \vK_\chi^\dagger h_\chi^2 \vK_\chi} .
%\\&\leq 4 \min_{h_\chi:\beta_\chi = 0}\sum_{\chi\in\scrX}\norm{\alpha_\chi} + 4\left(\max_\chi \frac{\norm{H_\chi}}{q_\chi}\right)^2 + \left(\max_\chi \frac{8 \norm{H_\chi}}{q_\chi}\right)\bigg(\sum_\chi \norm{H_\chi}\bigg). 
\end{split}
\end{equation}

% Next we show that 
% \begin{equation}
% \label{eq:upper-multi}
% \begin{split}
% F_\txs^\reg(\mM_S,H_S) 
% &= 4 \min_{h:\beta_S=0}\norm{\alpha_S} \\
% &\leq 4 \min_{h_\chi:\beta_\chi = 0}\sum_{\chi\in\scrX}\norm{\alpha_\chi} + 4\left(\max_\chi \frac{\norm{H_\chi}}{q_\chi}\right)^2 + \left(\max_\chi \frac{8 \norm{H_\chi}}{q_\chi}\right)\bigg(\sum_\chi \norm{H_\chi}\bigg). 
% \end{split}
% \end{equation}
% Suppose $h_\chi$ satisfies $\beta_\chi = 0$ and let $a = \max_\chi \norm{H_\chi}/q_\chi$, then 
% \begin{equation}
% \begin{split}
% \alpha_S 
% &= \sum_\chi q_\chi \vK_\chi^\dagger (h_\chi + a  \id)^2 \vK_\chi - \Big(\sum_\chi H_\chi + a \id\Big)^2\\
% &= \sum_\chi q_\chi \vK_\chi^\dagger (h_\chi + a  \id)^2 \vK_\chi - \Big(\sum_\chi (H_\chi + a q_\chi \id)\Big)^2\\
% &\leq \sum_\chi q_\chi \vK_\chi^\dagger (h_\chi + a  \id)^2 \vK_\chi - \sum_\chi (H_\chi + a q_\chi \id)^2\\
% &= \sum_\chi q_\chi \vK_\chi^\dagger (h_\chi + a  \id)^2 \vK_\chi - (H_\chi + a q_\chi \id)^2\\
% &\leq \sum_\chi q_\chi \vK_\chi^\dagger h_\chi^2 \vK_\chi - H_\chi^2 + 2 a \sum_\chi (1 - q_\chi) H_\chi
% + a^2, 
% %&= \sum_\chi q_\chi \vK_\chi^\dagger (h_\chi^2 + 2ah_\chi + a^2) \vK_\chi - \Big(\sum_\chi H_\chi\Big)^2 - 2a \sum_\chi H_\chi - a^2\\
% %&= \alpha_S + \sum_\chi q_\chi \vK_\chi^\dagger (2ah_\chi + a^2) \vK_\chi  - 2a \sum_\chi H_\chi - a^2
% \end{split}
% \end{equation}
% where we use the fact that $AB+BA$ is positive  semidefinite when $A$ and $B$ are positive semidefinite and commute with each other and for any Hermitian matrices $A$ and $B$, ``$A \leq B$'' means $B-A$ is a positive semidefinite matrix. 

The upper bound on the right-hand side of \eqref{eq:upper-multi} is efficiently computable when the size of each set $\chi$ is small. However, given $\scrX$ and $H_S$, it is in general not clear how to decompose $H_S$ into $\sum_{\chi \in \scrX} H_\chi$ in order to attain the optimal upper bound, and the upper bound might not be tight. 

Now we provide an example, where we compute the upper bound of the regularized SLD QFI for $t$ erasure errors and identical local Hamiltonians $H$. We have 
\begin{equation}
\mM_S = \sum_{\chi \in \scrX} q_\chi \mM_\chi, \quad H_S = \sum_{k=1}^n H_{S_k} = \sum_{\chi \in \scrX} H_\chi. 
\end{equation}
where $\scrX$ is the collection of all size-$t$ sets of subsystems, $\mM_\chi$ is the completely erasure channel on the corresponding set, $q_\chi = 1/\binom{n}{t}$ for all $\chi$, $H_\chi = \sum_{i=1}^t H_{\chi_i}/\binom{n-1}{t-1}$ and $H_{S_k}$ is identical for each subsystem which we denote by $H$. 
Suppose $\beta_\chi = q_\chi  \vK_\chi^\dagger h_\chi \vK_\chi - H_\chi$, where $K_{\chi,(i_1\cdots i_t)} = (\ket{\vac^{\otimes t}}_\chi\bra{i_1\cdots i_t}_\chi)$. Then $h_\chi = H_\chi/q_\chi$ and %$\alpha_\chi = q_\chi \vK_\chi^\dagger h_\chi^2 \vK_\chi - H_\chi^2 = H_\chi^2/q_\chi - H_\chi^2$. 
$q_\chi \vK_\chi^\dagger h_\chi^2 \vK_\chi = H_\chi^2/q_\chi$. 
We have 
% \begin{equation}
% \begin{split}
% F_\txs^\reg(\mM_S,H_S) 
% &\leq 4 \min_{h_\chi:\beta_\chi = 0}\sum_{\chi\in\scrX}\norm{\alpha_\chi} + 4\left(\max_\chi \frac{\norm{H_\chi}}{q_\chi}\right)^2 + \left(\max_\chi \frac{8 \norm{H_\chi}}{q_\chi}\right)\bigg(\sum_\chi \norm{H_\chi}\bigg) \\ 
% &= 4 \norm{H}^2 \left( \binom{n}{t}\left(\binom{n}{t}-1\right)\frac{t^2}{\binom{n-1}{t-1}^2} +\binom{n}{t}^2 \frac{t^2}{\binom{n-1}{t-1}^2} + \binom{n}{t} \left(\frac{2t}{\binom{n-1}{t-1}} \right)n 
% \right)\\
% &\leq 16\norm{H}^2 n^2 = 4(\Delta H)^2n^2, 
% \end{split}
% \end{equation}
\begin{equation}
\begin{split}
F_\txs^\reg(\mM_S,H_S) 
&\leq
4 \sum_{\chi \in \scrX} \min_{h_\chi:\beta_\chi=0} \norm{ q_\chi \vK_\chi^\dagger h_\chi^2 \vK_\chi} 
 \\ 
&= 4 \norm{H}^2  \binom{n}{t}^2\frac{t^2}{\binom{n-1}{t-1}^2} = 4\norm{H}^2 n^2 = (\Delta H)^2n^2, 
\end{split}
\end{equation}
It implies that
\begin{equation}
\epsilon \geq \ell_1\left(\frac{(\Delta H_L)^2}{4n^2(\Delta H)^2}\right),
\end{equation}
for arbitrary $t$, which matches our bound (\eqref{eq:eb-3}) in the $t = 1$ case. It is unclear though, whether the bound is tight. Note that a similar bound for $t$ erasure errors were derived in~\cite{yang2020covariant}.

\section{Regularized SLD QFI for erasure and single-qubit depolarizing channels}
\label{app:QFI}

Here we calculate the SLD QFI for erasure and depolarizing channels. We first calculate $F_\txs^\reg(\mN^{\mathrm e},H)$ where $\mN^{\mathrm e} = (1-p)\rho + p\ket{\vac}\bra{\vac}$. Using the Kraus operators in \eqref{eq:Kraus-erasure}, 
\begin{equation}
\beta = \vK^\dagger h \vK - H ~~\Leftrightarrow~~
h = 
\begin{pmatrix}
\frac{h_{11}}{1-p} & 0 \\
0 & \frac{H - h_{11}\id}{p}\\
\end{pmatrix}. 
\end{equation}
Then 
\begin{gather}
\alpha = \vK^\dagger h^2 \vK - H^2 =  \frac{h_{11}^2}{1-p} + \frac{(H - h_{11}\id)^2}{p} - H^2 = \frac{1-p}{p}H^2 - \frac{2h_{11}}{p} H + \frac{h_{11}^2}{p(1-p)},\\
\begin{split}
F_{\txs}^\reg(\mN^{\mathrm e},H) 
&= 4 \min_{h_{11}}\norm{\alpha}
= 4 \max_{\rho}\min_{h_{11}}\trace(\rho\alpha) \\
&= 4 \max_{\rho} \frac{1-p}{p} \big( \trace(H^2\rho) - \trace(\rho H)^2 \big) = \frac{1-p}{p} (\Delta H)^2,  
\end{split}
\end{gather}
where we use the minimax theorem~\cite{komiya1988elementary,do2001introduction} in the second step. 

We use the formula in Sec.~VII(A) in~\cite{zhou2020theory} to calculate the regularized SLD QFI $F_\txs^\reg(\mN^{\mathrm d},H)$ for single-qubit depolarizing channels $\mN^{\mathrm d}(\rho) = (1-p) \rho + p \frac{\id}{2}$. 
\begin{equation}
F^\reg_\txs(\mN^{\mathrm d},H) = (\Delta H)^2 \frac{1-w}{w},
\end{equation}
where $w = 4 \left(\frac{y^2}{2y} + \frac{xy}{x+y}\right)$ with $x = 1 - \frac{3}{4}p$ and $y = \frac{p}{4}$. Then $
F^\reg_\txs(\mN^{\mathrm d},H) = (\Delta H)^2 \frac{2(1-p)^2}{p(3-2p)}$.

\section{Regularized SLD QFI for general depolarizing channels}
\label{app:SLD-depolarizing}

Here we prove an upper bound on $F_\txs^\reg(\mN^{\mathrm d},H)$ for general depolarizing channels $\mN^{\mathrm d}(\rho) = (1-p) \rho + p \frac{\id}{d}$ with the Kraus operators
\begin{equation}
K_1 = \sqrt{x}\id,\quad K_i=\sqrt{y}U_{i-1},\forall 2 \leq i \leq d^2, 
\end{equation}
where we define $x = 1 - \frac{d^2 - 1}{d^2} p$, $y = \frac{1}{d^2}p$. 

Any $\tilde h$ satisfying $\tilde \beta = \vK^\dagger \tilde h \vK - H = 0$ provides an upper bound on $F_\txs^\reg(\mN^{\mathrm d},H)$ through
\begin{equation}
F_\txs^\reg(\mN^{\mathrm d},H) = 4 \min_{h:\beta = 0}\norm{\alpha} \leq 4 \norm{\alpha}|_{h=\tilde h}.
\end{equation}
To find a suitable $\tilde h$ which provides a good upper bound on $F_\txs^\reg(\mN^{\mathrm d},H)$, we use $\tilde h$ which is the solution of 
\begin{equation}
\label{eq:sdp}
4 \min_{h:\beta = 0} \trace(\alpha). 
\end{equation}
The solution of \eqref{eq:sdp} is 
\begin{equation}
\tilde h =\frac{1}{2 z d} 
\begin{pmatrix}
0 & \frac{\sqrt{xy}}{x+y} \trace(H U_1^\dagger U_0) & \cdots &\frac{\sqrt{xy}}{x+y} \trace(H U_{d^2-1}^\dagger U_0) \\
\frac{\sqrt{xy}}{x+y} \trace(H U_0^\dagger U_1) & 0 & \cdots & \frac{1}{2} \trace(H U_{d^2-1}^\dagger U_1)\\
\vdots & \vdots & \ddots & \vdots \\
\frac{\sqrt{xy}}{x+y} \trace(H U_0^\dagger U_{d^2-1}) & \frac{1}{2} \trace(H U_1^\dagger U_{d^2-1}) & \cdots & 0\\ 
\end{pmatrix},
\end{equation}
where $z = \frac{x y}{x+y} + \frac{y(d^2-2)}{4}$ and we used the assumption $\trace(H) = 0$ and %One could verify that $\tilde \beta = 0$ is satisfied. Meanwhile,
\begin{equation}
\vK^\dagger \tilde{h}^2 \vK = \bigg( \frac{1}{4 z} - \frac{y}{4z^2}\bigg(\frac{1}{4} - \frac{xy}{(x+y)^2}\bigg) - 1\bigg) H^2 + \frac{y}{4 z^2 d} \bigg( \frac{x}{x+y} - \frac{1}{2}  \bigg)^2 \trace(H^2) \id. 
\end{equation}
Using $\|H^2\| = \frac{(\Delta H)^2}{4}$ and $\trace(H^2) \leq \frac{d}{4} (\Delta H)^2$,
\begin{equation}
F_\txs^\reg(\mN^{\mathrm d},H) \leq 4\norm{\alpha} \leq (\Delta H)^2 \bigg( \frac{1}{4z} - 1 \bigg) = (\Delta H)^2 \frac{d^2(1-p)^2}{p(d^2 (1-p) + 2)} \leq (\Delta H)^2 \bigg( \frac{1 - p}{p} \bigg),
\end{equation}
upper bounded by the $F_\txs^\reg(\mN^{\mathrm d},H)$ for erasure channels (\eqref{eq:QFI-erasure}).

\section{Improved approximate Eastin--Knill theorem}
\label{app:ek}

Here we derive specific lower bounds on the infidelity of codes covariant with respect to unitary groups which lead to new approximate Eastin--Knill theorems, following the discussion in \cite{faist2019continuous}. 

$SU(d_L)$-covariant codes in an $n$-partite system $S$ are defined by the encoding channels $\mE_{\LtoS}$ which satisfy
\begin{equation}
\mE_{\LtoS}\big(U_L(g)(\cdot)U_L^\dagger(g)\big) = 
\bigg( \bigotimes_{k=1}^n U_{S_k}(g) \bigg) \mE_{\LtoS}(\cdot) \bigg( \bigotimes_{k=1}^n U_{S_k}^\dagger(g)\bigg),\;\forall g\in SU(d_L),
\end{equation}
where $U_{S_k}(g)$ and $U_L(g)$ are unitary representations of $SU(d_L)$. It was shown in Theorem 18 in the Supplemental Material of  \cite{faist2019continuous} that fixing $H_L = {\rm diag}(1,0,\ldots,-1)$ and letting $H_{S_k}$ be the corresponding generator acting on the subsystem $k$, we have
\begin{equation}
d_k \geq \binom{d_L - 1 + \lceil \norm{H_{S_k}} \rceil}{d_L - 1},
\end{equation}
where $\lceil\norm{H_{S_k}}\rceil$ denotes the closest integer no smaller than $\norm{H_{S_k}}$. 
Using the inequality $\binom{a+b}{a} \geq (1+\frac{b}{a})^a$, 
\begin{gather}
d_k \geq \left(\frac{d_L - 1 + \lceil \norm{H_{S_k}} \rceil}{d_L - 1}\right)^{d_L - 1}, ~~\Rightarrow~~ \left(\exp\left(\frac{\ln d_k}{d_L - 1}\right) - 1\right)(d_L - 1) \geq \norm{H_{S_k}},\\
\Rightarrow~~
\sum_{k=1}^n \Big(\exp\left(\frac{\ln d_k}{d_L - 1}\right) - 1\Big)^2 (d_L - 1)^2\geq \frac{1}{4}\sum_k (\Delta H_{S_k})^2. 
\end{gather}
Then using \eqref{eq:eb-3}, we have for any $\epsilon \geq \epsilon(\mM_S,\mE_{\LtoS})$,
\begin{equation}
\epsilon %\cdot \frac{1 - \epsilon}{(1 - 2\epsilon)^2} 
\geq  \ell_1\left(\frac{1}{4n\sum_{k=1}^n \big(\exp\big(\frac{\ln d_k}{d_L - 1}\big) - 1\big)^2 (d_L - 1)^2}\right).
\end{equation}
For large $d_L = \Omega(\ln d_k)$, 
\begin{equation}
\begin{split}
\epsilon %\cdot \frac{1 - \epsilon}{(1 - 2\epsilon)^2} 
&\geq  \ell_1\left(\frac{1}{4n \sum_{k=1}^n (\ln d_k)^2} + O\left(\frac{1}{n d_L \sum_{k=1}^n \ln d_k}\right)\right) \\ &= \frac{1}{4n \sum_{k=1}^n (\ln d_k)^2} + O\left(\frac{1}{n d_L \sum_{k=1}^n \ln d_k} + \frac{1}{n^2 \left(\sum_{k=1}^n (\ln d_k)^2\right)^2} \right).
\end{split}
\end{equation}
Compared to Theorem 4 in \cite{faist2019continuous}: 
\begin{equation}
\epsilon \geq \left( \frac{1}{2n \max_k \ln d_k}+ O\left(\frac{1}{n d_L}\right) \right)^2 = \frac{1}{4n^2 (\max_k \ln d_k)^2} + O\left(\frac{1}{n^2 d_L \max_k \ln d_k}\right), 
\end{equation}
our bound improves the maximum of $\ln d_k$ in the denominator to their quadratic mean. 
Moreover, it works for not only single-error erasure noise channel $\mM_S = \sum_{k=1}^n \frac{1}{n} \mM_{S_k}$ where $\mM_{S_k}(\cdot) = \ket{\vac}\bra{\vac}_{S_k}$, but also single-error depolarizing noise channel $\mM_S = \sum_{k=1}^n \frac{1}{n} \mM_{S_k}$ where $\mM_{S_k}(\cdot) = \frac{\id}{d_k}$.

\section{Infidelity of thermodynamic codes under depolarizing noise}
\label{app:infidelity}
 
Here we use Corollary 2 from~\cite{beny2010general} to calculate the infidelity of thermodynamic codes under depolarizing noise channels in the limit $m/N \rightarrow 0$: 
\begin{lemma}[\cite{beny2010general}]
\label{lemma:infidelity}
A code defined by its projector $P$ is $\epsilon$-correctable under a noise channel $\mM(\cdot) = \sum_{i=1}^r K_i(\cdot)K_i^\dagger$ if and only if $P K_i^\dagger K_j P = A_{ij} P + P \delta A_{ij} P$ for some $A_{ij}$ and $\delta A_{ij}$ where $A_{ij}$ are the components of a density operator, 
%and $\delta A_{ij}$ are operators in the code space, 
and $1 - f^2(\mA+\delta\mA,\mA) \leq \epsilon$ where $\mA(\rho) = \sum_{ij}A_{ij}\trace(\rho)\ket{i}\bra{j}$ and $(\mA + \delta\mA)(\rho) = \mA(\rho) + \sum_{ij}\trace(\rho  \delta A_{ij})\ket{i}\bra{j}$.  
\end{lemma}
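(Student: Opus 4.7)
The plan is to approach this via the complementary channel framework for approximate quantum error correction, which translates the question of correctability into the question of how much information the environment retains about the encoded logical state. The key insight going back to Schumacher--Westmoreland and Beny--Oreshkov is that a code is (approximately) correctable precisely when the complementary channel restricted to the code subspace is (approximately) a constant channel, and that the Bures distance to the nearest constant channel on the environment side controls the worst-case recovery infidelity on the system side.

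First, I would introduce a Stinespring dilation $V = \sum_i K_i \otimes \ket{i}_E$ of $\mM$ into an environment $E$, so that $\mM(\rho) = \trace_E(V\rho V^\dagger)$ and the complementary channel is $\mM^c(\rho) = \trace_S(V\rho V^\dagger) = \sum_{ij}\trace(K_i^\dagger K_j \rho)\ket{i}\bra{j}_E$. For any code state $\rho = P\rho P$, the hypothesized decomposition $P K_i^\dagger K_j P = A_{ij}P + P\delta A_{ij} P$ plugs in to give $\mM^c(\rho) = \sum_{ij}(A_{ij}\trace(\rho) + \trace(\rho\, \delta A_{ij}))\ket{i}\bra{j} = (\mA + \delta\mA)(\rho)$. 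Hence the restricted complementary channel is precisely $\mA + \delta\mA$, with $\mA$ a constant channel whose output density operator has matrix elements $A_{ij}$. The Knill--Laflamme condition $\delta\mA = 0$ then corresponds exactly to perfect correctability, and the decomposition isolates $\delta\mA$ as the natural perturbation parameter.

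Next I would invoke the Beny--Oreshkov duality between the forward and complementary channels: the optimal recovery fidelity of $\mM$ on the code is controlled (in fact characterized up to a known relation) by the fidelity of $\mM^c|_P$ to the best constant channel. The ``if'' direction then follows by choosing the constant channel to be $\mA$ itself and using the explicit transpose/Petz recovery with respect to the Stinespring map, yielding $1 - \max_\mR f^2(\mR\circ \mM \circ \mE_P, \id) \leq 1 - f^2(\mA + \delta\mA, \mA) \leq \epsilon$. The converse (``only if'') direction follows from the matching upper bound, which is obtained by applying the data-processing inequality to a purification of the code-projected input and tracking how the worst-case Bures fidelity propagates through the dilation.

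The main obstacle is the duality step rather than the algebraic substitution. Relating $\max_\mR f^2(\mR\circ \mM|_P, \id)$ to $f^2(\mM^c|_P, \mA)$ requires constructing an explicit near-optimal recovery (Petz-type with respect to the purification of $\mA$) and controlling the worst-case Bures fidelity across the Stinespring isometry; this is the technical heart of the argument. One must also check that $(A_{ij})$ being a density matrix is consistent with the completeness relation $\sum_i K_i^\dagger K_i = \id$ restricted to $P$, and fix a unique representative for the off-diagonal correction $\delta A_{ij}$ (e.g.~by projecting out the component proportional to $P$). These are delicate but standard points once the complementary-channel picture is in place.
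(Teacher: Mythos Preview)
The paper does not supply its own proof of this lemma; it is quoted verbatim as Corollary~2 of B\'eny--Oreshkov~\cite{beny2010general} and then applied. Your sketch via the complementary channel and the Stinespring dilation is exactly the framework of that reference: the identification $\mM^c|_P = \mA + \delta\mA$ is correct, and the equivalence between optimal recovery fidelity on the system side and the Bures fidelity of $\mM^c|_P$ to the nearest constant channel is precisely the B\'eny--Oreshkov duality (their Theorem~1/Corollary~2), so your plan matches the original proof.
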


Let $P = \ket{g_0}\bra{g_0} + \ket{g_1}\bra{g_1}$, $\mM = \mM_S$ with Kraus operators 
\begin{equation}
K_{k,i} = \frac{1}{2\sqrt{n}} (U_{i})_{S_k},\quad i=0,1,2,3,
\end{equation}
where $U_0,U_1,U_2,U_3$ are respectively $\id,\sigma_x = \ket{1}\bra{-1}+\ket{-1}\bra{1},\sigma_y = -i\ket{1}\bra{-1}+i\ket{-1}\bra{1},$ and $\sigma_z = \ket{1}\bra{1} - \ket{-1}\bra{-1}$. 

For $m\geq 3$, $\braket{g_0|E|g_1} = 0$ for any operator $E$ acting on at most two qubits.  Here we consider $\delta A_{ij} \propto (\ket{g_0}\bra{g_0} - \ket{g_1}\bra{g_1})$.
That is, let $\delta A_{ij} = B_{ij} (\ket{g_0}\bra{g_0} - \ket{g_1}\bra{g_1})$. $A$ and $B$ are $4n \times 4n$ matrices 
\begin{equation}
A = 
\begin{pmatrix}
A^{(0,0)} & A^{(0,1)} & A^{(0,2)} & A^{(0,3)}\\
A^{(1,0)} & A^{(1,1)} & A^{(1,2)} & A^{(1,3)}\\
A^{(2,0)} & A^{(2,1)} & A^{(2,2)} & A^{(2,3)}\\
A^{(3,0)} & A^{(3,1)} & A^{(3,2)} & A^{(3,3)}\\
\end{pmatrix},\quad
B = 
\begin{pmatrix}
B^{(0,0)} & B^{(0,1)} & B^{(0,2)} & B^{(0,3)}\\
B^{(1,0)} & B^{(1,1)} & B^{(1,2)} & B^{(1,3)}\\
B^{(2,0)} & B^{(2,1)} & B^{(2,2)} & B^{(2,3)}\\
B^{(3,0)} & B^{(3,1)} & B^{(3,2)} & B^{(3,3)}\\
\end{pmatrix},
\end{equation}
where 
\begin{gather}
A^{(i,j)}_{kk'} = \frac{1}{2}(\braket{g_0|K_{k,i}^\dagger K_{k',j}|g_0} + \braket{g_1|K_{k,i}^\dagger K_{k',j}|g_1}),\\   
B^{(i,j)}_{kk'} = \frac{1}{2}(\braket{g_0|K_{k,i}^\dagger K_{k',j}|g_0} - \braket{g_1|K_{k,i}^\dagger K_{k',j}|g_1}), 
\end{gather}
so that $P K_i^\dagger K_j P = A_{ij} P + P \delta A_{ij} P$ holds.

\begingroup
\allowdisplaybreaks
A detailed calculation shows that $A^{(i,j)} = 0$ when $i \neq j$, $B^{(i,j)} = 0$ when $i+j \neq 3$, and 
\begin{gather}
\label{eq:AB-1}
A^{(0,0)} = \frac{1}{4n}
\begin{pmatrix}
1 & 1 & \cdots & 1\\
1 & 1 & \cdots & 1\\
\vdots & \vdots & \ddots & \vdots\\
1 & 1 & \cdots & 1\\
\end{pmatrix},
\\
A^{(1,1)} = A^{(2,2)} = \frac{1}{4n}
\begin{pmatrix}
1 & \frac{n^2-m^2}{2n(n-1)} & \cdots & \frac{n^2-m^2}{2n(n-1)}\\
\frac{n^2-m^2}{2n(n-1)} & 1 & \cdots & \frac{n^2-m^2}{2n(n-1)}\\
\vdots & \vdots & \ddots & \vdots\\
\frac{n^2-m^2}{2n(n-1)} & \frac{n^2-m^2}{2n(n-1)} & \cdots & 1\\
\end{pmatrix} ,
\\
\label{eq:AB-2}
A^{(3,3)} = \frac{1}{4n}
\begin{pmatrix}
1 & \frac{m^2-n}{n(n-1)} & \cdots & \frac{m^2-n}{n(n-1)}\\
\frac{m^2-n}{n(n-1)} & 1 & \cdots & \frac{m^2-n}{n(n-1)}\\
\vdots & \vdots & \ddots & \vdots\\
\frac{m^2-n}{n(n-1)} & \frac{m^2-n}{n(n-1)} & \cdots & 1\\
\end{pmatrix},
\\
\label{eq:AB-3}
B^{(0,3)} = B^{(3,0)} = 
\frac{m}{4n^2}
\begin{pmatrix}
1 & 1 & \cdots & 1\\
1 & 1 & \cdots & 1\\
\vdots & \vdots & \ddots & \vdots\\
1 & 1 & \cdots & 1\\
\end{pmatrix},\quad
B^{(1,2)} = -B^{(2,1)} = 
i \frac{m}{4n^2} \id. 
\end{gather}
Next we note that 
\begin{equation}
\begin{split}
f(\mA,\mA+\delta\mA) 
&= \min_{\ket{\psi}}
f((\mA\otimes\id_R)(\ket{\psi}\bra{\psi}), ((\mA +\delta\mA)\otimes\id_R)(\ket{\psi}\bra{\psi}))\\
&= \min_{p_i,\rho_i,i=0,1}
f( A \otimes (p_0\rho_0+p_1\rho_1), p_0 (A+B) \otimes \rho_0 + p_1 (A-B) \otimes \rho_1)\\
&\geq \min_{p_i,\rho_i,i=0,1} p_0 f(A,A+B) + p_1 f(A,A-B) = f(A,A+B),
\end{split}
\end{equation}
where in the second step we define $ \braket{g_i|\psi}\braket{\psi|g_i} = p_i\rho_i$ for $i = 0,1$, and in the third step we use the joint concavity of fidelity and in the last step we use $f(A+B) = f(A-B)$. Therefore we must have 
\begin{equation}
f(\mA,\mA + \delta \mA) = f(A,A+B),
\end{equation}
by noticing that $f(\mA(\ket{g_0}\bra{g_0}),(\mA + \delta \mA)(\ket{g_0}\bra{g_0})) = f(A,A+B)$. 
\endgroup
\begingroup
\allowdisplaybreaks
First note that $A^{(i,i)}$ and $B^{(i,j)}$ could be diagonalized in the following way: 
\begin{gather}
A^{(0,0)} = \frac{1}{4n} (n\ket{\psi_1}\bra{\psi_1}),\quad B^{(0,3)} = B^{(3,0)} = \frac{m}{4n} \ket{\psi_1}\bra{\psi_1}, \\
A^{(1,1)} = A^{(2,2)} = \frac{1}{4n} \bigg(\frac{n^2+2n-m^2}{2n}\ket{\psi_1}\bra{\psi_1} + \frac{n^2-2n+m^2}{2n(n-1)} \sum_{k=2}^n \ket{\psi_k}\bra{\psi_k}\bigg),\\
A^{(3,3)} = \frac{1}{4n}\bigg( \frac{m^2}{n}\ket{\psi_1}\bra{\psi_1} + \frac{n^2-m^2}{n(n-1)} \sum_{k=2}^n \ket{\psi_k}\bra{\psi_k}\bigg),
\end{gather}
\endgroup
where $\ket{\psi_1} = \frac{1}{\sqrt{n}} 
(1~1~\cdots~1)$ and $\{\ket{\psi_k}\}_{k>1}$ is an arbitrary orthonormal basis of the orthogonal subspace of $\ket{\psi_1}$.  
Since $A^{(i,j)} = A^{(j,i)} = B^{(i,j)} = B^{(j,i)}= 0$ when $i\in\{1,2\}$ and $j \in \{0,3\}$, we have 
\begin{equation}
f(A,A+B) = f(A^{(0)},A^{(0)}+B^{(0)}) + f(A^{(1)},A^{(1)}+B^{(1)}),
\end{equation}
where 
\begin{equation}
(\cdot)^{(0)} = 
\begin{pmatrix}
(\cdot)^{(0,0)} & (\cdot)^{(0,3)}\\
(\cdot)^{(3,0)} & (\cdot)^{(3,3)}\\
\end{pmatrix},\quad 
(\cdot)^{(1)} = 
\begin{pmatrix}
(\cdot)^{(1,1)} & (\cdot)^{(1,2)}\\
(\cdot)^{(2,1)} & (\cdot)^{(2,2)}\\
\end{pmatrix}.
\end{equation}
We first calculate $f(A^{(0)},A^{(0)}+B^{(0)})$. We have 
\begin{multline}
(A^{(0)})^{1/2}(A^{(0)}+B^{(0)})(A^{(0)})^{1/2}
= \\ \begin{pmatrix}
\frac{1}{4} \\
\frac{m^2}{4n^2} \\
\end{pmatrix} 
\begin{pmatrix}
\frac{1}{4} & 
\frac{m^2}{4n^2} \\
\end{pmatrix} \otimes \ket{\psi_1}\bra{\psi_1}
+ \begin{pmatrix}
0 & 0 \\
0 & \big(\frac{n^2-m^2}{4n^2(n-1)}\big)^2\\
\end{pmatrix}
\otimes \sum_{k=2}^n \ket{\psi_k}\bra{\psi_k}. 
\end{multline}
Then 
\begin{equation}
\begin{split}
f(A^{(0)},A^{(0)}+B^{(0)}) 
&= \trace\Big(\big((A^{(0)})^{1/2}(A^{(0)}+B^{(0)})(A^{(0)})^{1/2}\big)^{1/2}\Big) \\
&= \sqrt{\frac{1}{4^2} + \Big(\frac{m^2}{4n^2}\Big)^2} + \frac{n^2-m^2}{4n^2} = \frac{1}{2} - \frac{m^2}{4n^2} + O\Big(\frac{m^4}{n^4}\Big). 
\end{split}
\end{equation}
In order to calculate $f(A^{(0)},A^{(0)}+B^{(0)})$, 
we first note that 
\begin{equation}
\begin{split}
(A^{(1)})^{1/2}(A^{(1)}+B^{(1)})(A^{(1)})^{1/2}
= \begin{pmatrix}
(A^{(1,1)})^2 & 0 \\
0 & (A^{(1,1)})^2 \\ 
\end{pmatrix} + 
\begin{pmatrix}
0 & i\frac{m}{4n^2} A^{(1,1)} \\
-i\frac{m}{4n^2} A^{(1,1)} & 0 \\ 
\end{pmatrix}.
\end{split}
\end{equation}
Then we use the Taylor expansion formula for square root of positive matrices:  $\sqrt{\Lambda^2+Y} = \Lambda + \chi[Y] - \chi(\chi[Y]^2) + O(Y^3)$ for any positive diagonal matrix $\Lambda$ and small $Y$~\cite{del2018taylor}, where 
\begin{equation}
\chi[(\cdot)]_{ij} = \frac{(\cdot)_{ij}}{\Lambda_i + \Lambda_j}.
\end{equation}
Let $A^{(1)} = \Lambda$ such that $\Lambda_1 = \frac{n^2+2n-m^2}{8n^2}$ and $\Lambda_{k} = \frac{n^2-2n+m^2}{8n^2(n-1)}$ for $k > 1$, we find that 
\begin{equation}
f(A^{(1)},A^{(1)}+B^{(1)}) = 
\frac{1}{2} - \Big(\frac{m}{4n^2}\Big)^2\sum_{k=1}^n \frac{1}{4\Lambda_k} + O\Big(\frac{m^3}{n^3}\Big) = \frac{1}{2} - \frac{m^2}{8n^2} + O\Big(\frac{m^3}{n^3}\Big). 
\end{equation}
Therefore 
\begin{equation}
1 - f(\mA,\mA+\delta\mA)^2 = 1 - f(A,A+B)^2 = \frac{3m^2}{4n^2} + O\Big(\frac{m^3}{n^3}\Big), 
\end{equation}
which serves as an upper bound on the infidelity of thermodynamic codes under depolarizing noise due to \lemmaref{lemma:infidelity}.

\bibliographystyle{aps}
\bibliography{refs-covariant-final}

%merlin.mbs apsrev4-1.bst 2010-07-25 4.21a (PWD, AO, DPC) hacked
%Control: key (0)
%Control: author (72) initials jnrlst
%Control: editor formatted (1) identically to author
%Control: production of article title (-1) disabled
%Control: page (0) single
%Control: year (1) truncated
%Control: production of eprint (0) enabled
\begin{thebibliography}{75}%
\makeatletter
\providecommand \@ifxundefined [1]{%
 \@ifx{#1\undefined}
}%
\providecommand \@ifnum [1]{%
 \ifnum #1\expandafter \@firstoftwo
 \else \expandafter \@secondoftwo
 \fi
}%
\providecommand \@ifx [1]{%
 \ifx #1\expandafter \@firstoftwo
 \else \expandafter \@secondoftwo
 \fi
}%
\providecommand \natexlab [1]{#1}%
\providecommand \enquote  [1]{``#1''}%
\providecommand \bibnamefont  [1]{#1}%
\providecommand \bibfnamefont [1]{#1}%
\providecommand \citenamefont [1]{#1}%
\providecommand \href@noop [0]{\@secondoftwo}%
\providecommand \href [0]{\begingroup \@sanitize@url \@href}%
\providecommand \@href[1]{\@@startlink{#1}\@@href}%
\providecommand \@@href[1]{\endgroup#1\@@endlink}%
\providecommand \@sanitize@url [0]{\catcode `\\12\catcode `\$12\catcode
  `\&12\catcode `\#12\catcode `\^12\catcode `\_12\catcode `\%12\relax}%
\providecommand \@@startlink[1]{}%
\providecommand \@@endlink[0]{}%
\providecommand \url  [0]{\begingroup\@sanitize@url \@url }%
\providecommand \@url [1]{\endgroup\@href {#1}{\urlprefix }}%
\providecommand \urlprefix  [0]{URL }%
\providecommand \Eprint [0]{\href }%
\providecommand \doibase [0]{http://dx.doi.org/}%
\providecommand \selectlanguage [0]{\@gobble}%
\providecommand \bibinfo  [0]{\@secondoftwo}%
\providecommand \bibfield  [0]{\@secondoftwo}%
\providecommand \translation [1]{[#1]}%
\providecommand \BibitemOpen [0]{}%
\providecommand \bibitemStop [0]{}%
\providecommand \bibitemNoStop [0]{.\EOS\space}%
\providecommand \EOS [0]{\spacefactor3000\relax}%
\providecommand \BibitemShut  [1]{\csname bibitem#1\endcsname}%
\let\auto@bib@innerbib\@empty
%</preamble>
\bibitem [{\citenamefont {Nielsen}\ and\ \citenamefont
  {Chuang}(2010)}]{nielsen2002quantum}%
  \BibitemOpen
  \bibfield  {author} {\bibinfo {author} {\bibfnamefont {M.~A.}\ \bibnamefont
  {Nielsen}}\ and\ \bibinfo {author} {\bibfnamefont {I.~L.}\ \bibnamefont
  {Chuang}},\ }\href@noop {} {\emph {\bibinfo {title} {Quantum Computation and
  Quantum Information}}}\ (\bibinfo  {publisher} {Cambridge University Press},\
  \bibinfo {year} {2010})\BibitemShut {NoStop}%
\bibitem [{\citenamefont {Gottesman}(2010)}]{gottesman2010introduction}%
  \BibitemOpen
  \bibfield  {author} {\bibinfo {author} {\bibfnamefont {D.}~\bibnamefont
  {Gottesman}},\ }in\ \href@noop {} {\emph {\bibinfo {booktitle} {Quantum
  information science and its contributions to mathematics, Proceedings of
  Symposia in Applied Mathematics}}},\ Vol.~\bibinfo {volume} {68}\ (\bibinfo
  {year} {2010})\ pp.\ \bibinfo {pages} {13--58}\BibitemShut {NoStop}%
\bibitem [{\citenamefont {Lidar}\ and\ \citenamefont
  {Brun}(2013)}]{lidar2013quantum}%
  \BibitemOpen
  \bibfield  {author} {\bibinfo {author} {\bibfnamefont {D.~A.}\ \bibnamefont
  {Lidar}}\ and\ \bibinfo {author} {\bibfnamefont {T.~A.}\ \bibnamefont
  {Brun}},\ }\href@noop {} {\emph {\bibinfo {title} {Quantum error
  correction}}}\ (\bibinfo  {publisher} {Cambridge university press},\ \bibinfo
  {year} {2013})\BibitemShut {NoStop}%
\bibitem [{\citenamefont {Eastin}\ and\ \citenamefont
  {Knill}(2009)}]{eastin2009restrictions}%
  \BibitemOpen
  \bibfield  {author} {\bibinfo {author} {\bibfnamefont {B.}~\bibnamefont
  {Eastin}}\ and\ \bibinfo {author} {\bibfnamefont {E.}~\bibnamefont {Knill}},\
  }\bibfield  {title} {\emph {\bibinfo {title} {Restrictions on transversal
  encoded quantum gate sets},\ }}\href {\doibase
  10.1103/physrevlett.102.110502} {\bibfield  {journal} {\bibinfo  {journal}
  {Physical Review Letters}\ }\textbf {\bibinfo {volume} {102}},\ \bibinfo
  {pages} {110502} (\bibinfo {year} {2009})}\BibitemShut {NoStop}%
\bibitem [{\citenamefont {Bravyi}\ and\ \citenamefont
  {K\"onig}(2013)}]{bravyi2013classification}%
  \BibitemOpen
  \bibfield  {author} {\bibinfo {author} {\bibfnamefont {S.}~\bibnamefont
  {Bravyi}}\ and\ \bibinfo {author} {\bibfnamefont {R.}~\bibnamefont
  {K\"onig}},\ }\bibfield  {title} {\emph {\bibinfo {title} {Classification of
  topologically protected gates for local stabilizer codes},\ }}\href {\doibase
  10.1103/physrevlett.110.170503} {\bibfield  {journal} {\bibinfo  {journal}
  {Physical Review Letters}\ }\textbf {\bibinfo {volume} {110}},\ \bibinfo
  {pages} {170503} (\bibinfo {year} {2013})}\BibitemShut {NoStop}%
\bibitem [{\citenamefont {Pastawski}\ and\ \citenamefont
  {Yoshida}(2015)}]{pastawski2015fault}%
  \BibitemOpen
  \bibfield  {author} {\bibinfo {author} {\bibfnamefont {F.}~\bibnamefont
  {Pastawski}}\ and\ \bibinfo {author} {\bibfnamefont {B.}~\bibnamefont
  {Yoshida}},\ }\bibfield  {title} {\emph {\bibinfo {title} {Fault-tolerant
  logical gates in quantum error-correcting codes},\ }}\href {\doibase
  10.1103/PhysRevA.91.012305} {\bibfield  {journal} {\bibinfo  {journal}
  {Physical Review A}\ }\textbf {\bibinfo {volume} {91}},\ \bibinfo {pages}
  {012305} (\bibinfo {year} {2015})}\BibitemShut {NoStop}%
\bibitem [{\citenamefont {Jochym-O'Connor}\ \emph {et~al.}(2018)\citenamefont
  {Jochym-O'Connor}, \citenamefont {Kubica},\ and\ \citenamefont
  {Yoder}}]{jochym2018disjointness}%
  \BibitemOpen
  \bibfield  {author} {\bibinfo {author} {\bibfnamefont {T.}~\bibnamefont
  {Jochym-O'Connor}}, \bibinfo {author} {\bibfnamefont {A.}~\bibnamefont
  {Kubica}}, \ and\ \bibinfo {author} {\bibfnamefont {T.~J.}\ \bibnamefont
  {Yoder}},\ }\bibfield  {title} {\emph {\bibinfo {title} {Disjointness of
  stabilizer codes and limitations on fault-tolerant logical gates},\ }}\href
  {\doibase 10.1103/PhysRevX.8.021047} {\bibfield  {journal} {\bibinfo
  {journal} {Physical Review X}\ }\textbf {\bibinfo {volume} {8}},\ \bibinfo
  {pages} {021047} (\bibinfo {year} {2018})}\BibitemShut {NoStop}%
\bibitem [{\citenamefont {Wang}\ \emph {et~al.}(2020)\citenamefont {Wang},
  \citenamefont {Zhu}, \citenamefont {Okay},\ and\ \citenamefont
  {Laflamme}}]{wang2019quasi}%
  \BibitemOpen
  \bibfield  {author} {\bibinfo {author} {\bibfnamefont {D.-S.}\ \bibnamefont
  {Wang}}, \bibinfo {author} {\bibfnamefont {G.}~\bibnamefont {Zhu}}, \bibinfo
  {author} {\bibfnamefont {C.}~\bibnamefont {Okay}}, \ and\ \bibinfo {author}
  {\bibfnamefont {R.}~\bibnamefont {Laflamme}},\ }\bibfield  {title} {\emph
  {\bibinfo {title} {Quasi-exact quantum computation},\ }}\href
  {https://doi.org/10.1103/physrevresearch.2.033116} {\bibfield  {journal}
  {\bibinfo  {journal} {Physical Review Research}\ }\textbf {\bibinfo {volume}
  {2}} (\bibinfo {year} {2020})}\BibitemShut {NoStop}%
\bibitem [{\citenamefont {Hayden}\ \emph {et~al.}(2021)\citenamefont {Hayden},
  \citenamefont {Nezami}, \citenamefont {Popescu},\ and\ \citenamefont
  {Salton}}]{hayden2017error}%
  \BibitemOpen
  \bibfield  {author} {\bibinfo {author} {\bibfnamefont {P.}~\bibnamefont
  {Hayden}}, \bibinfo {author} {\bibfnamefont {S.}~\bibnamefont {Nezami}},
  \bibinfo {author} {\bibfnamefont {S.}~\bibnamefont {Popescu}}, \ and\
  \bibinfo {author} {\bibfnamefont {G.}~\bibnamefont {Salton}},\ }\bibfield
  {title} {\emph {\bibinfo {title} {Error correction of quantum reference frame
  information},\ }}\href {https://doi.org/10.1103/prxquantum.2.010326}
  {\bibfield  {journal} {\bibinfo  {journal} {{PRX} Quantum}\ }\textbf
  {\bibinfo {volume} {2}} (\bibinfo {year} {2021})}\BibitemShut {NoStop}%
\bibitem [{\citenamefont {Faist}\ \emph {et~al.}(2020)\citenamefont {Faist},
  \citenamefont {Nezami}, \citenamefont {Albert}, \citenamefont {Salton},
  \citenamefont {Pastawski}, \citenamefont {Hayden},\ and\ \citenamefont
  {Preskill}}]{faist2019continuous}%
  \BibitemOpen
  \bibfield  {author} {\bibinfo {author} {\bibfnamefont {P.}~\bibnamefont
  {Faist}}, \bibinfo {author} {\bibfnamefont {S.}~\bibnamefont {Nezami}},
  \bibinfo {author} {\bibfnamefont {V.~V.}\ \bibnamefont {Albert}}, \bibinfo
  {author} {\bibfnamefont {G.}~\bibnamefont {Salton}}, \bibinfo {author}
  {\bibfnamefont {F.}~\bibnamefont {Pastawski}}, \bibinfo {author}
  {\bibfnamefont {P.}~\bibnamefont {Hayden}}, \ and\ \bibinfo {author}
  {\bibfnamefont {J.}~\bibnamefont {Preskill}},\ }\bibfield  {title} {\emph
  {\bibinfo {title} {Continuous symmetries and approximate quantum error
  correction},\ }}\href {https://doi.org/10.1103/physrevx.10.041018} {\bibfield
   {journal} {\bibinfo  {journal} {Physical Review X}\ }\textbf {\bibinfo
  {volume} {10}} (\bibinfo {year} {2020})}\BibitemShut {NoStop}%
\bibitem [{\citenamefont {Preskill}(2000)}]{preskill2000quantum}%
  \BibitemOpen
  \bibfield  {author} {\bibinfo {author} {\bibfnamefont {J.}~\bibnamefont
  {Preskill}},\ }\href@noop {} {\bibinfo {title} {Quantum clock synchronization
  and quantum error correction},\ } (\bibinfo {year} {2000}),\ \Eprint
  {http://arxiv.org/abs/quant-ph/0010098} {arXiv:quant-ph/0010098 [quant-ph]}
  \BibitemShut {NoStop}%
\bibitem [{\citenamefont {Woods}\ and\ \citenamefont
  {Alhambra}(2020)}]{woods2019continuous}%
  \BibitemOpen
  \bibfield  {author} {\bibinfo {author} {\bibfnamefont {M.~P.}\ \bibnamefont
  {Woods}}\ and\ \bibinfo {author} {\bibfnamefont {{\'{A}}.~M.}\ \bibnamefont
  {Alhambra}},\ }\bibfield  {title} {\emph {\bibinfo {title} {Continuous groups
  of transversal gates for quantum error correcting codes from finite clock
  reference frames},\ }}\href {\doibase 10.22331/q-2020-03-23-245} {\bibfield
  {journal} {\bibinfo  {journal} {{Quantum}}\ }\textbf {\bibinfo {volume}
  {4}},\ \bibinfo {pages} {245} (\bibinfo {year} {2020})}\BibitemShut {NoStop}%
\bibitem [{\citenamefont {Almheiri}\ \emph {et~al.}(2015)\citenamefont
  {Almheiri}, \citenamefont {Dong},\ and\ \citenamefont
  {Harlow}}]{almheiri2015bulk}%
  \BibitemOpen
  \bibfield  {author} {\bibinfo {author} {\bibfnamefont {A.}~\bibnamefont
  {Almheiri}}, \bibinfo {author} {\bibfnamefont {X.}~\bibnamefont {Dong}}, \
  and\ \bibinfo {author} {\bibfnamefont {D.}~\bibnamefont {Harlow}},\
  }\bibfield  {title} {\emph {\bibinfo {title} {Bulk locality and quantum error
  correction in ads/cft},\ }}\href {https://doi.org/10.1007/JHEP04(2015)163}
  {\bibfield  {journal} {\bibinfo  {journal} {Journal of High Energy Physics}\
  }\textbf {\bibinfo {volume} {2015}},\ \bibinfo {pages} {163} (\bibinfo {year}
  {2015})}\BibitemShut {NoStop}%
\bibitem [{\citenamefont {Pastawski}\ \emph {et~al.}(2015)\citenamefont
  {Pastawski}, \citenamefont {Yoshida}, \citenamefont {Harlow},\ and\
  \citenamefont {Preskill}}]{pastawski2015holographic}%
  \BibitemOpen
  \bibfield  {author} {\bibinfo {author} {\bibfnamefont {F.}~\bibnamefont
  {Pastawski}}, \bibinfo {author} {\bibfnamefont {B.}~\bibnamefont {Yoshida}},
  \bibinfo {author} {\bibfnamefont {D.}~\bibnamefont {Harlow}}, \ and\ \bibinfo
  {author} {\bibfnamefont {J.}~\bibnamefont {Preskill}},\ }\bibfield  {title}
  {\emph {\bibinfo {title} {Holographic quantum error-correcting codes: Toy
  models for the bulk/boundary correspondence},\ }}\href
  {https://doi.org/10.1007/JHEP06(2015)149} {\bibfield  {journal} {\bibinfo
  {journal} {Journal of High Energy Physics}\ }\textbf {\bibinfo {volume}
  {2015}},\ \bibinfo {pages} {149} (\bibinfo {year} {2015})}\BibitemShut
  {NoStop}%
\bibitem [{\citenamefont {Harlow}\ and\ \citenamefont
  {Ooguri}(2019)}]{harlow2018constraints}%
  \BibitemOpen
  \bibfield  {author} {\bibinfo {author} {\bibfnamefont {D.}~\bibnamefont
  {Harlow}}\ and\ \bibinfo {author} {\bibfnamefont {H.}~\bibnamefont
  {Ooguri}},\ }\bibfield  {title} {\emph {\bibinfo {title} {Constraints on
  symmetries from holography},\ }}\href {\doibase
  10.1103/physrevlett.122.191601} {\bibfield  {journal} {\bibinfo  {journal}
  {Physical Review Letters}\ }\textbf {\bibinfo {volume} {122}},\ \bibinfo
  {pages} {191601} (\bibinfo {year} {2019})}\BibitemShut {NoStop}%
\bibitem [{\citenamefont {Harlow}\ and\ \citenamefont
  {Ooguri}(2018)}]{harlow2018symmetries}%
  \BibitemOpen
  \bibfield  {author} {\bibinfo {author} {\bibfnamefont {D.}~\bibnamefont
  {Harlow}}\ and\ \bibinfo {author} {\bibfnamefont {H.}~\bibnamefont
  {Ooguri}},\ }\href@noop {} {\bibinfo {title} {Symmetries in quantum field
  theory and quantum gravity},\ } (\bibinfo {year} {2018}),\ \Eprint
  {http://arxiv.org/abs/1810.05338} {arXiv:1810.05338 [hep-th]} \BibitemShut
  {NoStop}%
\bibitem [{\citenamefont {Kohler}\ and\ \citenamefont
  {Cubitt}(2019)}]{kohler2019toy}%
  \BibitemOpen
  \bibfield  {author} {\bibinfo {author} {\bibfnamefont {T.}~\bibnamefont
  {Kohler}}\ and\ \bibinfo {author} {\bibfnamefont {T.}~\bibnamefont
  {Cubitt}},\ }\bibfield  {title} {\emph {\bibinfo {title} {Toy models of
  holographic duality between local hamiltonians},\ }}\href
  {https://doi.org/10.1007/JHEP08(2019)017} {\bibfield  {journal} {\bibinfo
  {journal} {Journal of High Energy Physics}\ }\textbf {\bibinfo {volume}
  {2019}},\ \bibinfo {pages} {17} (\bibinfo {year} {2019})}\BibitemShut
  {NoStop}%
\bibitem [{\citenamefont {Gschwendtner}\ \emph {et~al.}(2019)\citenamefont
  {Gschwendtner}, \citenamefont {K{\"o}nig}, \citenamefont
  {{\c{S}}ahino{\u{g}}lu},\ and\ \citenamefont
  {Tang}}]{gschwendtner2019quantum}%
  \BibitemOpen
  \bibfield  {author} {\bibinfo {author} {\bibfnamefont {M.}~\bibnamefont
  {Gschwendtner}}, \bibinfo {author} {\bibfnamefont {R.}~\bibnamefont
  {K{\"o}nig}}, \bibinfo {author} {\bibfnamefont {B.}~\bibnamefont
  {{\c{S}}ahino{\u{g}}lu}}, \ and\ \bibinfo {author} {\bibfnamefont
  {E.}~\bibnamefont {Tang}},\ }\bibfield  {title} {\emph {\bibinfo {title}
  {Quantum error-detection at low energies},\ }}\href
  {https://doi.org/10.1007/JHEP09(2019)021} {\bibfield  {journal} {\bibinfo
  {journal} {Journal of High Energy Physics}\ }\textbf {\bibinfo {volume}
  {2019}},\ \bibinfo {pages} {21} (\bibinfo {year} {2019})}\BibitemShut
  {NoStop}%
\bibitem [{\citenamefont {Brand\~ao}\ \emph {et~al.}(2019)\citenamefont
  {Brand\~ao}, \citenamefont {Crosson}, \citenamefont {\ifmmode
  \mbox{\c{S}}\else \c{S}\fi{}ahino\ifmmode~\breve{g}\else \u{g}\fi{}lu},\ and\
  \citenamefont {Bowen}}]{brandao2019quantum}%
  \BibitemOpen
  \bibfield  {author} {\bibinfo {author} {\bibfnamefont {F.~G. S.~L.}\
  \bibnamefont {Brand\~ao}}, \bibinfo {author} {\bibfnamefont {E.}~\bibnamefont
  {Crosson}}, \bibinfo {author} {\bibfnamefont {M.~B.}\ \bibnamefont {\ifmmode
  \mbox{\c{S}}\else \c{S}\fi{}ahino\ifmmode~\breve{g}\else \u{g}\fi{}lu}}, \
  and\ \bibinfo {author} {\bibfnamefont {J.}~\bibnamefont {Bowen}},\ }\bibfield
   {title} {\emph {\bibinfo {title} {Quantum error correcting codes in
  eigenstates of translation-invariant spin chains},\ }}\href {\doibase
  10.1103/physrevlett.123.110502} {\bibfield  {journal} {\bibinfo  {journal}
  {Physical Review Letters}\ }\textbf {\bibinfo {volume} {123}},\ \bibinfo
  {pages} {110502} (\bibinfo {year} {2019})}\BibitemShut {NoStop}%
\bibitem [{\citenamefont {B\'eny}\ and\ \citenamefont
  {Oreshkov}(2010)}]{beny2010general}%
  \BibitemOpen
  \bibfield  {author} {\bibinfo {author} {\bibfnamefont {C.}~\bibnamefont
  {B\'eny}}\ and\ \bibinfo {author} {\bibfnamefont {O.}~\bibnamefont
  {Oreshkov}},\ }\bibfield  {title} {\emph {\bibinfo {title} {General
  conditions for approximate quantum error correction and near-optimal recovery
  channels},\ }}\href {\doibase 10.1103/physrevlett.104.120501} {\bibfield
  {journal} {\bibinfo  {journal} {Physical Review Letters}\ }\textbf {\bibinfo
  {volume} {104}},\ \bibinfo {pages} {120501} (\bibinfo {year}
  {2010})}\BibitemShut {NoStop}%
\bibitem [{\citenamefont {Hayden}\ \emph {et~al.}(2008)\citenamefont {Hayden},
  \citenamefont {Horodecki}, \citenamefont {Winter},\ and\ \citenamefont
  {Yard}}]{hayden2008decoupling}%
  \BibitemOpen
  \bibfield  {author} {\bibinfo {author} {\bibfnamefont {P.}~\bibnamefont
  {Hayden}}, \bibinfo {author} {\bibfnamefont {M.}~\bibnamefont {Horodecki}},
  \bibinfo {author} {\bibfnamefont {A.}~\bibnamefont {Winter}}, \ and\ \bibinfo
  {author} {\bibfnamefont {J.}~\bibnamefont {Yard}},\ }\bibfield  {title}
  {\emph {\bibinfo {title} {A decoupling approach to the quantum capacity},\
  }}\href {https://doi.org/10.1142/S1230161208000043} {\bibfield  {journal}
  {\bibinfo  {journal} {Open Systems \& Information Dynamics}\ }\textbf
  {\bibinfo {volume} {15}},\ \bibinfo {pages} {7} (\bibinfo {year}
  {2008})}\BibitemShut {NoStop}%
\bibitem [{\citenamefont {Bény}\ \emph {et~al.}(2018)\citenamefont {Bény},
  \citenamefont {Zimborás},\ and\ \citenamefont
  {Pastawski}}]{beny2018approximate}%
  \BibitemOpen
  \bibfield  {author} {\bibinfo {author} {\bibfnamefont {C.}~\bibnamefont
  {Bény}}, \bibinfo {author} {\bibfnamefont {Z.}~\bibnamefont {Zimborás}}, \
  and\ \bibinfo {author} {\bibfnamefont {F.}~\bibnamefont {Pastawski}},\
  }\href@noop {} {\bibinfo {title} {Approximate recovery with locality and
  symmetry constraints},\ } (\bibinfo {year} {2018}),\ \Eprint
  {http://arxiv.org/abs/1806.10324} {arXiv:1806.10324 [quant-ph]} \BibitemShut
  {NoStop}%
\bibitem [{\citenamefont {Giovannetti}\ \emph {et~al.}(2011)\citenamefont
  {Giovannetti}, \citenamefont {Lloyd},\ and\ \citenamefont
  {Maccone}}]{giovannetti2011advances}%
  \BibitemOpen
  \bibfield  {author} {\bibinfo {author} {\bibfnamefont {V.}~\bibnamefont
  {Giovannetti}}, \bibinfo {author} {\bibfnamefont {S.}~\bibnamefont {Lloyd}},
  \ and\ \bibinfo {author} {\bibfnamefont {L.}~\bibnamefont {Maccone}},\
  }\bibfield  {title} {\emph {\bibinfo {title} {Advances in quantum
  metrology},\ }}\href {https://doi.org/10.1038/nphoton.2011.35} {\bibfield
  {journal} {\bibinfo  {journal} {Nature Photonics}\ }\textbf {\bibinfo
  {volume} {5}},\ \bibinfo {pages} {222} (\bibinfo {year} {2011})}\BibitemShut
  {NoStop}%
\bibitem [{\citenamefont {Degen}\ \emph {et~al.}(2017)\citenamefont {Degen},
  \citenamefont {Reinhard},\ and\ \citenamefont
  {Cappellaro}}]{degen2017quantum}%
  \BibitemOpen
  \bibfield  {author} {\bibinfo {author} {\bibfnamefont {C.~L.}\ \bibnamefont
  {Degen}}, \bibinfo {author} {\bibfnamefont {F.}~\bibnamefont {Reinhard}}, \
  and\ \bibinfo {author} {\bibfnamefont {P.}~\bibnamefont {Cappellaro}},\
  }\bibfield  {title} {\emph {\bibinfo {title} {Quantum sensing},\ }}\href
  {\doibase 10.1103/RevModPhys.89.035002} {\bibfield  {journal} {\bibinfo
  {journal} {Reviews of Modern Physics}\ }\textbf {\bibinfo {volume} {89}},\
  \bibinfo {pages} {035002} (\bibinfo {year} {2017})}\BibitemShut {NoStop}%
\bibitem [{\citenamefont {Braun}\ \emph {et~al.}(2018)\citenamefont {Braun},
  \citenamefont {Adesso}, \citenamefont {Benatti}, \citenamefont {Floreanini},
  \citenamefont {Marzolino}, \citenamefont {Mitchell},\ and\ \citenamefont
  {Pirandola}}]{braun2018quantum}%
  \BibitemOpen
  \bibfield  {author} {\bibinfo {author} {\bibfnamefont {D.}~\bibnamefont
  {Braun}}, \bibinfo {author} {\bibfnamefont {G.}~\bibnamefont {Adesso}},
  \bibinfo {author} {\bibfnamefont {F.}~\bibnamefont {Benatti}}, \bibinfo
  {author} {\bibfnamefont {R.}~\bibnamefont {Floreanini}}, \bibinfo {author}
  {\bibfnamefont {U.}~\bibnamefont {Marzolino}}, \bibinfo {author}
  {\bibfnamefont {M.~W.}\ \bibnamefont {Mitchell}}, \ and\ \bibinfo {author}
  {\bibfnamefont {S.}~\bibnamefont {Pirandola}},\ }\bibfield  {title} {\emph
  {\bibinfo {title} {Quantum-enhanced measurements without entanglement},\
  }}\href {\doibase 10.1103/RevModPhys.90.035006} {\bibfield  {journal}
  {\bibinfo  {journal} {Reviews of Modern Physics}\ }\textbf {\bibinfo {volume}
  {90}},\ \bibinfo {pages} {035006} (\bibinfo {year} {2018})}\BibitemShut
  {NoStop}%
\bibitem [{\citenamefont {Pezz\`e}\ \emph {et~al.}(2018)\citenamefont
  {Pezz\`e}, \citenamefont {Smerzi}, \citenamefont {Oberthaler}, \citenamefont
  {Schmied},\ and\ \citenamefont {Treutlein}}]{pezze2018quantum}%
  \BibitemOpen
  \bibfield  {author} {\bibinfo {author} {\bibfnamefont {L.}~\bibnamefont
  {Pezz\`e}}, \bibinfo {author} {\bibfnamefont {A.}~\bibnamefont {Smerzi}},
  \bibinfo {author} {\bibfnamefont {M.~K.}\ \bibnamefont {Oberthaler}},
  \bibinfo {author} {\bibfnamefont {R.}~\bibnamefont {Schmied}}, \ and\
  \bibinfo {author} {\bibfnamefont {P.}~\bibnamefont {Treutlein}},\ }\bibfield
  {title} {\emph {\bibinfo {title} {Quantum metrology with nonclassical states
  of atomic ensembles},\ }}\href {\doibase 10.1103/RevModPhys.90.035005}
  {\bibfield  {journal} {\bibinfo  {journal} {Reviews of Modern Physics}\
  }\textbf {\bibinfo {volume} {90}},\ \bibinfo {pages} {035005} (\bibinfo
  {year} {2018})}\BibitemShut {NoStop}%
\bibitem [{\citenamefont {Pirandola}\ \emph {et~al.}(2018)\citenamefont
  {Pirandola}, \citenamefont {Bardhan}, \citenamefont {Gehring}, \citenamefont
  {Weedbrook},\ and\ \citenamefont {Lloyd}}]{pirandola2018advances}%
  \BibitemOpen
  \bibfield  {author} {\bibinfo {author} {\bibfnamefont {S.}~\bibnamefont
  {Pirandola}}, \bibinfo {author} {\bibfnamefont {B.~R.}\ \bibnamefont
  {Bardhan}}, \bibinfo {author} {\bibfnamefont {T.}~\bibnamefont {Gehring}},
  \bibinfo {author} {\bibfnamefont {C.}~\bibnamefont {Weedbrook}}, \ and\
  \bibinfo {author} {\bibfnamefont {S.}~\bibnamefont {Lloyd}},\ }\bibfield
  {title} {\emph {\bibinfo {title} {Advances in photonic quantum sensing},\
  }}\href {https://doi.org/10.1038/s41566-018-0301-6} {\bibfield  {journal}
  {\bibinfo  {journal} {Nature Photonics}\ }\textbf {\bibinfo {volume} {12}},\
  \bibinfo {pages} {724} (\bibinfo {year} {2018})}\BibitemShut {NoStop}%
\bibitem [{\citenamefont {Escher}\ \emph {et~al.}(2011)\citenamefont {Escher},
  \citenamefont {de~Matos~Filho},\ and\ \citenamefont
  {Davidovich}}]{escher2011general}%
  \BibitemOpen
  \bibfield  {author} {\bibinfo {author} {\bibfnamefont {B.}~\bibnamefont
  {Escher}}, \bibinfo {author} {\bibfnamefont {R.}~\bibnamefont
  {de~Matos~Filho}}, \ and\ \bibinfo {author} {\bibfnamefont {L.}~\bibnamefont
  {Davidovich}},\ }\bibfield  {title} {\emph {\bibinfo {title} {General
  framework for estimating the ultimate precision limit in noisy
  quantum-enhanced metrology},\ }}\href {https://doi.org/10.1038/nphys1958}
  {\bibfield  {journal} {\bibinfo  {journal} {Nature Physics}\ }\textbf
  {\bibinfo {volume} {7}},\ \bibinfo {pages} {406} (\bibinfo {year}
  {2011})}\BibitemShut {NoStop}%
\bibitem [{\citenamefont {Demkowicz-Dobrza{\'n}ski}\ \emph
  {et~al.}(2012)\citenamefont {Demkowicz-Dobrza{\'n}ski}, \citenamefont
  {Ko{\l}ody{\'n}ski},\ and\ \citenamefont
  {Gu{\c{t}}{\u{a}}}}]{demkowicz2012elusive}%
  \BibitemOpen
  \bibfield  {author} {\bibinfo {author} {\bibfnamefont {R.}~\bibnamefont
  {Demkowicz-Dobrza{\'n}ski}}, \bibinfo {author} {\bibfnamefont
  {J.}~\bibnamefont {Ko{\l}ody{\'n}ski}}, \ and\ \bibinfo {author}
  {\bibfnamefont {M.}~\bibnamefont {Gu{\c{t}}{\u{a}}}},\ }\bibfield  {title}
  {\emph {\bibinfo {title} {The elusive heisenberg limit in quantum-enhanced
  metrology},\ }}\href {https://doi.org/10.1038/ncomms2067} {\bibfield
  {journal} {\bibinfo  {journal} {Nature Communications}\ }\textbf {\bibinfo
  {volume} {3}},\ \bibinfo {pages} {1063} (\bibinfo {year} {2012})}\BibitemShut
  {NoStop}%
\bibitem [{\citenamefont {Demkowicz-Dobrza\ifmmode~\acute{n}\else
  \'{n}\fi{}ski}\ and\ \citenamefont {Maccone}(2014)}]{demkowicz2014using}%
  \BibitemOpen
  \bibfield  {author} {\bibinfo {author} {\bibfnamefont {R.}~\bibnamefont
  {Demkowicz-Dobrza\ifmmode~\acute{n}\else \'{n}\fi{}ski}}\ and\ \bibinfo
  {author} {\bibfnamefont {L.}~\bibnamefont {Maccone}},\ }\bibfield  {title}
  {\emph {\bibinfo {title} {Using entanglement against noise in quantum
  metrology},\ }}\href {\doibase 10.1103/physrevlett.113.250801} {\bibfield
  {journal} {\bibinfo  {journal} {Physical Review Letters}\ }\textbf {\bibinfo
  {volume} {113}},\ \bibinfo {pages} {250801} (\bibinfo {year}
  {2014})}\BibitemShut {NoStop}%
\bibitem [{\citenamefont {Yuan}\ and\ \citenamefont
  {Fung}(2017{\natexlab{a}})}]{yuan2017quantum}%
  \BibitemOpen
  \bibfield  {author} {\bibinfo {author} {\bibfnamefont {H.}~\bibnamefont
  {Yuan}}\ and\ \bibinfo {author} {\bibfnamefont {C.-H.~F.}\ \bibnamefont
  {Fung}},\ }\bibfield  {title} {\emph {\bibinfo {title} {Quantum parameter
  estimation with general dynamics},\ }}\href
  {https://doi.org/10.1038/s41534-017-0014-6} {\bibfield  {journal} {\bibinfo
  {journal} {npj Quantum Information}\ }\textbf {\bibinfo {volume} {3}},\
  \bibinfo {pages} {1} (\bibinfo {year} {2017}{\natexlab{a}})}\BibitemShut
  {NoStop}%
\bibitem [{\citenamefont {Demkowicz-Dobrza\ifmmode~\acute{n}\else
  \'{n}\fi{}ski}\ \emph {et~al.}(2017)\citenamefont
  {Demkowicz-Dobrza\ifmmode~\acute{n}\else \'{n}\fi{}ski}, \citenamefont
  {Czajkowski},\ and\ \citenamefont {Sekatski}}]{demkowicz2017adaptive}%
  \BibitemOpen
  \bibfield  {author} {\bibinfo {author} {\bibfnamefont {R.}~\bibnamefont
  {Demkowicz-Dobrza\ifmmode~\acute{n}\else \'{n}\fi{}ski}}, \bibinfo {author}
  {\bibfnamefont {J.}~\bibnamefont {Czajkowski}}, \ and\ \bibinfo {author}
  {\bibfnamefont {P.}~\bibnamefont {Sekatski}},\ }\bibfield  {title} {\emph
  {\bibinfo {title} {Adaptive quantum metrology under general markovian
  noise},\ }}\href {\doibase 10.1103/PhysRevX.7.041009} {\bibfield  {journal}
  {\bibinfo  {journal} {Physical Review X}\ }\textbf {\bibinfo {volume} {7}},\
  \bibinfo {pages} {041009} (\bibinfo {year} {2017})}\BibitemShut {NoStop}%
\bibitem [{\citenamefont {Zhou}\ \emph {et~al.}(2018)\citenamefont {Zhou},
  \citenamefont {Zhang}, \citenamefont {Preskill},\ and\ \citenamefont
  {Jiang}}]{zhou2018achieving}%
  \BibitemOpen
  \bibfield  {author} {\bibinfo {author} {\bibfnamefont {S.}~\bibnamefont
  {Zhou}}, \bibinfo {author} {\bibfnamefont {M.}~\bibnamefont {Zhang}},
  \bibinfo {author} {\bibfnamefont {J.}~\bibnamefont {Preskill}}, \ and\
  \bibinfo {author} {\bibfnamefont {L.}~\bibnamefont {Jiang}},\ }\bibfield
  {title} {\emph {\bibinfo {title} {Achieving the heisenberg limit in quantum
  metrology using quantum error correction},\ }}\href
  {https://doi.org/10.1038/s41467-017-02510-3} {\bibfield  {journal} {\bibinfo
  {journal} {Nature Communications}\ }\textbf {\bibinfo {volume} {9}},\
  \bibinfo {pages} {78} (\bibinfo {year} {2018})}\BibitemShut {NoStop}%
\bibitem [{\citenamefont {Zhou}\ and\ \citenamefont
  {Jiang}(2021)}]{zhou2020theory}%
  \BibitemOpen
  \bibfield  {author} {\bibinfo {author} {\bibfnamefont {S.}~\bibnamefont
  {Zhou}}\ and\ \bibinfo {author} {\bibfnamefont {L.}~\bibnamefont {Jiang}},\
  }\bibfield  {title} {\emph {\bibinfo {title} {Asymptotic theory of quantum
  channel estimation},\ }}\href {https://doi.org/10.1103/prxquantum.2.010343}
  {\bibfield  {journal} {\bibinfo  {journal} {{PRX} Quantum}\ }\textbf
  {\bibinfo {volume} {2}} (\bibinfo {year} {2021})}\BibitemShut {NoStop}%
\bibitem [{\citenamefont {Fujiwara}\ and\ \citenamefont
  {Imai}(2008)}]{fujiwara2008fibre}%
  \BibitemOpen
  \bibfield  {author} {\bibinfo {author} {\bibfnamefont {A.}~\bibnamefont
  {Fujiwara}}\ and\ \bibinfo {author} {\bibfnamefont {H.}~\bibnamefont
  {Imai}},\ }\bibfield  {title} {\emph {\bibinfo {title} {A fibre bundle over
  manifolds of quantum channels and its application to quantum statistics},\
  }}\href {\doibase 10.1088/1751-8113/41/25/255304} {\bibfield  {journal}
  {\bibinfo  {journal} {Journal of Physics A: Mathematical and Theoretical}\
  }\textbf {\bibinfo {volume} {41}},\ \bibinfo {pages} {255304} (\bibinfo
  {year} {2008})}\BibitemShut {NoStop}%
\bibitem [{\citenamefont {Hayashi}(2011)}]{hayashi2011comparison}%
  \BibitemOpen
  \bibfield  {author} {\bibinfo {author} {\bibfnamefont {M.}~\bibnamefont
  {Hayashi}},\ }\bibfield  {title} {\emph {\bibinfo {title} {Comparison between
  the cramer-rao and the mini-max approaches in quantum channel estimation},\
  }}\href {https://doi.org/10.1007/s00220-011-1239-4} {\bibfield  {journal}
  {\bibinfo  {journal} {Communications in Mathematical Physics}\ }\textbf
  {\bibinfo {volume} {304}},\ \bibinfo {pages} {689} (\bibinfo {year}
  {2011})}\BibitemShut {NoStop}%
\bibitem [{\citenamefont {Yuan}\ and\ \citenamefont
  {Fung}(2017{\natexlab{b}})}]{yuan2017fidelity}%
  \BibitemOpen
  \bibfield  {author} {\bibinfo {author} {\bibfnamefont {H.}~\bibnamefont
  {Yuan}}\ and\ \bibinfo {author} {\bibfnamefont {C.-H.~F.}\ \bibnamefont
  {Fung}},\ }\bibfield  {title} {\emph {\bibinfo {title} {Fidelity and fisher
  information on quantum channels},\ }}\href {\doibase
  10.1088/1367-2630/aa874c} {\bibfield  {journal} {\bibinfo  {journal} {New
  Journal of Physics}\ }\textbf {\bibinfo {volume} {19}},\ \bibinfo {pages}
  {113039} (\bibinfo {year} {2017}{\natexlab{b}})}\BibitemShut {NoStop}%
\bibitem [{\citenamefont {Katariya}\ and\ \citenamefont
  {Wilde}(2021)}]{katariya2020geometric}%
  \BibitemOpen
  \bibfield  {author} {\bibinfo {author} {\bibfnamefont {V.}~\bibnamefont
  {Katariya}}\ and\ \bibinfo {author} {\bibfnamefont {M.~M.}\ \bibnamefont
  {Wilde}},\ }\bibfield  {title} {\emph {\bibinfo {title} {Geometric
  distinguishability measures limit quantum channel estimation and
  discrimination},\ }}\href {https://doi.org/10.1007/s11128-021-02992-7}
  {\bibfield  {journal} {\bibinfo  {journal} {Quantum Information Processing}\
  }\textbf {\bibinfo {volume} {20}} (\bibinfo {year} {2021})}\BibitemShut
  {NoStop}%
\bibitem [{\citenamefont {Gour}\ and\ \citenamefont
  {Spekkens}(2008)}]{Gour_2008}%
  \BibitemOpen
  \bibfield  {author} {\bibinfo {author} {\bibfnamefont {G.}~\bibnamefont
  {Gour}}\ and\ \bibinfo {author} {\bibfnamefont {R.~W.}\ \bibnamefont
  {Spekkens}},\ }\bibfield  {title} {\emph {\bibinfo {title} {The resource
  theory of quantum reference frames: manipulations and monotones},\ }}\href
  {\doibase 10.1088/1367-2630/10/3/033023} {\bibfield  {journal} {\bibinfo
  {journal} {New Journal of Physics}\ }\textbf {\bibinfo {volume} {10}},\
  \bibinfo {pages} {033023} (\bibinfo {year} {2008})}\BibitemShut {NoStop}%
\bibitem [{\citenamefont {Marvian}\ and\ \citenamefont
  {Spekkens}(2016)}]{MarvianSpekkens14}%
  \BibitemOpen
  \bibfield  {author} {\bibinfo {author} {\bibfnamefont {I.}~\bibnamefont
  {Marvian}}\ and\ \bibinfo {author} {\bibfnamefont {R.~W.}\ \bibnamefont
  {Spekkens}},\ }\bibfield  {title} {\emph {\bibinfo {title} {How to quantify
  coherence: Distinguishing speakable and unspeakable notions},\ }}\href
  {\doibase 10.1103/PhysRevA.94.052324} {\bibfield  {journal} {\bibinfo
  {journal} {Physical Review A}\ }\textbf {\bibinfo {volume} {94}},\ \bibinfo
  {pages} {052324} (\bibinfo {year} {2016})}\BibitemShut {NoStop}%
\bibitem [{\citenamefont {Marvian}\ and\ \citenamefont
  {Spekkens}(2014)}]{marvian2014extending}%
  \BibitemOpen
  \bibfield  {author} {\bibinfo {author} {\bibfnamefont {I.}~\bibnamefont
  {Marvian}}\ and\ \bibinfo {author} {\bibfnamefont {R.~W.}\ \bibnamefont
  {Spekkens}},\ }\bibfield  {title} {\emph {\bibinfo {title} {Extending
  noether’s theorem by quantifying the asymmetry of quantum states},\ }}\href
  {https://doi.org/10.1038/ncomms4821} {\bibfield  {journal} {\bibinfo
  {journal} {Nature communications}\ }\textbf {\bibinfo {volume} {5}},\
  \bibinfo {pages} {1} (\bibinfo {year} {2014})}\BibitemShut {NoStop}%
\bibitem [{\citenamefont {Fang}\ and\ \citenamefont
  {Liu}(2020)}]{FangLiu19:nogo}%
  \BibitemOpen
  \bibfield  {author} {\bibinfo {author} {\bibfnamefont {K.}~\bibnamefont
  {Fang}}\ and\ \bibinfo {author} {\bibfnamefont {Z.-W.}\ \bibnamefont {Liu}},\
  }\bibfield  {title} {\emph {\bibinfo {title} {No-go theorems for quantum
  resource purification},\ }}\href {\doibase 10.1103/physrevlett.125.060405}
  {\bibfield  {journal} {\bibinfo  {journal} {Physical Review Letters}\
  }\textbf {\bibinfo {volume} {125}},\ \bibinfo {pages} {060405} (\bibinfo
  {year} {2020})}\BibitemShut {NoStop}%
\bibitem [{\citenamefont {Regula}\ \emph {et~al.}(2020)\citenamefont {Regula},
  \citenamefont {Bu}, \citenamefont {Takagi},\ and\ \citenamefont
  {Liu}}]{RBLT19}%
  \BibitemOpen
  \bibfield  {author} {\bibinfo {author} {\bibfnamefont {B.}~\bibnamefont
  {Regula}}, \bibinfo {author} {\bibfnamefont {K.}~\bibnamefont {Bu}}, \bibinfo
  {author} {\bibfnamefont {R.}~\bibnamefont {Takagi}}, \ and\ \bibinfo {author}
  {\bibfnamefont {Z.-W.}\ \bibnamefont {Liu}},\ }\bibfield  {title} {\emph
  {\bibinfo {title} {Benchmarking one-shot distillation in general quantum
  resource theories},\ }}\href {https://doi.org/10.1103/physreva.101.062315}
  {\bibfield  {journal} {\bibinfo  {journal} {Physical Review A}\ }\textbf
  {\bibinfo {volume} {101}} (\bibinfo {year} {2020})}\BibitemShut {NoStop}%
\bibitem [{\citenamefont {Marvian}(2020)}]{marvian2020coherence}%
  \BibitemOpen
  \bibfield  {author} {\bibinfo {author} {\bibfnamefont {I.}~\bibnamefont
  {Marvian}},\ }\bibfield  {title} {\emph {\bibinfo {title} {Coherence
  distillation machines are impossible in quantum thermodynamics},\ }}\href
  {https://doi.org/10.1038/s41467-019-13846-3} {\bibfield  {journal} {\bibinfo
  {journal} {Nature Communications}\ }\textbf {\bibinfo {volume} {11}},\
  \bibinfo {pages} {1} (\bibinfo {year} {2020})}\BibitemShut {NoStop}%
\bibitem [{\citenamefont {Schumacher}(1996)}]{schumacher1996sending}%
  \BibitemOpen
  \bibfield  {author} {\bibinfo {author} {\bibfnamefont {B.}~\bibnamefont
  {Schumacher}},\ }\bibfield  {title} {\emph {\bibinfo {title} {Sending
  entanglement through noisy quantum channels},\ }}\href {\doibase
  10.1103/PhysRevA.54.2614} {\bibfield  {journal} {\bibinfo  {journal}
  {Physical Review A}\ }\textbf {\bibinfo {volume} {54}},\ \bibinfo {pages}
  {2614} (\bibinfo {year} {1996})}\BibitemShut {NoStop}%
\bibitem [{\citenamefont {Gilchrist}\ \emph {et~al.}(2005)\citenamefont
  {Gilchrist}, \citenamefont {Langford},\ and\ \citenamefont
  {Nielsen}}]{gilchrist2005distance}%
  \BibitemOpen
  \bibfield  {author} {\bibinfo {author} {\bibfnamefont {A.}~\bibnamefont
  {Gilchrist}}, \bibinfo {author} {\bibfnamefont {N.~K.}\ \bibnamefont
  {Langford}}, \ and\ \bibinfo {author} {\bibfnamefont {M.~A.}\ \bibnamefont
  {Nielsen}},\ }\bibfield  {title} {\emph {\bibinfo {title} {Distance measures
  to compare real and ideal quantum processes},\ }}\href {\doibase
  10.1103/PhysRevA.71.062310} {\bibfield  {journal} {\bibinfo  {journal}
  {Physical Review A}\ }\textbf {\bibinfo {volume} {71}},\ \bibinfo {pages}
  {062310} (\bibinfo {year} {2005})}\BibitemShut {NoStop}%
\bibitem [{\citenamefont {Kubica}\ and\ \citenamefont
  {Demkowicz-Dobrza{\'{n}}ski}(2021)}]{kubica2020using}%
  \BibitemOpen
  \bibfield  {author} {\bibinfo {author} {\bibfnamefont {A.}~\bibnamefont
  {Kubica}}\ and\ \bibinfo {author} {\bibfnamefont {R.}~\bibnamefont
  {Demkowicz-Dobrza{\'{n}}ski}},\ }\bibfield  {title} {\emph {\bibinfo {title}
  {Using quantum metrological bounds in quantum error correction: A simple
  proof of the approximate eastin-knill theorem},\ }}\href
  {https://doi.org/10.1103/physrevlett.126.150503} {\bibfield  {journal}
  {\bibinfo  {journal} {Physical Review Letters}\ }\textbf {\bibinfo {volume}
  {126}} (\bibinfo {year} {2021})}\BibitemShut {NoStop}%
\bibitem [{\citenamefont {Kessler}\ \emph {et~al.}(2014)\citenamefont
  {Kessler}, \citenamefont {Lovchinsky}, \citenamefont {Sushkov},\ and\
  \citenamefont {Lukin}}]{kessler2014quantum}%
  \BibitemOpen
  \bibfield  {author} {\bibinfo {author} {\bibfnamefont {E.~M.}\ \bibnamefont
  {Kessler}}, \bibinfo {author} {\bibfnamefont {I.}~\bibnamefont {Lovchinsky}},
  \bibinfo {author} {\bibfnamefont {A.~O.}\ \bibnamefont {Sushkov}}, \ and\
  \bibinfo {author} {\bibfnamefont {M.~D.}\ \bibnamefont {Lukin}},\ }\bibfield
  {title} {\emph {\bibinfo {title} {Quantum error correction for metrology},\
  }}\href {\doibase 10.1103/physrevlett.112.150802} {\bibfield  {journal}
  {\bibinfo  {journal} {Physical Review Letters}\ }\textbf {\bibinfo {volume}
  {112}},\ \bibinfo {pages} {150802} (\bibinfo {year} {2014})}\BibitemShut
  {NoStop}%
\bibitem [{\citenamefont {Arrad}\ \emph {et~al.}(2014)\citenamefont {Arrad},
  \citenamefont {Vinkler}, \citenamefont {Aharonov},\ and\ \citenamefont
  {Retzker}}]{arrad2014increasing}%
  \BibitemOpen
  \bibfield  {author} {\bibinfo {author} {\bibfnamefont {G.}~\bibnamefont
  {Arrad}}, \bibinfo {author} {\bibfnamefont {Y.}~\bibnamefont {Vinkler}},
  \bibinfo {author} {\bibfnamefont {D.}~\bibnamefont {Aharonov}}, \ and\
  \bibinfo {author} {\bibfnamefont {A.}~\bibnamefont {Retzker}},\ }\bibfield
  {title} {\emph {\bibinfo {title} {Increasing sensing resolution with error
  correction},\ }}\href {\doibase 10.1103/physrevlett.112.150801} {\bibfield
  {journal} {\bibinfo  {journal} {Physical Review Letters}\ }\textbf {\bibinfo
  {volume} {112}},\ \bibinfo {pages} {150801} (\bibinfo {year}
  {2014})}\BibitemShut {NoStop}%
\bibitem [{\citenamefont {D\"ur}\ \emph {et~al.}(2014)\citenamefont {D\"ur},
  \citenamefont {Skotiniotis}, \citenamefont {Fr\"owis},\ and\ \citenamefont
  {Kraus}}]{dur2014improved}%
  \BibitemOpen
  \bibfield  {author} {\bibinfo {author} {\bibfnamefont {W.}~\bibnamefont
  {D\"ur}}, \bibinfo {author} {\bibfnamefont {M.}~\bibnamefont {Skotiniotis}},
  \bibinfo {author} {\bibfnamefont {F.}~\bibnamefont {Fr\"owis}}, \ and\
  \bibinfo {author} {\bibfnamefont {B.}~\bibnamefont {Kraus}},\ }\bibfield
  {title} {\emph {\bibinfo {title} {Improved quantum metrology using quantum
  error correction},\ }}\href {\doibase 10.1103/physrevlett.112.080801}
  {\bibfield  {journal} {\bibinfo  {journal} {Physical Review Letters}\
  }\textbf {\bibinfo {volume} {112}},\ \bibinfo {pages} {080801} (\bibinfo
  {year} {2014})}\BibitemShut {NoStop}%
\bibitem [{\citenamefont {Lu}\ \emph {et~al.}(2015)\citenamefont {Lu},
  \citenamefont {Yu},\ and\ \citenamefont {Oh}}]{lu2015robust}%
  \BibitemOpen
  \bibfield  {author} {\bibinfo {author} {\bibfnamefont {X.-M.}\ \bibnamefont
  {Lu}}, \bibinfo {author} {\bibfnamefont {S.}~\bibnamefont {Yu}}, \ and\
  \bibinfo {author} {\bibfnamefont {C.}~\bibnamefont {Oh}},\ }\bibfield
  {title} {\emph {\bibinfo {title} {Robust quantum metrological schemes based
  on protection of quantum fisher information},\ }}\href
  {https://doi.org/10.1038/ncomms8282} {\bibfield  {journal} {\bibinfo
  {journal} {Nature Communications}\ }\textbf {\bibinfo {volume} {6}},\
  \bibinfo {pages} {7282} (\bibinfo {year} {2015})}\BibitemShut {NoStop}%
\bibitem [{\citenamefont {Reiter}\ \emph {et~al.}(2017)\citenamefont {Reiter},
  \citenamefont {S{\o}rensen}, \citenamefont {Zoller},\ and\ \citenamefont
  {Muschik}}]{reiter2017dissipative}%
  \BibitemOpen
  \bibfield  {author} {\bibinfo {author} {\bibfnamefont {F.}~\bibnamefont
  {Reiter}}, \bibinfo {author} {\bibfnamefont {A.~S.}\ \bibnamefont
  {S{\o}rensen}}, \bibinfo {author} {\bibfnamefont {P.}~\bibnamefont {Zoller}},
  \ and\ \bibinfo {author} {\bibfnamefont {C.}~\bibnamefont {Muschik}},\
  }\bibfield  {title} {\emph {\bibinfo {title} {Dissipative quantum error
  correction and application to quantum sensing with trapped ions},\ }}\href
  {https://doi.org/10.1038/s41467-017-01895-5} {\bibfield  {journal} {\bibinfo
  {journal} {Nature Communications}\ }\textbf {\bibinfo {volume} {8}},\
  \bibinfo {pages} {1822} (\bibinfo {year} {2017})}\BibitemShut {NoStop}%
\bibitem [{\citenamefont {Sekatski}\ \emph {et~al.}(2017)\citenamefont
  {Sekatski}, \citenamefont {Skotiniotis}, \citenamefont
  {Ko{\l{}}ody{\'{n}}ski},\ and\ \citenamefont
  {D{\"{u}}r}}]{sekatski2017quantum}%
  \BibitemOpen
  \bibfield  {author} {\bibinfo {author} {\bibfnamefont {P.}~\bibnamefont
  {Sekatski}}, \bibinfo {author} {\bibfnamefont {M.}~\bibnamefont
  {Skotiniotis}}, \bibinfo {author} {\bibfnamefont {J.}~\bibnamefont
  {Ko{\l{}}ody{\'{n}}ski}}, \ and\ \bibinfo {author} {\bibfnamefont
  {W.}~\bibnamefont {D{\"{u}}r}},\ }\bibfield  {title} {\emph {\bibinfo {title}
  {Quantum metrology with full and fast quantum control},\ }}\href {\doibase
  10.22331/q-2017-09-06-27} {\bibfield  {journal} {\bibinfo  {journal}
  {{Quantum}}\ }\textbf {\bibinfo {volume} {1}},\ \bibinfo {pages} {27}
  (\bibinfo {year} {2017})}\BibitemShut {NoStop}%
\bibitem [{\citenamefont {Kapourniotis}\ and\ \citenamefont
  {Datta}(2019)}]{kapourniotis2019fault}%
  \BibitemOpen
  \bibfield  {author} {\bibinfo {author} {\bibfnamefont {T.}~\bibnamefont
  {Kapourniotis}}\ and\ \bibinfo {author} {\bibfnamefont {A.}~\bibnamefont
  {Datta}},\ }\bibfield  {title} {\emph {\bibinfo {title} {Fault-tolerant
  quantum metrology},\ }}\href {\doibase 10.1103/PhysRevA.100.022335}
  {\bibfield  {journal} {\bibinfo  {journal} {Physical Review A}\ }\textbf
  {\bibinfo {volume} {100}},\ \bibinfo {pages} {022335} (\bibinfo {year}
  {2019})}\BibitemShut {NoStop}%
\bibitem [{\citenamefont {Layden}\ and\ \citenamefont
  {Cappellaro}(2018)}]{layden2018spatial}%
  \BibitemOpen
  \bibfield  {author} {\bibinfo {author} {\bibfnamefont {D.}~\bibnamefont
  {Layden}}\ and\ \bibinfo {author} {\bibfnamefont {P.}~\bibnamefont
  {Cappellaro}},\ }\bibfield  {title} {\emph {\bibinfo {title} {Spatial noise
  filtering through error correction for quantum sensing},\ }}\href
  {https://doi.org/10.1038/s41534-018-0082-2} {\bibfield  {journal} {\bibinfo
  {journal} {npj Quantum Information}\ }\textbf {\bibinfo {volume} {4}},\
  \bibinfo {pages} {30} (\bibinfo {year} {2018})}\BibitemShut {NoStop}%
\bibitem [{\citenamefont {Layden}\ \emph {et~al.}(2019)\citenamefont {Layden},
  \citenamefont {Zhou}, \citenamefont {Cappellaro},\ and\ \citenamefont
  {Jiang}}]{layden2019ancilla}%
  \BibitemOpen
  \bibfield  {author} {\bibinfo {author} {\bibfnamefont {D.}~\bibnamefont
  {Layden}}, \bibinfo {author} {\bibfnamefont {S.}~\bibnamefont {Zhou}},
  \bibinfo {author} {\bibfnamefont {P.}~\bibnamefont {Cappellaro}}, \ and\
  \bibinfo {author} {\bibfnamefont {L.}~\bibnamefont {Jiang}},\ }\bibfield
  {title} {\emph {\bibinfo {title} {Ancilla-free quantum error correction codes
  for quantum metrology},\ }}\href {\doibase 10.1103/physrevlett.122.040502}
  {\bibfield  {journal} {\bibinfo  {journal} {Physical Review Letters}\
  }\textbf {\bibinfo {volume} {122}},\ \bibinfo {pages} {040502} (\bibinfo
  {year} {2019})}\BibitemShut {NoStop}%
\bibitem [{\citenamefont {Zhou}\ and\ \citenamefont
  {Jiang}(2020)}]{zhou2019optimal}%
  \BibitemOpen
  \bibfield  {author} {\bibinfo {author} {\bibfnamefont {S.}~\bibnamefont
  {Zhou}}\ and\ \bibinfo {author} {\bibfnamefont {L.}~\bibnamefont {Jiang}},\
  }\bibfield  {title} {\emph {\bibinfo {title} {Optimal approximate quantum
  error correction for quantum metrology},\ }}\href {\doibase
  10.1103/PhysRevResearch.2.013235} {\bibfield  {journal} {\bibinfo  {journal}
  {Physical Review Research}\ }\textbf {\bibinfo {volume} {2}},\ \bibinfo
  {pages} {013235} (\bibinfo {year} {2020})}\BibitemShut {NoStop}%
\bibitem [{\citenamefont {Helstrom}(1976)}]{helstrom1976quantum}%
  \BibitemOpen
  \bibfield  {author} {\bibinfo {author} {\bibfnamefont {C.~W.}\ \bibnamefont
  {Helstrom}},\ }\href@noop {} {\emph {\bibinfo {title} {Quantum detection and
  estimation theory}}}\ (\bibinfo  {publisher} {Academic press},\ \bibinfo
  {year} {1976})\BibitemShut {NoStop}%
\bibitem [{\citenamefont {Holevo}(2011)}]{holevo2011probabilistic}%
  \BibitemOpen
  \bibfield  {author} {\bibinfo {author} {\bibfnamefont {A.~S.}\ \bibnamefont
  {Holevo}},\ }\href@noop {} {\emph {\bibinfo {title} {Probabilistic and
  statistical aspects of quantum theory}}},\ Vol.~\bibinfo {volume} {1}\
  (\bibinfo  {publisher} {Springer Science \& Business Media},\ \bibinfo {year}
  {2011})\BibitemShut {NoStop}%
\bibitem [{\citenamefont {Paris}(2009)}]{paris2009quantum}%
  \BibitemOpen
  \bibfield  {author} {\bibinfo {author} {\bibfnamefont {M.~G.}\ \bibnamefont
  {Paris}},\ }\bibfield  {title} {\emph {\bibinfo {title} {Quantum estimation
  for quantum technology},\ }}\href {https://doi.org/10.1142/S0219749909004839}
  {\bibfield  {journal} {\bibinfo  {journal} {International Journal of Quantum
  Information}\ }\textbf {\bibinfo {volume} {7}},\ \bibinfo {pages} {125}
  (\bibinfo {year} {2009})}\BibitemShut {NoStop}%
\bibitem [{\citenamefont {Braunstein}\ and\ \citenamefont
  {Caves}(1994)}]{braunstein1994statistical}%
  \BibitemOpen
  \bibfield  {author} {\bibinfo {author} {\bibfnamefont {S.~L.}\ \bibnamefont
  {Braunstein}}\ and\ \bibinfo {author} {\bibfnamefont {C.~M.}\ \bibnamefont
  {Caves}},\ }\bibfield  {title} {\emph {\bibinfo {title} {Statistical distance
  and the geometry of quantum states},\ }}\href {\doibase
  10.1103/physrevlett.72.3439} {\bibfield  {journal} {\bibinfo  {journal}
  {Physical Review Letters}\ }\textbf {\bibinfo {volume} {72}},\ \bibinfo
  {pages} {3439} (\bibinfo {year} {1994})}\BibitemShut {NoStop}%
\bibitem [{\citenamefont {{Yuen}}\ and\ \citenamefont
  {{Lax}}(1973)}]{Yuen1973multiple}%
  \BibitemOpen
  \bibfield  {author} {\bibinfo {author} {\bibfnamefont {H.}~\bibnamefont
  {{Yuen}}}\ and\ \bibinfo {author} {\bibfnamefont {M.}~\bibnamefont {{Lax}}},\
  }\bibfield  {title} {\emph {\bibinfo {title} {Multiple-parameter quantum
  estimation and measurement of nonselfadjoint observables},\ }}\href
  {https://doi.org/10.1109/tit.1973.1055103} {\bibfield  {journal} {\bibinfo
  {journal} {IEEE Transactions on Information Theory}\ }\textbf {\bibinfo
  {volume} {19}},\ \bibinfo {pages} {740} (\bibinfo {year} {1973})}\BibitemShut
  {NoStop}%
\bibitem [{\citenamefont {Giovannetti}\ \emph {et~al.}(2006)\citenamefont
  {Giovannetti}, \citenamefont {Lloyd},\ and\ \citenamefont
  {Maccone}}]{giovannetti2006quantum}%
  \BibitemOpen
  \bibfield  {author} {\bibinfo {author} {\bibfnamefont {V.}~\bibnamefont
  {Giovannetti}}, \bibinfo {author} {\bibfnamefont {S.}~\bibnamefont {Lloyd}},
  \ and\ \bibinfo {author} {\bibfnamefont {L.}~\bibnamefont {Maccone}},\
  }\bibfield  {title} {\emph {\bibinfo {title} {Quantum metrology},\ }}\href
  {\doibase 10.1103/physrevlett.96.010401} {\bibfield  {journal} {\bibinfo
  {journal} {Physical Review Letters}\ }\textbf {\bibinfo {volume} {96}},\
  \bibinfo {pages} {010401} (\bibinfo {year} {2006})}\BibitemShut {NoStop}%
\bibitem [{\citenamefont {Ko{\l}ody{\'{n}}ski}\ and\ \citenamefont
  {Demkowicz-Dobrza{\'{n}}ski}(2013)}]{kolodynski2013efficient}%
  \BibitemOpen
  \bibfield  {author} {\bibinfo {author} {\bibfnamefont {J.}~\bibnamefont
  {Ko{\l}ody{\'{n}}ski}}\ and\ \bibinfo {author} {\bibfnamefont
  {R.}~\bibnamefont {Demkowicz-Dobrza{\'{n}}ski}},\ }\bibfield  {title} {\emph
  {\bibinfo {title} {Efficient tools for quantum metrology with uncorrelated
  noise},\ }}\href {\doibase 10.1088/1367-2630/15/7/073043} {\bibfield
  {journal} {\bibinfo  {journal} {New Journal of Physics}\ }\textbf {\bibinfo
  {volume} {15}},\ \bibinfo {pages} {073043} (\bibinfo {year}
  {2013})}\BibitemShut {NoStop}%
\bibitem [{\citenamefont {Streltsov}\ \emph {et~al.}(2017)\citenamefont
  {Streltsov}, \citenamefont {Adesso},\ and\ \citenamefont
  {Plenio}}]{CoherenceRMP}%
  \BibitemOpen
  \bibfield  {author} {\bibinfo {author} {\bibfnamefont {A.}~\bibnamefont
  {Streltsov}}, \bibinfo {author} {\bibfnamefont {G.}~\bibnamefont {Adesso}}, \
  and\ \bibinfo {author} {\bibfnamefont {M.~B.}\ \bibnamefont {Plenio}},\
  }\bibfield  {title} {\emph {\bibinfo {title} {Colloquium: Quantum coherence
  as a resource},\ }}\href {\doibase 10.1103/RevModPhys.89.041003} {\bibfield
  {journal} {\bibinfo  {journal} {Reviews of Modern Physics}\ }\textbf
  {\bibinfo {volume} {89}},\ \bibinfo {pages} {041003} (\bibinfo {year}
  {2017})}\BibitemShut {NoStop}%
\bibitem [{\citenamefont {Fang}\ and\ \citenamefont {Liu}(2021)}]{FL20}%
  \BibitemOpen
  \bibfield  {author} {\bibinfo {author} {\bibfnamefont {K.}~\bibnamefont
  {Fang}}\ and\ \bibinfo {author} {\bibfnamefont {Z.-W.}\ \bibnamefont {Liu}},\
  }\href@noop {} {\bibinfo {title} {No-go theorems for quantum resource
  purification: new approach and channel theory},\ } (\bibinfo {year} {2021}),\
  \Eprint {http://arxiv.org/abs/2010.11822} {arXiv:2010.11822 [quant-ph]}
  \BibitemShut {NoStop}%
\bibitem [{\citenamefont {Regula}\ and\ \citenamefont
  {Takagi}(2020)}]{regula2020oneshot}%
  \BibitemOpen
  \bibfield  {author} {\bibinfo {author} {\bibfnamefont {B.}~\bibnamefont
  {Regula}}\ and\ \bibinfo {author} {\bibfnamefont {R.}~\bibnamefont
  {Takagi}},\ }\href@noop {} {\bibinfo {title} {One-shot manipulation of
  dynamical quantum resources},\ } (\bibinfo {year} {2020}),\ \Eprint
  {http://arxiv.org/abs/2012.02215} {arXiv:2012.02215 [quant-ph]} \BibitemShut
  {NoStop}%
\bibitem [{\citenamefont {Ouyang}\ \emph {et~al.}(2019)\citenamefont {Ouyang},
  \citenamefont {Shettell},\ and\ \citenamefont {Markham}}]{ouyang2019robust}%
  \BibitemOpen
  \bibfield  {author} {\bibinfo {author} {\bibfnamefont {Y.}~\bibnamefont
  {Ouyang}}, \bibinfo {author} {\bibfnamefont {N.}~\bibnamefont {Shettell}}, \
  and\ \bibinfo {author} {\bibfnamefont {D.}~\bibnamefont {Markham}},\
  }\href@noop {} {\bibinfo {title} {Robust quantum metrology with explicit
  symmetric states},\ } (\bibinfo {year} {2019}),\ \Eprint
  {http://arxiv.org/abs/1908.02378} {arXiv:1908.02378 [quant-ph]} \BibitemShut
  {NoStop}%
\bibitem [{\citenamefont {Gottesman}(2016)}]{gottesman2016quantum}%
  \BibitemOpen
  \bibfield  {author} {\bibinfo {author} {\bibfnamefont {D.}~\bibnamefont
  {Gottesman}},\ }\href@noop {} {\bibinfo {title} {Quantum fault tolerance in
  small experiments},\ } (\bibinfo {year} {2016}),\ \Eprint
  {http://arxiv.org/abs/1610.03507} {arXiv:1610.03507 [quant-ph]} \BibitemShut
  {NoStop}%
\bibitem [{\citenamefont {Liu}\ and\ \citenamefont
  {Winter}(2019)}]{LiuWinter19}%
  \BibitemOpen
  \bibfield  {author} {\bibinfo {author} {\bibfnamefont {Z.-W.}\ \bibnamefont
  {Liu}}\ and\ \bibinfo {author} {\bibfnamefont {A.}~\bibnamefont {Winter}},\
  }\href@noop {} {\bibinfo {title} {Resource theories of quantum channels and
  the universal role of resource erasure},\ } (\bibinfo {year} {2019}),\
  \Eprint {http://arxiv.org/abs/1904.04201} {arXiv:1904.04201 [quant-ph]}
  \BibitemShut {NoStop}%
\bibitem [{\citenamefont {Liu}\ and\ \citenamefont {Yuan}(2020)}]{LiuYuan19}%
  \BibitemOpen
  \bibfield  {author} {\bibinfo {author} {\bibfnamefont {Y.}~\bibnamefont
  {Liu}}\ and\ \bibinfo {author} {\bibfnamefont {X.}~\bibnamefont {Yuan}},\
  }\bibfield  {title} {\emph {\bibinfo {title} {Operational resource theory of
  quantum channels},\ }}\href {\doibase 10.1103/PhysRevResearch.2.012035}
  {\bibfield  {journal} {\bibinfo  {journal} {Physical Review Research}\
  }\textbf {\bibinfo {volume} {2}},\ \bibinfo {pages} {012035} (\bibinfo {year}
  {2020})}\BibitemShut {NoStop}%
\bibitem [{\citenamefont {Yang}\ \emph {et~al.}(2020)\citenamefont {Yang},
  \citenamefont {Mo}, \citenamefont {Renes}, \citenamefont {Chiribella},\ and\
  \citenamefont {Woods}}]{yang2020covariant}%
  \BibitemOpen
  \bibfield  {author} {\bibinfo {author} {\bibfnamefont {Y.}~\bibnamefont
  {Yang}}, \bibinfo {author} {\bibfnamefont {Y.}~\bibnamefont {Mo}}, \bibinfo
  {author} {\bibfnamefont {J.~M.}\ \bibnamefont {Renes}}, \bibinfo {author}
  {\bibfnamefont {G.}~\bibnamefont {Chiribella}}, \ and\ \bibinfo {author}
  {\bibfnamefont {M.~P.}\ \bibnamefont {Woods}},\ }\href@noop {} {\bibinfo
  {title} {Covariant quantum error correcting codes via reference frames},\ }
  (\bibinfo {year} {2020}),\ \Eprint {http://arxiv.org/abs/2007.09154}
  {arXiv:2007.09154 [quant-ph]} \BibitemShut {NoStop}%
\bibitem [{\citenamefont {Komiya}(1988)}]{komiya1988elementary}%
  \BibitemOpen
  \bibfield  {author} {\bibinfo {author} {\bibfnamefont {H.}~\bibnamefont
  {Komiya}},\ }\bibfield  {title} {\emph {\bibinfo {title} {Elementary proof
  for sion's minimax theorem},\ }}\href {\doibase 10.2996/kmj/1138038812}
  {\bibfield  {journal} {\bibinfo  {journal} {Kodai Mathematical Journal}\
  }\textbf {\bibinfo {volume} {11}},\ \bibinfo {pages} {5} (\bibinfo {year}
  {1988})}\BibitemShut {NoStop}%
\bibitem [{\citenamefont {do~Ros{\'a}rio~Grossinho}\ and\ \citenamefont
  {Tersian}(2001)}]{do2001introduction}%
  \BibitemOpen
  \bibfield  {author} {\bibinfo {author} {\bibfnamefont {M.}~\bibnamefont
  {do~Ros{\'a}rio~Grossinho}}\ and\ \bibinfo {author} {\bibfnamefont {S.~A.}\
  \bibnamefont {Tersian}},\ }\href@noop {} {\emph {\bibinfo {title} {An
  introduction to minimax theorems and their applications to differential
  equations}}},\ Vol.~\bibinfo {volume} {52}\ (\bibinfo  {publisher} {Springer
  Science \& Business Media},\ \bibinfo {year} {2001})\BibitemShut {NoStop}%
\bibitem [{\citenamefont {Del~Moral}\ and\ \citenamefont
  {Niclas}(2018)}]{del2018taylor}%
  \BibitemOpen
  \bibfield  {author} {\bibinfo {author} {\bibfnamefont {P.}~\bibnamefont
  {Del~Moral}}\ and\ \bibinfo {author} {\bibfnamefont {A.}~\bibnamefont
  {Niclas}},\ }\bibfield  {title} {\emph {\bibinfo {title} {A taylor expansion
  of the square root matrix function},\ }}\href
  {https://doi.org/10.1016/j.jmaa.2018.05.005} {\bibfield  {journal} {\bibinfo
  {journal} {Journal of Mathematical Analysis and Applications}\ }\textbf
  {\bibinfo {volume} {465}},\ \bibinfo {pages} {259} (\bibinfo {year}
  {2018})}\BibitemShut {NoStop}%
\end{thebibliography}%

\end{document}